\newtheorem{theorem}{Theorem}
\DeclareMathOperator{\EX}{\mathbb{E}}
\DeclareMathOperator{\Var}{Var}
\newcommand{\ad}{\mathrm{ad}}
\newcommand{\diag}{\operatorname{diag}}
\newcommand{\tA}{\tilde{A}}
\newcommand{\tB}{\tilde{B}}
\newcommand{\tV}{\tilde{V}}
\newcommand{\btheta}{\boldsymbol{\theta}}
\begin{document}
\title
{Gradients, parallelism, and variance of quantum estimates}
\author{Francesco Preti\orcidlink{0000-0002-0343-9049}}\email{f.preti@fz-juelich.de}
\affiliation{J\"ulich Supercomputing Center, Helmholtz AI}
\affiliation{Forschungszentrum J\"ulich, Institute of Quantum Control (PGI-8), D-52425 J\"ulich, Germany}
\author{Michael Schilling\orcidlink{0009-0006-2875-5909}}
\affiliation{Forschungszentrum J\"ulich, Institute of Quantum Control (PGI-8), D-52425 J\"ulich, Germany}
\affiliation{Institute for Theoretical Physics, University of Cologne, Zülpicher Straße 77, 50937 Cologne, Germany}
\author{J\'ozsef Zsolt Bern\'ad \orcidlink{0000-0002-2043-3423}}
\affiliation{Forschungszentrum J\"ulich, Institute of Quantum Control (PGI-8), D-52425 J\"ulich, Germany}
\affiliation{HUN-REN Wigner Research Centre for Physics, Budapest, Hungary}
\author{Tommaso~Calarco\orcidlink{0000-0001-5364-7316}}
\affiliation{Forschungszentrum J\"ulich, Institute of Quantum Control (PGI-8), D-52425 J\"ulich, Germany}
\affiliation{Institute for Theoretical Physics, University of Cologne, Zülpicher Straße 77, 50937 Cologne, Germany}
\affiliation{Dipartimento di Fisica e Astronomia, Università di Bologna, 40127 Bologna, Italy}
\author{F. A. C\'ardenas-L\'opez\orcidlink{0000-0002-2916-2826}}
\affiliation{Forschungszentrum J\"ulich, Institute of Quantum Control (PGI-8), D-52425 J\"ulich, Germany}
\author{Felix Motzoi\orcidlink{0000-0003-4756-5976}}
\affiliation{Forschungszentrum J\"ulich, Institute of Quantum Control (PGI-8), D-52425 J\"ulich, Germany}
\affiliation{Institute for Theoretical Physics, University of Cologne, Zülpicher Straße 77, 50937 Cologne, Germany}
\date{\today}

\begin{abstract}
\hspace{-0.4cm}
Computation of observables and their gradients on near-term quantum hardware is a central aspect of any quantum algorithm. In this work, we first review standard approaches to the estimation of observables with and without quantum amplitude estimation for both cost functions and gradients, discuss sampling problems, and analyze variance propagation on quantum circuits with and without Linear Combination of Unitaries (LCU). Afterwards, we systematically analyze the standard approaches to gradient computation with LCU circuits. Finally, we develop a LCU gradient framework for the most general gradients based on $n$-qubit gates and for time-dependent quantum control gradient, analyze the convergence behaviour of the circuit estimators, and provide detailed circuit representations of both for near-term and fault-tolerant hardware.
\end{abstract}

\maketitle

\section*{Introduction}
The fast developing field of quantum computation requires careful analysis of the sampling complexity of quantum algorithms \cite{Chakraborty2024implementingany}. NISQ quantum algorithms \cite{Bharti_2022}, in particular, can be used to encode specific optimization problems \cite{Kochenberger2014} that depend on classical parameters . In this context, the estimation of fast, reliable gradients of the output of quantum algorithms with respect to classical parameters has been studied extensively \cite{Wierichs2022, bowles2024backpropagationscalingparameterisedquantum, abbas2023quantumbackpropagationinformationreuse}. Quantum algorithms have a vast range of applications.

A quantum algorithm is usually implemented as a family of one or multiple quantum circuits \cite{NielsenChuang2010}, which in turn represent physical experiments on one or more of the available quantum computing platforms, such as superconducting quantum circuits \cite{Krantz2019, Blais2021}, trapped-ions \cite{Haffner2008} or Rydberg atoms \cite{Saffman2010}. In these models, a quantum state is first prepared, evolves under the action of unitary operations and is then measured. Qubits can be measured between unitary operations \cite{Decross2022} (and subsequently reset if needed), so that more complex maps involving mixed states can also be implemented in quantum algorithms. By executing a quantum algorithm multiple times, we can collect data about the possible different outcomes. For example, in the case of variational quantum algorithms \cite{farhi2014quantum, Cerezo2021review, peruzzo2014variational}, the statistics of the measurement process is used to estimate a cost function, which is then optimized with respect to variational parameters using classical optimization methods.

One of the central aspects is the scaling of the number of measurements needed to estimate key circuit observables with precision $\epsilon$. Usually, mean values of arbitrary observables are estimated by sampling from multiple quantum circuits, each one representing an element of an operator basis -- e.g., the Pauli basis. The mean values of the single elements of the operator basis can be evaluated using multiple copies of the same circuit. In the most straightforward implementation the scaling is linear in the number of copies $L$, i.e., $O(L/\epsilon^2)$. 

In shadow tomography models \cite{Huang2020shadow,Huang2021} the scaling can be dramatically improved to reach $O(\log(L)/\epsilon^2)$, whereas using amplitude amplification the scaling becomes sub-linear, at the cost of having to implement the amplitude amplification and the Jordan algorithm routine \cite{Huggins2022, Wada_2025} $O(\sqrt{L}/\epsilon)$. Once the mean values have been estimated, they are summed together with appropriate coefficients. This further increases the variance linearly in the number of terms \cite{Babbush2019}. 

In this paper, we analyze some specific estimators of quantum cost functions. Most of the proposals in NISQ circuits assume the use of a linear combination of measurements for the estimation of observables \cite{Wecker2015, Babbush2019, Rubin2018, consiglio2025variationalquantumalgorithmsmanybody}, which we refer to as Standard Estimator (SE). Such estimator is also used in cases in which mixtures of classical and quantum expectation values need to be computed, using, e.g., quasi-Monte Carlo approaches. For example. QML requires the computation of averages over data sets \cite{Biamonte2017, Jerbi2023, Schuld2019, Schuld2020, schatzki2021entangleddatasetsquantummachine, Preti2024stat}. Quantum control and optimization, on the other hand, for example in the context of so-called robust \cite{schirmer2020robust} or adaptive control/meta-optimization \cite{Preti2022soma, Cervera-Lierta2021}, require to compute averages of cost functions over a certain parameter space \cite{Oshnik2022, Dalgaard2022dup, Preti2022soma, Cervera-Lierta2021, Preti2022purif}, in order to obtain control pulses that are less sensitive to parameter variations. Another example of estimators that use both classical and quantum sampling are (stochastic) parameter-shift rules \cite{LiJun2017, Wierichs2022}, which are used to evaluate gradients of quantum cost functions sampled using variational quantum circuits. Finally, applications of quantum algorithms such as quantum computational fluid dynamics (QCFD) require the (quantum or classical) summation of several estimates from quantum circuits to encode, e.g., the dynamics of relevant partial differential equations on quantum hardware \cite{Jaksch2023}.

Estimation of quantum cost functions can be also performed by implementing linear combinations of unitary operations (LCU) \cite{Somma2002, Childs2012, Childs2017, Cerezo2021review, chowdhury2016quantumalgorithmsgibbssampling, holmes2023nonlinear} on quantum hardware. The question is whether this implementation can be beneficial in specific contexts. In this work, we compare LCU-based estimators to the standard estimator for quantum observables, which uses a different circuit for each non-zero basis element of the observable, and determine the conditions in which the implementation of the former is detrimental or beneficial, i.e., where it provides us with a speed-up over the classical counterpart, limited to when combined with amplitude estimation. We show that the LCU estimator allows for a $\sqrt{L}$ speedup over the standard estimator even for near-term amplitude estimation algorithms, which is in accordance with \cite{Huggins2022, Wada2025}. We explore analytical derivations that confirm partial results and extend their validity to different types of sampling problems. 

Furthermore, we analyze the problem of estimating gradients of quantum cost functions, which has been considered for both variational NISQ circuits and control circuits \cite{Schuld2020, Schuld2019, Wierichs2022, Wiersema2024herecomessun, LiJun2017, Banchi2021measuringanalytic, crooks2019gradientsparameterizedquantumgates, Izmaylov2021, Kyriienko2021, Mitarai2018, li2024efficientquantumgradienthigherorder}, as well as more advanced fault-tolerant algorithms \cite{Gilyen2016}. More specifically, we extend the LCU framework to gradients of multi-qubit gates \cite{Wiersema2024herecomessun} and quantum control problems \cite{Leng2024, kottmann2023evaluating}. This broader LCU framework enables us to differentiate a vast class of quantum cost functions for variational and control circuits. 

The paper is structured as follows. In Section \ref{sec:prob_statement} we discuss the basics of observable and gradient estimation on quantum circuits. Afterwards, in Section \ref{sec:sampling_lcu} we introduce basic estimators of quantum cost functions that make or do not make use of LCU methods and discuss their properties in accordance to the current literature. In Section \eqref{sec:qs} we introduce and outline the properties of amplitude estimation methods and show how their use affect the scaling of previously introduced estimators. We also show how different LCU methods can benefit differently from amplitude estimation compared to the standard LCU procedure. As a paradigmatic example, we also test the estimators on a typical QML regression task. In Section \ref{sec:grad_cost_functions} we introduce the topic of gradient estimation for quantum circuits and review some of the basic methods thereof. We then extend LCU gradient circuits to multi-qubit, multi-parametric quantum gates and quantum control gradients and discuss their relevant scaling. We focus in particular on studying the convergence behaviour of LCU gradients of multi-parametric gates for specific cost functions, such as those discussed in Ref.~\cite{Khatri2019}.

\begin{figure*}[ht!]
\includegraphics[width=\textwidth]{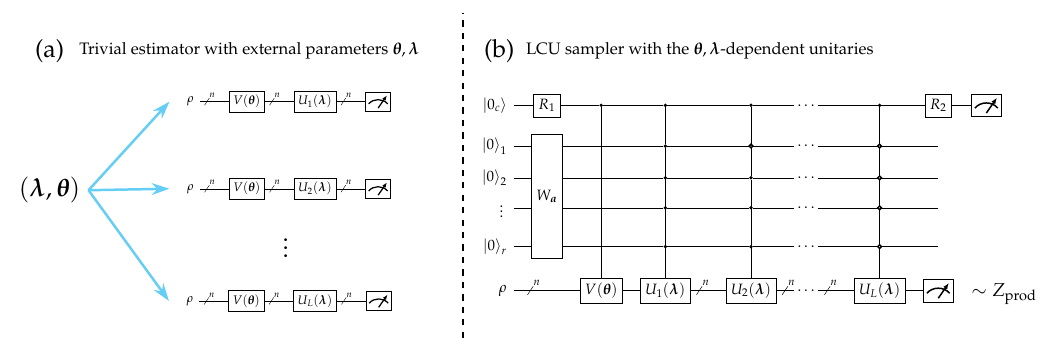}
    \caption{A representation of the two different approaches to observable sampling that are typical of variational quantum circuits: (a) summarizes the Standard Estimator (SE), which prepares $L$ circuits with the same input density matrix and an arbitrary unitary operator $V(\btheta)$. The unitaries $U_1, ..., U_L$ (which are controlled by a parameter vector $\boldsymbol{\lambda}$) prepare, e.g., the different elements of an observable basis or a collection of $L$ non-commuting operators. (b) summarizes the LCU sampler/estimator, which performs the same kind of estimation, but renormalized between, e.g., $I=(1,-1)$. The coefficients of the linear combination of $L$ estimates are computed using classical methods (a) or loaded in the LCU register with $r=\lceil \log(L) \rceil$ qubits using the operator $W_a$ that prepares the state $\ket{a}$ -- see Eqs.~\eqref{eq:W_a} and \eqref{eq:ket_a} -- using a suitable algorithm for state preparation \cite{Iten_2016, da_Silva_2022}. The unitary operations $R_1$ and $R_2$ control the type of cost function to estimate: $R_1 = H$, $R_2 = X$ estimates a cost function as in Eq.~\eqref{eq:h_v_sigma_z}, whereas $R_1 = H$ and $R_2 = H$ estimates a cost function as in Eq.~\eqref{eq:potq_cost} -- see also Ref.~\cite{Somma2002}.} 
    \label{fig:SE_vs_lcu} 
\end{figure*}

\section{Problem statement}
\label{sec:prob_statement}
Sampling from one or multiple quantum circuits 
involves computing linear combinations of binary counts corresponding to different outputs. An example is given by QUBO problems \cite{Kochenberger2014}, where a quantity, which is in a quadratic form with binary arguments, needs to be sampled from various quantum systems. In the case of variational quantum eigensolvers \cite{peruzzo2014variational}, the goal is to minimize the energy of a Hamiltonian given a certain input state. The $n$-qubit observable as a whole is usually not available directly but it can be represented as a linear combination of Hermitian matrices  $P_i$, e.g., Pauli strings, which can potentially be measured using quantum circuits. We consider the observable:
\begin{align}\label{eq:H}
    \mathcal{O} = \sum_{i=1}^L a_i P_i, \quad a_i \in \mathbb{R},
\end{align}
with $L < d^2 = 4^n$.
We limit ourselves w.l.o.g.~ to the case in which $\mathcal{O}$ can be decomposed by considering only one element of the generalized Pauli group, whereby the expression above becomes:
\begin{align}
    \mathcal{O} = \sum_{i=1}^L a_i U_i Z_{\text{prod}} U_i^{\dagger},
\end{align}
where $Z_{\text{prod}} = \underset{i=1}{\overset{n}{\bigotimes}} \sigma_z^{(i)}$ is the $n$-qubit $\sigma_z$ operator and $U_i, i=1,...,L$ are appropriate unitary matrices -- e.g., they map from $Z_{\text{prod}}$ to other elements of the Pauli basis. The choice of $Z_{\text{prod}}$ is arbitrary: another possibility is to map the operator to a single-qubit $\sigma_z$ operator $Z_{\text{prod}}^{(i)} = \mathbb{I} \otimes ... \otimes \sigma_z^{(i)} \otimes \mathbb{I}$ via CNOT operations, but any generalized Pauli operator can be used in principle, as matrices mapping generalized Pauli operator to each other can all be generated using $\text{CNOT}$, Hadamard and Phase gates \cite{Barenco1995, Childs2012}.
The mean value of the observable $\mathcal{O}$ is computed with respect to a density matrix $\rho$, such that for the expected value of $\mathcal{O}$ we can write
\begin{align}\label{eq:h_v_sigma_z}
    \langle \mathcal{O} \rangle = \tr{\rho \mathcal{O}} = \sum_{i=1}^L a_i \tr{\rho U_i Z_{\text{prod}} U_i^{\dagger} }.
\end{align}
Using this representation, we can implement the unitaries $U_i, i=1,...,L$ on different quantum circuits and then measure the register of qubits in the computational basis. We assume that the circuit input state $\rho$ undergoes a parametric evolution generated by a variational unitary. As a result, the expression:
\begin{align}\label{eq:SEst}
\langle \mathcal{O}(\btheta) \rangle = \sum_{i=1}^L a_i \tr{V(\btheta)\rho V^{\dagger}(\btheta) U_i Z_{\text{prod}} U_i^{\dagger} },
\end{align}
encodes an energy minimization problem in up to $L$ different quantum circuits using $N$ real parameters, i.e., $\boldsymbol \theta \in \mathbb{R}^{N}$ and a $n$-qubit variational quantum circuit $U(\btheta) \in \text{V}(d)$, $d=2^n$.
Let us first assume that any two different Pauli strings considered in Eq.~\eqref{eq:H} commute. If this is the case, they can be estimated within the same circuit run, which significantly reduces the amount of measurements needed -- if all $L$ of them commute, estimating their mean values scales as in $O(\lceil \log(L) \rceil /\epsilon^2)$ \cite{Huggins2022}. If they do not commute, up to $L$ circuits need to be executed. Moreover, cost functions for variational circuits are also averaged over additional external parameters, where relevant parameters $\boldsymbol{\lambda} \sim P$ are sampled from a probability distribution $P$:
\begin{align}\label{eq:c_lambda_sampling}
    C(\btheta) = \EX_{\boldsymbol{\lambda \sim P}} \left[ \langle \mathcal{O}(\btheta, \boldsymbol{\lambda}) \rangle \right].
\end{align}
Therefore, there are two types of parameters: \textit{meta-parameters}, denoted by $\boldsymbol{\lambda}$, which are sampled and averaged over -- an example of this is given by Monte-Carlo sampling, where we want to average a value over a data set of parameters -- and \textit{variational parameters}, denoted by $\btheta$ which are generally used for numerical optimization in the context of variational algorithms.

Sampling using Eq.~\eqref{eq:SEst} is not the only option to evaluate the mean value of the observable. 
We can construct an estimator for $\langle \mathcal{O}(\btheta) \rangle$ by first constructing estimators for $L$ different circuits. 
A different estimator can be constructed based on Linear Combination of Unitaries \cite{Somma2002, Childs2012} using a circuit that forks \cite{Park2019} the state evolution in different directions based on controlled operations. Our goal is to analyze the behaviour of such an estimator compared to the standard sequential procedure that uses $L$ circuits. A similar approach can be defined also for gradients of quantum cost functions \cite{Schuld2019}, where the properties of the gate Hamiltonians are exploited to estimate the gradient efficiently. As we discuss later in Section \ref{sec:lc}, this procedure does not really bring any benefit in terms of sampling complexity: on the contrary, it delivers a $\log(L)$ increase in circuit complexity due to the multi-controlled operations. Therefore, in Section \ref{sec:qs} we discuss how to use amplitude amplification to modify the sampling complexity of the estimators and reach an effective speedup using LCU methods.

In the context of variational algorithms, we are also interested not only in the (sampled) cost function $C(\btheta)$, but also in its gradient. Gradients of quantum cost functions can be evaluated in terms of parameter-shift rules, i.e., trigonometric interpolation performed on the quantum cost function \cite{Wierichs2022}. More specific parameter-shift rules can also be determined analytically for several classes of quantum gates \cite{preti2024hybrid} and are expressed as linear combinations of cost function values at different points:
\begin{align}
    \pdv{}{\theta_i} C(\btheta) = \sum_{k=1}^R S_{ik} C(\btheta + \alpha_{ik}\boldsymbol{e}_i),
\end{align}
where $R$ is the number of shifts, $S_{ik}$ and $\alpha_{ik}$ are suitable values that depend on the spectral properties of the gates implemented, and, depending on the type of gate, can be determined analytically \cite{LiJun2017, preti2024hybrid} or numerically \cite{Wierichs2022}.
Another possibility is given by Hadamard-like tests and LCU approaches \cite{li2024efficientquantumgradienthigherorder}, which offer a different solution to the gradient estimation problem. Yet a third approach to gradient estimation, which also finds use also in classical machine learning, is given by Monte Carlo sampling of the cost function gradient \cite{Shakir2020, Sequeira2023}, and also requires the evaluation of a linear combination of cost function samples. As gradient estimation is essentially a special case of estimation of the mean value of a quantum observable, we can make use of the general treatment of LCU vs. standard methods to analyze the sampling complexity of different gradient estimators.

\section{Linear combinations of estimates}\label{sec:lc}

\subsection{Standard Estimator (SE): linear combinations of measurements} \label{sec:sampling_sest}

Our goal is to construct an estimator $\Tilde{C}$ that, using the measurement outcomes collected from the quantum circuits, can successfully approximate $C$ in Eq.~\eqref{eq:c_lambda_sampling}.
Let us consider a collection of circuits numbered $1$ to $L$, each one implementing a unitary $V_1, ..., V_L$ that we use to perform a measurement of $Z_{\text{prod}}$.  We refer to this estimator as the Standard Estimator (SE). This is a straightforward approach in most of the sampling problems in variational quantum circuits \cite{peruzzo2014variational}, so we use this name just for clarity. The principle is simple: we have different circuits that are initialized independently -- see Fig.~\ref{fig:SE_vs_lcu} (a). For each one of these circuits we prepare an identical initial state $\rho$. We first limit ourselves to the case in which the coefficients $a_i$ are all non-negative (which we later generalize in Sec.~\ref{sec:extension_glc}). Formally, we consider first an estimator denoted by the pair $\left(M^{(i)}_{j_1j_2\dots j_n}, \tilde{ C} \right), i=1,...,L$ for a state $\rho$ \cite{Hayashi}, where $M^{(i)}_{j_1j_2\dots j_n}$ are projectors of the form:
\begin{equation}
    M^{(i)}_{j_1j_2\dots j_n}=U_i \left(\Pi_{j_1} \otimes \Pi_{j_2} \otimes \dots \Pi_{j_n}\right)U^\dagger_i,
\end{equation}
where $j_1,j_2,\dots j_n \in \{0,1\}$ and 
\begin{equation}
 \Pi_0=\begin{pmatrix}
     1 & 0 \\
     0 & 0
 \end{pmatrix}, \quad \Pi_1=\begin{pmatrix}
     0 & 0 \\
     0 & 1
 \end{pmatrix}.
\end{equation}
This allows us to estimate $m_i = \tr{ \rho U_i Z_{\text{prod}} U_i^{\dagger}}$ for a given $\rho$ and thus
 \begin{equation}
 \langle \mathcal{O} \rangle_{\rho} = \sum_{i=1}^L a_i \sum_{j_1,j_2, \dots, j_n=0,1}
(-1)^{\sum^n_{l=1} j_l} \tr{\rho M^{(i)}_{j_1j_2\dots j_n}},
\end{equation}
where $a_i$ are the Pauli basis components of the observable given in Eq.~\eqref{eq:H}. Each outcome of a circuit measurement corresponds to a binary string $x^{(i)}_{k_{(i)}} \in {1,...,2^n}$ and is weighted with a coefficient +1 or -1, depending on its binary Hamming weight
\begin{align}
    b(x) = \sum_{l=0}^{n - 1} j_l(x),
\end{align}
where $x=\sum_l j_l(x) 2^{l}$ and $j_l(x)\in\{0,1\}$ are the binary digits of $x$.
Finally, $\tilde{C}$ is the map from the measurement data set to the real line. The SE estimator map is given by: 
\begin{align}
\label{eq:SE_estimator}
    \Tilde{C}_{\text{SE}} =  \sum_{i=1}^L \frac{a_i}{n_s^{(i)}} \sum_{k_i=1}^{n_s^{(i)}} (-1)^{b(x^{(i)}_{k_i})}.
\end{align}
We consider here the estimation of the expected values of $L$ Pauli strings $P_1,...,P_L$. The mean value of a  Pauli string for a quantum state $\rho$ has the following variance:
\begin{align}
    \sigma_{P_i}^2 = \langle P_i^2 \rangle - \langle P_i \rangle^2,
\end{align} 
and due to $P_i^2 = \mathbb{I}$ for any Pauli string, we have:
\begin{align}\label{eq:var_pauli}
    \Var(P_i) = 1 - m_i^2,
\end{align}
where $m_i = \Tr{\rho P_i} \in  \left[-1,1\right]$. Eq.~\eqref{eq:var_pauli} can be considered as the variance of a Rademacher variable, which can be transformed into a (binary) Bernoulli variable with mean $p_i = \frac{1}{2 }(m_i + 1)$ and variance $\Var(P_i) = 4 p_i(1 - p_i)$.
Eq.~\eqref{eq:var_pauli} can be also seen as the variance of a projective measurement $\Pi_i$, with $p_i = \Tr{\Pi_i \rho} \in \left[0, 1 \right]$. Any SE draws measurement values $x^{(j)}_{k_{(1)}}, ..., x^{(j)}_{k_{(L)}}, 1 \leq j \leq n_s^{(i)}$ and $1 \leq i \leq L$ from circuits $V_1,...,V_L$ acting upon the Hilbert space $\mathcal{H} = \mathcal{H}^{(1)} \otimes \mathcal{H}^{(2)} \otimes ... \otimes \mathcal{H}^{(L)}$ with density matrices $\bigotimes_{i=1}^L \rho$. Each circuit is used to estimate the mean value of $Z_{\text{prod}}$ using $n_s^{(i)}$ shots from circuit $i$. Hence, SE has a variance of
\begin{align}\label{eq:var_te}
    \text{Var}(\Tilde{C}_{\text{SE}}) = \sum_{i=1}^L \frac{4a_i^2}{n_s^{(i)}} p_i (1 - p_i).
\end{align}
\begin{figure}[ht!]
    \hspace{-0.5cm}
    \includegraphics[width=8cm]{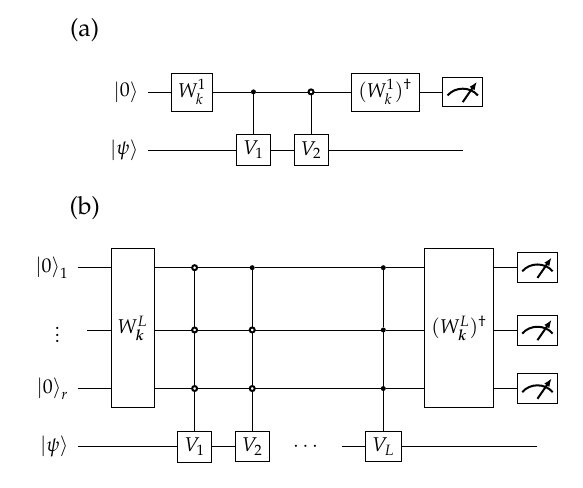}
    \centering
    \caption{(a) Circuit implementing the sum of two unitaries $V_1$ and $V_2$ on a quantum computer using one control qubit and (b) circuit implementing the sum of $L$ unitaries using up to $r = \lceil \log(L) \rceil$ qubits  (both are based on the circuits given in Ref.~\cite{Childs2012}). Upon measuring the control qubit in either $0$ or $1$, the whole state collapses in a state proportional to either $V_1 + V_2$ or $V_1 - V_2$. The LCU can therefore be used to probabilistically implement arbitrary operators acting on a state $\ket{\psi}$, as those found, e.g., in Hamiltonian simulation. In its generalized implementation (b), the LCU generates all possible combinations using coefficients $\boldsymbol{k}=(k_1,...,k_L)^{\text{T}}$ of sums and differences of $L$ unitaries. The linear combination with only positive terms is mapped to the zero state, however the probability of measuring it decreases with $1/L$ \cite{Childs2012}.}
    \label{fig:lcu}
\end{figure}

Due to the absence of entanglement between the (presumed i.i.d.) circuits, there are no correlations between the estimates, so the variance factorizes in the sum of the variances, which can be written as $1 - m_i^2$ for Pauli observables and $p_i(1-p_i)$ for projectors. If we observe that, for all $i=1,...,L$, $n_s^{(i)} \geq \text{min}_{i=1,...,L} \left[n^ {(i)}_s\right] =: n_s $  and use the Chebyshev inequality \cite{wasserman2010statistics}, we can lower-bound the number of shots per circuit as $n_s \geq \frac{L a_{\text{max}}^2}{4\epsilon^2} $, where $\epsilon$ is the precision of the estimation. For a total of $L$ circuits, this results in a circuit sampling complexity of 
\begin{equation}\label{eq:complexitySE}
    S=O(L^2 a_{\text{max}}^2/(4\epsilon^2)),
\end{equation} where $a_{\text{max}} = \text{max}_{i=1,..,L} \left[a_i \right]$ -- see also Refs.~\cite{Wecker2015, Babbush2019, Rubin2018}.

\subsection{Linear Combination of Unitaries}\label{sec:lcu_intro}
The Linear Combination of Unitaries \cite{Childs2012} is a quantum algorithm that allows to implement sums and differences of unitaries on a quantum computer in a probabilistic fashion with a certain depth and success rate depending on the length of the sum to be implemented. In its simplest form it uses the operator:
\begin{align}
    W_{k} = \begin{pmatrix} \sqrt{\frac{k}{k + 1}} & -\sqrt{\frac{1}{k + 1}} \\
    \sqrt{\frac{1}{k + 1}} & \sqrt{\frac{k}{k + 1}}
    \end{pmatrix}.
\end{align}
to create a superposition between control qubit states $\ket{0}$ and $\ket{1}$. Afterwards, conditional operations $V_1$ and $V_2$ are applied on an arbitrary state $\ket{\psi}$, followed by a second operation $W_{k}^{\dagger}$. This leads to a superposition of $V_1 + V_2$ and $V_1 - V_2$ with different probability amplitudes -- see Fig.~\ref{fig:lcu} (a). In particular, we see that the probability of finding the control qubit in its $\ket{1}$ state is given by
\begin{align}
    p_1 = \frac{k}{(k + 1)^2} \norm{(V_1 - V_2) \ket{\psi}}^2 \leq \frac{4k}{(k + 1)^2},
\end{align}
so that the algorithm implements $\frac{1}{\sqrt{a_{+}}}(V_1 + V_2)\ket{\psi}$ for $k \longmapsto \infty$ and $\frac{1}{\sqrt{a_{-}}}(V_1 - V_2)\ket{\psi}$ for $k \longmapsto 0$, where $a_{\pm} = \abs{ \bra{\psi} (V_1 \pm V_2)^{\dagger} (V_1 \pm V_2) \ket{\psi}}$. The algorithm implements one or the other state probabilistically.

If our goal is to apply the sum of $L$ operators, one either applies the circuit represented in Fig.~\ref{fig:lcu} (a) recursively or uses a circuit with multi-controlled gates -- see Fig.~\ref{fig:lcu} (b) and also Appendix in Ref.~\cite{Childs2012}, where typically $\log(L)$ qubits are needed. In this case the success probability will be always smaller than $p_1$. 

The ancillas that implement the summation procedure can also be encoded in a larger number of qubits \cite{Motzoi2017, Araujo2021}. If, e.g., $L$ qubits instead of $\lceil \log(L) \rceil$ qubits are used, the implementation requires only single-qubit controlled gates and no multi-controlled gate. The overall depth of the circuit is then lower \cite{Barenco1995}, but the number of control qubits scales linearly with the number of qubits implementing $V_1$ and $V_2$. The heralded nature of the LCU is particularly useful for quantum simulation. In the case of quantum estimation however, the encoding advantage of using $\lceil \log(L) \rceil$ qubits is traded off with the increased scaling of the variance, as we discuss in the next section.

\subsection{Estimator with Linear Combination of Unitaries (LCU)} \label{sec:sampling_lcu}
Variants of the LCU circuit have been proposed for estimation problems (with some claims of potential speedup) \cite{Park2019, Somma2002}. Our goal here is to characterize the variance properties of these two estimators. We will see that for bounded variables, speedup is not intrinsically possible without further algorithmic improvements. Ref.~\cite{Chakraborty2024implementingany} gives a more in-depth overview of different variants of the LCU algorithm, including continuous variables and integrals. We consider a circuit of the type given in Fig.~\ref{fig:SE_vs_lcu} (b), where the unitary $W_{\boldsymbol{a}}$ generates the state
\begin{align}\label{eq:W_a}
    W_{\boldsymbol{a}} \ket{0} = \ket{a}
\end{align}
on the LCU register, which uses $O(\lceil \log(L) \rceil)$ qubits \cite{Barenco1995}, where 
\begin{align}\label{eq:ket_a}
    \ket{a} = \frac{1}{\sqrt{\norm{\boldsymbol{a}}_1}}\sum_{i=1}^{L} \sqrt{\vert a_i \vert} \ket{i - 1},
\end{align}
for $\boldsymbol{a} = (a_1,\cdots, a_L)^{\text{T}}$ and $\norm{\boldsymbol{a}}_1 = \sum_{i=1}^L \vert a_i \vert$. The weights of the observable can in general be written in vector form $\boldsymbol{a} = \sum_{i=1}^L a_i \boldsymbol{e}_i$, where $\boldsymbol{e}_i$ are unit vectors in $\mathbb{R}^L$.
We refer to the normalized probabilities associated with each amplitude of the state as $w_i$:
\begin{align}\label{eq:w_i_def}
    w_i = \abs{\bra{i} \ket{a} }^2 = \frac{\vert  a_i \vert}{\sum_{l=1}^L \vert a_l \vert}.
\end{align}
It is clear than any convex combination of such weights with coefficients $0 \leq p_i \leq 1$ lies itself between zero and one.
The circuit measures the upper control qubit and the $n$-qubit system (analogously to the case of SE). We calculate the mean and variance of the circuit output measurements. We consider here the case in which measurements are drawn from the computational basis.

\begin{theorem}[Mean and variance of the LCU estimator]\label{theorem:lcu_variance}
The expected value and variance of the observable $\Pi_{\text{LCU}} = \ket{0_c} \bra{0_c} \otimes \mathbb{I}_{L} \otimes \Pi$, where $\Pi$ is an orthogonal projector that describes the measurement operation, are given by
\begin{align}
    &\bar{p} = \sum_{i=1}^L w_i p_i, \label{eq:lcu_p_mean} \\ 
    &\sigma_{\bar{p}}^2 = \sum_{i=1}^L w_i p_i - \left( \sum_{i=1}^L w_i p_i \right)^2 = \bar{p} (1 - \bar{p}), \label{eq:lcu_p_var}
\end{align}
with $p_i = \frac{1}{2}\tr{U_i \rho U_i^{\dagger}\Pi}$. If instead the observable $Z_{\text{LCU}} = \ket{0_c} \bra{0_c} \otimes \mathbb{I}_{L} \otimes Z_{\text{prod}}$ is measured, then the corresponding expected value and variance are:
\begin{align}
    &\bar{m} = \sum_{i=1}^L w_i m_i \label{eq:lcu_sz_mean},\\ 
    &\sigma_{\bar{m}}^2 = 1 - \left( \sum_{i=1}^L w_i m_i \right)^2 = 1 - \bar{m}^2, \label{eq:var_sz_lcu}
\end{align}
with $m_i = \tr{U_i \rho U_i^{\dagger} Z_{\text{prod}}}$. 

\end{theorem}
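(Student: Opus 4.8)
The plan is to propagate the joint (control)$\otimes$(LCU register)$\otimes$($n$-qubit system) state through the circuit of Fig.~\ref{fig:SE_vs_lcu}(b) and then evaluate the two projective measurements directly. First I would write the pre-measurement state $\sigma$ as the image of $\ket{0_c}\bra{0_c}\otimes\ket{0}\bra{0}\otimes\rho$ under $R_1$ on the control, $W_{\boldsymbol{a}}$ on the register (giving $\ket{a}$ by Eqs.~\eqref{eq:W_a}--\eqref{eq:ket_a}), $V(\btheta)$ on the system, the register-controlled block $\mathcal{U}=\sum_{i=1}^{L}\ket{i-1}\bra{i-1}\otimes U_i$, and finally $R_2$ on the control. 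The key structural observation is that $\ket{a}$ enters only through $\ket{a}\bra{a}=\sum_{i,j}\sqrt{w_iw_j}\,\ket{i-1}\bra{j-1}$ with $w_i=\abs{\bra{i}\ket{a}}^2=\abs{a_i}/\norm{\boldsymbol{a}}_1$, and that both measured observables act as $\mathbb{I}_L$ on the register. Hence in the trace only the diagonal blocks $i=j$ survive ($\tr{\ket{i-1}\bra{j-1}}=\delta_{ij}$), so the coherent register superposition collapses to an effective \emph{classical mixture} $\sum_i w_i(\cdots)_i$ over the $L$ branches. This is the step that produces the weights $w_i$ appearing in Eqs.~\eqref{eq:lcu_p_mean}--\eqref{eq:var_sz_lcu}.

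With this reduction the means are immediate: $\bar p=\tr{\Pi_{\text{LCU}}\sigma}=\sum_i w_i p_i$ and $\bar m=\tr{Z_{\text{LCU}}\sigma}=\sum_i w_i m_i$, where the per-branch control-plus-system expectation is $p_i=\tfrac12\tr{U_i\rho U_i^\dagger\Pi}$ (resp.\ $m_i=\tr{U_i\rho U_i^\dagger Z_{\text{prod}}}$); the factor $\tfrac12$ comes from the control ancilla being measured in $\ket{0_c}$ after $R_1,R_2$, as in a Hadamard test, while $V(\btheta)$ and $U_i$ combine into $U_i\rho U_i^\dagger$ on the system (absorbing $V(\btheta)\rho V^\dagger(\btheta)$ into $\rho$).

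For the variances I would avoid a law-of-total-variance argument over the branches; that route splits the answer into an intra-branch term $\sum_i w_i p_i(1-p_i)$ plus a between-branch term $\Var_w(p_i)$, and although these do sum to the stated result, the following is cleaner. The operator $\Pi_{\text{LCU}}=\ket{0_c}\bra{0_c}\otimes\mathbb{I}_L\otimes\Pi$ is an orthogonal projector on the full Hilbert space, so $\Pi_{\text{LCU}}^2=\Pi_{\text{LCU}}$ and the single-shot outcome $X\in\{0,1\}$ satisfies $\EX[X^2]=\EX[X]=\bar p$; therefore $\sigma_{\bar p}^2=\bar p-\bar p^2=\bar p(1-\bar p)$, which is Eq.~\eqref{eq:lcu_p_var} after inserting $\bar p=\sum_i w_i p_i$. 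For the $Z_{\text{prod}}$ scheme the recorded statistic is a product of $\pm1$-valued results (a Rademacher variable), so its square equals $1$ identically, giving $\sigma_{\bar m}^2=\EX[(\cdot)^2]-\bar m^2=1-\bar m^2$, i.e.\ Eq.~\eqref{eq:var_sz_lcu} after inserting $\bar m=\sum_i w_i m_i$. Both expressions thus coincide in form with the single-Pauli and single-projector variances $1-m^2$ and $p(1-p)$, with $m,p$ replaced by the contracted values $\bar m,\bar p$: variance-wise the LCU circuit behaves like a measurement of one effective observable.

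The main obstacle is the bookkeeping in the first two steps: establishing carefully that the register superposition decoheres into the mixture with weights $w_i$ (true here precisely because the observables are $\propto\mathbb{I}_L$ on that register and it is left unmeasured, so the off-diagonal $i\neq j$ blocks—which could in principle carry signal—drop out), and pinning down the per-branch coefficient so that it is \emph{exactly} $p_i=\tfrac12\tr{U_i\rho U_i^\dagger\Pi}$ and $m_i=\tr{U_i\rho U_i^\dagger Z_{\text{prod}}}$ for the specific choices $R_1=H$ and $R_2\in\{X,H\}$. Once those are settled, the variance statements follow from the idempotency and Rademacher identities with essentially no further computation.
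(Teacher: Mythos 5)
Your proposal is correct and follows essentially the same route as the paper's proof: propagate the input through the LCU circuit, observe that because both observables act as $\mathbb{I}_L$ on the register only the diagonal blocks survive the trace (yielding the weights $w_i$), and then obtain the variances from $\Pi_{\text{LCU}}^2=\Pi_{\text{LCU}}$ and $Z_{\text{LCU}}^2=\ket{\cdot}\bra{\cdot}\otimes\mathbb{I}_L\otimes\mathbb{I}_n$ exactly as the paper does. The only cosmetic difference is that the paper writes out the output density matrix blocks explicitly and, for the $Z_{\text{prod}}$ case, justifies $\EX[Z_{\text{LCU}}^2]=1$ via the choice $R_1=X$, $R_2=\mathbb{I}$ together with $\sum_i w_i\tr{U_i\rho U_i^{\dagger}}=1$, which is the precise form of your Rademacher argument.
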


\begin{proof}
The initial input state of the LCU circuit -- see Fig.~\ref{fig:SE_vs_lcu} -- is given by tensor product of the $n$-qubit input density matrix, the zero state of the LCU register and the zero state of the control qubit:
\begin{align}
    \rho_{\text{in}} = \ket{0_c}\bra{0_c} \otimes \underbrace{\ket{0}\bra{0} \otimes ... \otimes \ket{0}\bra{0}}_{r} \otimes \rho, 
\end{align}
where $r = \lceil \log{L} \rceil$. The evolved density matrix of the LCU circuit is given by 
\begin{align}
    \rho_{\text{out}} = \begin{pmatrix}
        A & B \\
        B^{\dagger} & C \\
    \end{pmatrix},
\end{align}
where $A$, $C$, $B$ are given by \cite{Chakraborty2024implementingany}:

\begin{align}
    A = \sum_{j=1}^L \sum_{i=1}^L \frac{\sqrt{w_i} \sqrt{w_j}}{2} \ket{0_c}\bra{0_c} \otimes \ket{i}\bra{j} \otimes  \rho,
\end{align}

\begin{align}
    B = \sum_{j=1}^L \sum_{i=1}^L \frac{\sqrt{w_i} \sqrt{w_j}}{2} \ket{1_c}\bra{0_c} \otimes \ket{i}\bra{j} \otimes  \left( U_i \rho  \right),
\end{align}

\begin{align}
    C = \sum_{j=1}^L \sum_{i=1}^L \frac{\sqrt{w_i} \sqrt{w_j}}{2} \ket{1_c}\bra{1_c}  \otimes \ket{i}\bra{j} \otimes U_i \rho U_j^{\dagger}.
\end{align}
The entries of $A$, $B$, $C$ may change depending on the values of the single-qubit unitary gates $R_1, R_2$ -- see Fig.~\ref{fig:SE_vs_lcu} -- acting on the first control qubit. The choice of $R_1=H$ and $R_2=X$ leads to the estimation of $\tr{V\rho V^{\dagger} \mathcal{O}}$, while, e.g., the choice of $R_1 = H$ and $R_2 = H$ or $R_2 = SH$ allows us to estimate $\Re{\tr{\rho V  \mathcal{O}}}$ or $\Im{\tr{\rho V  \mathcal{O}}}$, respectively. We consider here the first case, as it resembles Eq.~\eqref{eq:h_v_sigma_z}. The second choice is useful whenever the target cost function is a real (or imaginary) overlap. Hence, the output probability distribution corresponding to the orthogonal projector $\Pi$ acting on the subspace of the density matrix $\rho$ and the measurement line is given by
\begin{align}
    \bar{p} = \Tr{\rho_{\text{out}} \Pi_{\text{LCU}}},
\end{align}
where $\Pi_{\text{LCU}} = \Pi_1 \otimes \mathbb{I}_{L} \otimes \Pi$, $\Pi_1 = \ket{1_c}\bra{1_c}$. The value of $\bar{p}$ is given by
\begin{align}\label{eq:pLCU}
    \bar{p} = \sum_{i=1}^L w_i \frac{1}{2}\Tr{U_i \rho U_i^{\dagger} \Pi} = \sum_{i=1}^L w_i p_i.
\end{align}
Each mean value $p_i$ represents the probability of success of a Bernoulli distribution, and it lies between $0$ and $1$, which introduces constraints on the values $p_i$ that depend on the different unitaries $U_i$:
\begin{align}\label{eq:p1frac2}
    p_i = \frac{1}{2} \Tr{U_i \rho U_i^{\dagger} \Pi}.
\end{align}
The variance of the estimator is given by
\begin{align}
    \sigma_{\bar{p}}^2 = \Tr{\rho_{\text{out}} \Pi_{\text{LCU}}^2 } - \Tr{ \rho_{\text{out}} \Pi_{\text{LCU}}}^2,
\end{align}
which corresponds to the variance of a Bernoulli-type distribution, because $\Pi_{\text{LCU}}^2 = \Pi_{\text{LCU}}$. 

Using Eq.~\eqref{eq:pLCU}, we have 
\begin{align}
    \sigma_{\bar{p}}^2 = \bar{p}(1 - \bar{p}) = \sum_{i=1}^L w_i p_i - \left( \sum_{i=1}^L w_i p_i \right)^2.
\end{align}
The mean value of the estimator is bounded between zero and one, whereas the variance has its maximum at $\sigma_{\bar{x}}^2 = \frac{1}{4}$ when $\bar{p} = \frac{1}{2}$.

Now we turn to the estimation of $Z_{\text{prod}}$ with $R_1 = X$ and $R_2 = \mathbb{I}$.
If instead of the projective measurement, a  measurement of $Z_{\text{prod}}$ is carried out on the $n$-qubit subspace, on the whole Hilbert space we will be dealing with the measurement of the observable $Z_{\text{LCU}} = \Pi_1 \otimes \mathbb{I}_{L}  \otimes Z_{\text{prod}}$. In this case we have $m_i = \Tr{U_i \rho U_i^{\dagger} Z_{\text{prod}}}$ and the mean of the measurement is:
\begin{align}
  \Bar{m} = \sum_{i=1}^L w_i m_i,
\end{align}
as well as the variance
\begin{align}
    \sigma_{\Bar{m}}^2 = 1 - \left(\sum_{i=1}^L w_i m_i \right)^2,
\end{align}
where we used the property $Z_{\text{LCU}}^2 = \Pi_1 \otimes \mathbb{I}_{L}  \otimes \mathbb{I}_n$ and $\sum_{i=1}^L w_i \tr{U_i \rho U_i^{\dagger}} = 1$. We see that the prefactors of the estimates, such as, e.g., $\frac{1}{2}$ in Eq.~\eqref{eq:p1frac2}, heavily depend on the unitaries $R_2$ and $R_1$. Using $R_1 = X$ and $R_2 = \mathbb{I}$, we can also remove the factor $\frac{1}{2}$ also from $\bar{p}$.
\end{proof}
We can define a new estimator using the framework described before: Let $\bar{x}^{(j)}, j=1,...,n_s$ be shots of the LCU circuit that are sampled by measuring the control qubit in $0$ and the corresponding $Z_{\text{prod}}$ operator -- or any other Pauli operator --, then
\begin{align}\label{eq:c_lcu}
    \Tilde{C}_{\text{LCU}} = \frac{\norm{\boldsymbol{a}}_1}{n_s} \sum_{j=1}^{n_s} (-1)^{b(\bar{x}^{(j)})}\bar{x}^{(j)},
\end{align}
with variance
\begin{align}\label{eq:varLCU}
    \text{Var}(\Tilde{C}_{\text{LCU}}) = \frac{\norm{\boldsymbol{a}}_1^2}{n_s} (1 - \bar{m}^2) \leq \frac{\norm{\boldsymbol{a}}_1^2}{n_s}. 
\end{align}

Theorem \ref{theorem:lcu_variance}, however, is not sufficient to bound SE and/or LCU variance, because in principle there could be covariances between the variables. Nonetheless, we show below that these are also bounded similarly to before. 

\begin{theorem}[LCU vs. Classically Correlated Bernoulli Samples]
Consider values $ 0 \leq p_i,w_i \leq 1 $ $\forall i$ with $1 \leq i \leq L$,
$\sum_{i=1}^L w_i = 1$. Let $X_i$ be Bernoulli-distributed random variables with parameter $p_i$, i.e., $\forall i=1,...,L$ we have $\EX[X_i] = p_i$ and $\EX \left[X_i^2 \right] = \EX \left[X_i\right] = p_i$, then we have 
\begin{align}
      \sum_{i=1}^L \sum_{j=1}^L w_i w_j \EX[X_i X_j] \leq 
      \sum_{i=1}^L w_i p_i.
\end{align}
\begin{proof}
By the Cauchy-Schwarz inequality, we know that 
\begin{align}\label{eq:cauchy_schwarz}
    \vert \EX[X_i X_j] \vert  \leq \sqrt{\EX[X_j]} \sqrt{\EX[X_j]} = \sqrt{p_i} \sqrt{p_j},
\end{align}
where we used the fact that $\EX[X_i^2] = \EX \left[X_i\right] = p_i$, and so
\begin{align}\label{eq:gen_mean_bound}
    &\sum_{i=1}^L \sum_{j=1}^L w_i w_k \EX[X_i X_j] \leq \\ \nonumber &\leq \sum_{i=1}^L \sum_{j=1}^L w_i w_j \sqrt{p_i} \sqrt{p_j} =  \left( \sum_{i=1}^L w_i \sqrt{p_i} \right)^2.
\end{align}
Now we employ one of the generalized-mean inequalities \cite{Bullen2003}, i.e., we use the fact that for $p < q$:
\begin{align}
    \left( \sum_{i=1}^L w_i p_i^p \right)^\frac{1}{p} \leq \left( \sum_{i=1}^L w_i p_i^q \right)^\frac{1}{q},
\end{align}
setting $p=\frac{1}{2}$ and $q=1$. After applying this equation to Eq.~\eqref{eq:gen_mean_bound}, we have
\begin{align}\label{eq:cov_comparison}
    \sum_{i=1}^L \sum_{j=1}^L w_i w_j \EX[X_i X_j] \leq \sum_{i=1}^L w_i p_i.
\end{align}\\
\end{proof}
\end{theorem}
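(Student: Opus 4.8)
The plan is to read the double sum on the left as the second moment of the convex combination $S := \sum_{i=1}^L w_i X_i$ and the single sum on the right as its first moment, and then exploit that $S$ is a bounded random variable. Concretely, since each $X_i$ takes values in $\{0,1\}$ and the weights form a convex combination ($w_i \geq 0$, $\sum_{i=1}^L w_i = 1$), the variable $S=\sum_{i=1}^L w_i X_i$ satisfies $0 \leq S \leq 1$ pointwise on the underlying probability space, hence $S^2 \leq S$ pointwise.

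The remaining steps are bookkeeping. By linearity of expectation over finite sums (no independence needed), $\EX[S^2] = \sum_{i=1}^L \sum_{j=1}^L w_i w_j \EX[X_i X_j]$ and $\EX[S] = \sum_{i=1}^L w_i \EX[X_i] = \sum_{i=1}^L w_i p_i$. Taking expectations of the pointwise inequality $S^2 \leq S$ and invoking monotonicity of the expectation then yields the claimed bound directly.

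If one prefers to avoid any reference to a joint law and work only with the stated moment identities $\EX[X_i^2] = \EX[X_i] = p_i$, there is an equivalent two-step route. First, Cauchy--Schwarz gives $|\EX[X_i X_j]| \leq \sqrt{\EX[X_i^2]}\,\sqrt{\EX[X_j^2]} = \sqrt{p_i}\,\sqrt{p_j}$, so the left-hand side is bounded by $\bigl(\sum_{i=1}^L w_i \sqrt{p_i}\bigr)^2$. Second, the generalized-mean (power-mean) inequality with exponents $\tfrac12 < 1$ gives $\bigl(\sum_{i=1}^L w_i p_i^{1/2}\bigr)^{2} \leq \sum_{i=1}^L w_i p_i$, which closes the argument; this is essentially the covariance-comparison analogue of the variance bound in Theorem~\ref{theorem:lcu_variance}.

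I do not expect a genuine obstacle here. The only point requiring a little care is whether the $X_i$ are assumed to live on a common probability space: the pointwise argument (and indeed the very meaning of $\EX[X_i X_j]$) presupposes this, whereas the Cauchy--Schwarz version uses only pairwise second moments and is the natural fallback if one wishes to assume nothing about the joint distribution beyond the marginals.
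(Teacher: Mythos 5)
Your proposal is correct, and your primary argument is genuinely different from (and more elementary than) the paper's proof. The paper proceeds exactly as in your fallback route: Cauchy--Schwarz on each pair, $|\EX[X_iX_j]|\leq\sqrt{p_i}\sqrt{p_j}$, followed by the weighted power-mean inequality with exponents $\tfrac12<1$ to pass from $\bigl(\sum_i w_i\sqrt{p_i}\bigr)^2$ to $\sum_i w_i p_i$. Your main argument instead observes that $S=\sum_i w_i X_i$ lies in $[0,1]$ pointwise (convex combination of $\{0,1\}$-valued variables), so $S^2\leq S$ and the claim follows by taking expectations; this avoids both Cauchy--Schwarz and the power-mean inequality entirely and makes transparent that the bound is just the statement $\EX[S^2]\leq\EX[S]$ for a $[0,1]$-valued variable. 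The trade-off you identify is the right one: the pointwise route presupposes a joint law, but since the theorem's left-hand side $\EX[X_iX_j]$ already requires the variables to live on a common probability space, this costs nothing, whereas the paper's route uses only pairwise second moments. One small bonus of your write-up: your Cauchy--Schwarz display, $|\EX[X_iX_j]|\leq\sqrt{\EX[X_i^2]}\sqrt{\EX[X_j^2]}$, fixes an index typo in the paper's Eq.~\eqref{eq:cauchy_schwarz}, which writes the bound with two $j$ indices.
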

Therefore, we have that for such variables $X_1, ..., X_L$ the variance of their linear combination can be bounded from above as follows:
\begin{align}
    \sum_{i=1}^L \sum_{j=1}^L w_i w_j \text{Cov}(X_i, X_j) \leq \sum_{i,j=1}^L w_i w_j p_i (1 - p_j),
\end{align}
where $\text{Cov}(X_i, X_j) = \EX[X_i X_j] - \EX[X_i] \EX[X_j]$ is the covariance. 
The Cauchy-Schwarz inequality provides us with the upper bound for the variance of the linear combination, i.e.:
\begin{align}\label{eq:max_variance}
    \text{Var}\left(\sum_{i=1}^L w_i X_i \right) \leq \left( \sum_{i=1}^Lw_i \sqrt{p_i(1-p_i)} \right)^2. 
\end{align}
If instead we consider shifted estimates $Y_i$ in the interval $I = \left[-1,1 \right]$, i.e., $X_i = \frac{1}{2}(Y_i + 1)$, we obtain that the covariance is always bounded by one, which is the first term in Eq.~\eqref{eq:var_sz_lcu}. Thus, also in this case we have:
\begin{align}
    \sum_{i=1}^L \sum_{j=1}^L w_i w_j \text{Cov}(Y_i, Y_j) \leq 1 - \left( \sum_{i=1}^L w_i m_i \right)^2.
\end{align}
The maximum variance for this case can be derived by applying the same principle as the one used in Eq.~\eqref{eq:max_variance}.

We can see immediately that the variance of the LCU sampler is always larger than the variance of SE in any circumstance.

\begin{theorem}
Both the LCU and SE samplers/estimators yield the same sampling complexity $S=O(L^2 a_{max}^2/{\epsilon^2})$. 
\begin{proof}
For the sampling of a bounded quantity, such the expected value of a Pauli string with respect to variational angles, the variance given in Eq.~\eqref{eq:varLCU} of the LCU sampler is $O(L^2 a_{max}^2/n_s)=O(1/n_s)$ and its sampling complexity is therefore $O(1/\epsilon^2)$. Conversely, the variance of SE for the mean is $O(L a_{max}^2/n_s)=O(L/n_s)$, but $L$ circuits need to be evaluated, so the overall sampling complexity given in Eq.~\eqref{eq:complexitySE} remains $O(1/\epsilon^2)$.
\end{proof}
\end{theorem}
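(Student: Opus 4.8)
The plan is to read off both sampling complexities from variance bounds already in hand and then observe that they coincide. For the SE this is essentially Eq.~\eqref{eq:complexitySE}: taking $n_s = \min_i n_s^{(i)}$ and using $p_i(1-p_i)\le 1/4$ together with $a_i \le a_{\text{max}}$, the variance in Eq.~\eqref{eq:var_te} is bounded by $L a_{\text{max}}^2/n_s$, so a Chebyshev argument forces $n_s \ge L a_{\text{max}}^2/(4\epsilon^2)$ per circuit, and multiplying by the $L$ circuits gives $S_{\text{SE}} = O(L^2 a_{\text{max}}^2/\epsilon^2)$. I would simply restate this line.

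For the LCU estimator I would start from Eq.~\eqref{eq:varLCU}, namely $\mathrm{Var}(\tilde C_{\text{LCU}}) = \|\boldsymbol a\|_1^2\,\bar p(1-\bar p)/n_s \le \|\boldsymbol a\|_1^2/(4n_s)$, and use the crude but worst-case-tight bound $\|\boldsymbol a\|_1 = \sum_{i=1}^L |a_i| \le L a_{\text{max}}$. Chebyshev's inequality then requires $n_s = O(\|\boldsymbol a\|_1^2/\epsilon^2) = O(L^2 a_{\text{max}}^2/\epsilon^2)$ shots; since the LCU uses a \emph{single} circuit, that per-circuit shot count is already the total sampling complexity $S_{\text{LCU}} = O(L^2 a_{\text{max}}^2/\epsilon^2)$, which matches $S_{\text{SE}}$. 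Conceptually, the LCU trades the factor-$L$ reduction in the number of circuits for a factor-$L$ increase in per-circuit variance (the $\|\boldsymbol a\|_1^2$ versus $a_{\text{max}}^2$ prefactor), so the product is unchanged.

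The step I would be most careful about is the bookkeeping convention: $S$ must count total circuit executions (shots summed over all distinct circuits), not the per-circuit shot budget or the number of circuit layouts, otherwise the two quantities are not being compared on the same footing. Once that is fixed the two expressions collapse to the same scaling. I would also remark that Theorem~\ref{theorem:lcu_variance} together with the covariance bounds proved above (Eqs.~\eqref{eq:cov_comparison}--\eqref{eq:max_variance}) ensure this conclusion survives in the presence of classical correlations between the $X_i$, so no hidden cross terms can alter the count. The takeaway, which motivates the amplitude-estimation analysis of the next section, is that the bare LCU reorganization of the summation gives no sampling-complexity advantage over SE.
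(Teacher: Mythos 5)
Your argument is correct and essentially identical to the paper's proof: you bound the LCU variance via $\norm{\boldsymbol{a}}_1^2 \leq L^2 a_{\text{max}}^2$ in Eq.~\eqref{eq:varLCU} to get $O(L^2 a_{\text{max}}^2/\epsilon^2)$ shots on a single circuit, and compare with the SE bound of Eq.~\eqref{eq:complexitySE}, which trades an $L$-fold smaller per-circuit variance for $L$ circuits. The added remarks on the shot-counting convention and the covariance bounds are sensible but not needed beyond what the paper already uses.
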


  The two algorithms are therefore equivalent again, although the LCU that uses logarithmic encoding has a slightly -- $\log(L)$ -- deeper circuit. This can be further brought down to $\log(\log(L))$ by using $\log(L)$ additional ancillas \cite{Motzoi2017}. Unless classical correlations are present or are not negligible, some of the applications that have been considered for the LCU circuit \cite{Park2019, Somma2002, qiskit2024} may be useful in specific contexts, but cannot naively provide a speedup with respect to $L$ in terms of sampling complexity (not even in terms of state preparation, as the variance of SE is quadratically smaller, which reduces the total number of shots needed from the $L$ circuits). 

\begin{table*}[]
    \centering
    \begin{tabular}{|c|c|c|c|c|}
        \hline
         & LCU \cite{Childs2012, Chakraborty2024implementingany, Somma2002, Park2019} & SE \cite{Wecker2015, Babbush2019, Rubin2018} & SA-LCU \cite{Chakraborty2024implementingany} & Classical Shadows \cite{Huang2020shadow}  \\
        \hline
        (embarrassing) parallelization & no & yes & yes & yes \\
        \hline
        sampling complexity (i.i.d.) & $O(L^2/\epsilon^2)$  & $O(L^2/\epsilon^2)$ & $O(L^2/\epsilon^2)$ & $O(L\log(L)/\epsilon^4)$\\
        \hline
        sampling complexity (AE) & $O(L/\epsilon)$ & $O(L\sqrt{L}/\epsilon)$ & $O(L^2/\epsilon^2)$ & $O(\sqrt{L} \log(L)/\epsilon^3)$ \\
        \hline 
    \end{tabular}
    \caption{A table representing the differences in terms of sampling complexity of the LCU and SE approaches. In case of normalized sums, all sampling complexities have to be re-scaled by $L^2$ for classical estimation and $L$ for amplitude estimation (AE) \cite{Brassard2002}, or one of its approximate versions \cite{Suzuki2020, Grinko2021} -- see also Section \ref{sec:ae}.}
    \label{tab:samp_comp}
\end{table*}

\subsection{Extension to real linear combinations}\label{sec:extension_glc}
If the coefficients of the linear combination have both positive and negative values, we need to partially modify the encoding defined above and the derivations of the variance scaling. We first observe that we can separate the coefficients in positive and negative terms in the cost function:
\begin{align}
     C = \sum_{i=1}^L a_i m_i.
\end{align}
If we define
\begin{align}
    a_i^{+} = \begin{cases}
        a_i, &  \text{if}  \ a_i \geq 0 \\
        0, &  \text{otherwise},
    \end{cases}
\end{align}
\begin{align}
    a_i^{-} = \begin{cases}
        -a_i, &  \text{if}  \ a_i < 0 \\
        0, &  \text{otherwise}.
    \end{cases}
\end{align}
We define therefore $L^{\pm}$ as the number of coefficients with positive and negative signs respectively, such that $L = L^+ + L^-$. 
In order to be able to construct an estimator similar for Eq.~\eqref{eq:SEst} using only positive weights, we have to first decompose it in its positive and negative terms: 
\begin{align}\label{eq:pos_neg_C}
    C = \sum_{i=1}^{L} a^{+}_i m_i- \sum_{i=1}^{L} a^{-}_i m_i = \norm{\boldsymbol{a}}_1 \sum_{i=1}^L \abs{w_i} (m_i^{+} - m_i^{-}),
\end{align}
where we used Eq.~\eqref{eq:w_i_def} for the definition of $\abs{w_i}$ and $m_i^{\pm}= m_i a_i^{\pm}/\abs{a_i}$. 
We assume now w.l.o.g.~that the positive values appear all before $L^{+}$, i.e., at indices $i=1,...,L^{+}$ and the negative ones after $L^{+}$, i.e., at indices $i=L^{+}+1,...,L$.
Using the formulation of Eq.~\eqref{eq:h_v_sigma_z}, we see that for each one of the positive or negative coefficients we have a unitary $U^{+/-}_i$ for the $i$th positive and negative coefficient, respectively. In order to use the LCU circuit to sample both negative and positive linear combinations, we use the first control qubit line in Fig.~\ref{fig:SE_vs_lcu} (b) to obtain Fig.~\ref{fig:positive_negative_lcu} (a). The unitaries $U^{+}_i, i=1,...,L^{+}$ are controlled on the value zero of the qubit and the unitaries $U^{-}_i, i=L^{+}+1,...,L$ are controlled on value one, whereas the rest of the circuit remains unchanged. This leads to the modified output of the circuit -- $R_1 = H$ and $R_2 = I$:
\begin{align}
    q_0 = \frac{1}{2}\left(\sum_{i=1}^{L^{+}} \abs{w_i} m^{+}_i + \sum_{i=L^{+}+1}^L \abs{w_i} \Tr{\rho Z_{\text{prod}}}\right), \\
    q_1 = \frac{1}{2}\left(\sum_{i=L^{+}+1}^{L} \abs{w_i} m_i^{-} + \sum_{i=1}^{L^{+}} \abs{w_i} \Tr{\rho Z_{\text{prod}}} \right). 
\end{align}
The difference between $q_0$ and $q_1$ estimates the cost function $C$ in Eq.~\eqref{eq:pos_neg_C} that uses positive and negative coefficients up to a bias, i.e., term $\langle Z_{\text{prod}}(0) \rangle = \Tr{\rho Z_{\text{prod}}}$. 
Since empirically using a single LCU register seems to always induce biases, we generalize in the next Section.
An interesting case is one where the number of positive and negative terms in the sum is the same, i.e., $L^{+} = L^{-} = \frac{L}{2}$ and the absolute value of each coefficient is $\abs{w_i} = \frac{1}{L}$. In this case, the bias is the same for both $q_0$ and $q_1$ and is removed when subtracting them. \\
In the general case of uniform coefficients, i.e., $\abs{w_i} = \frac{1}{L}$, we can write the expression as 
\begin{align}
    &q_0 = \frac{1}{2}\left(\frac{1}{L} \sum_{i=1}^{L^{+}} m^{+}_i + \frac{L^{-}}{L} \langle Z_{\text{prod}}(0) \rangle \right), \\
    &q_1 = \frac{1}{2}\left(\frac{1}{L} \sum_{i=L^{+}}^{L} m^{-}_i + \frac{L^{+}}{L} \langle Z_{\text{prod}}(0) \rangle \right),
\end{align} where $ \langle Z_{\text{prod}}(0) \rangle = \tr{\rho Z_{\text{prod}}}$,
which leaves us with 
\begin{align}\label{eq:pnlc_sampling}
    z = 2(q_0 - q_1) + \left(\frac{2L^{+}}{L} - 1\right) \langle Z_{\text{prod}}(0) \rangle,
\end{align}
with $z = \norm{\boldsymbol{a}}_1 C$, which, as expected, reduces to the case $
z =  2(q_0 - q_1)$ for $L^{+} = L^{-} = \frac{L}{2}$. 
\begin{figure*}
    \centering
    \includegraphics[width=0.9\linewidth]{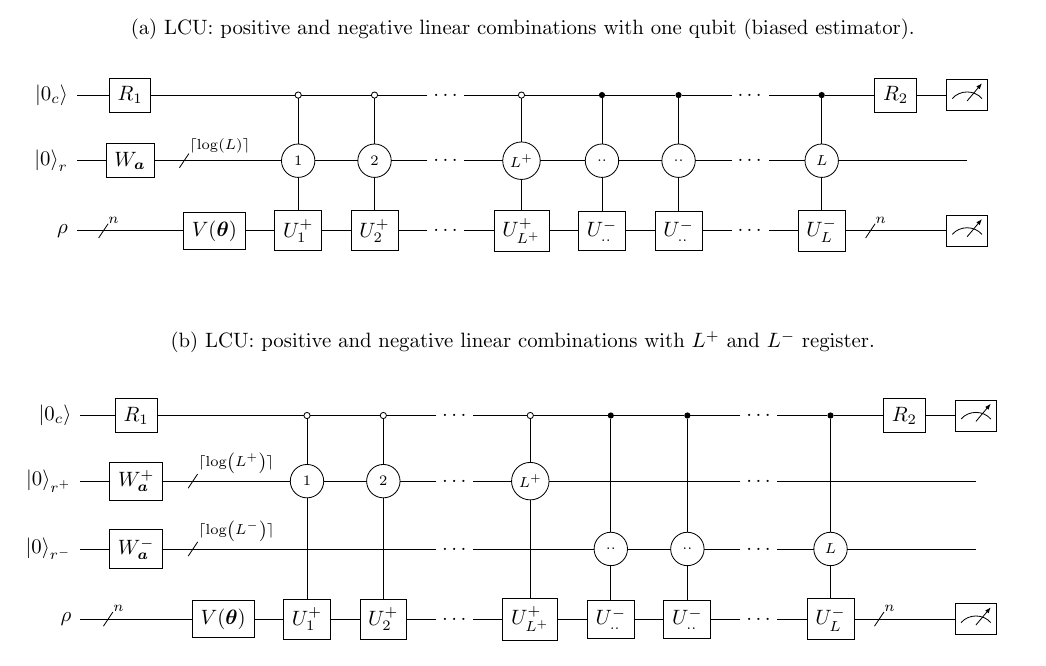}
    \caption{Representation of the circuits used to estimate real linear combinations of estimates. (a) Circuit that uses one single control qubit  and one LCU registerwith $r =\lceil \log(L) \rceil $ and as a result gives a biased estimator -- see Eq.~\eqref{eq:pnlc_sampling}. (b) Circuit that uses one control qubit  and two LCU registers (one for positive and one for the negative terms). Circuit (b) provides us with an unbiased estimator of the real linear combination of estimates -- see Eqs.~\eqref{eq:e0unbiased} and \eqref{eq:e1unbiased}. Here, the unitaries $U^{+}_1, U^{+}_1, ..., U^{+}_{L^{+}}$ and $U^{-}_{L^+ + 1}, U^{-}_{L^+ + 2}, ..., U^{-}_{L}$ refer to the positive and negative signs of the coefficients, respectively. The control values $1, 2, ..., L^{+}$ and $L^{+}+1, L^{+}+2, ..., L$, are implemented using the gates $W_{\boldsymbol{a}}^\pm$ and binary encoding with a total of $r = r^{+} + r^{-}$ qubits, where $r^{\pm} = \lceil \log(L^{\pm}) \rceil$ but other types of qubit encoding for the multi-controlled gates are also possible.}
    \label{fig:positive_negative_lcu}
\end{figure*}
Therefore, this procedure allows us to estimate linear combinations of expected values that contain both positive and negative coefficients.
\subsubsection{Unbiased estimator}
Now, let us analyze the variance in the case of this more general kind of linear combination of estimates. The bias will affect the variance of the estimator with a quadratic factor \cite{Kay1993} $(2L^{+}/L - 1)^2 \leq 1$, which depends on the problem considered. The alternative to this biased estimator is an estimator that uses two LCU circuits, one for the positive and one for the negative coefficients and subtracts their average outcome. 
If we instead use a circuit with two LCU registers instead -- see Fig.~\ref{fig:positive_negative_lcu} (b) --, we can encode positive and negative coefficients $w^{\pm}_i$:
\begin{align}
    w^{\pm}_i = \frac{a_i^{\pm}}{\sum_{i=1}^{L^{\pm}} a_i^{\pm}} = \frac{a_i^{\pm}}{\norm{\boldsymbol{a}^{\pm}}_1},
\end{align}
in the first and second register with $r^{\pm} = \lceil \log(L^{\pm}) \rceil$ qubits respectively, whose unitaries are defined as $W_{\boldsymbol{a}}^+$ and $W_{\boldsymbol{a}}^-$, s.t. $\ket{a^{\pm}} = W_{\boldsymbol{a}}^{\pm} \ket{0}_{L^{\pm}}$. For $R_1=H$ and $R_2=I$, if 0 and 1 are the outcome of the control qubit, the circuit estimates are:
\begin{align}
    &e_0 = \frac{1}{2} \sum_{i=1}^{L^{+}},\abs{w_i^{+}} m_i^{+}  \label{eq:e0unbiased} \\
    &e_1 = \frac{1}{2} \sum_{i=L^+ + 1}^{L} \abs{w_{i}^{-}} m_i^{-}, \label{eq:e1unbiased}
\end{align}
and, as a result, $C = 2(\norm{\boldsymbol{a}^{+}}_1 e_0 - \norm{\boldsymbol{a}^{-}}_1 e_1)$.

\section{Quantum sampling and amplification}\label{sec:qs}

\subsection{Amplitude amplification and estimation}\label{sec:ae}
Amplitude amplification (AA) \cite{Brassard2002} describes a class of algorithms based on Grover search \cite{Grover1996} that allows to perform faster sampling on quantum computers \cite{cui2024quantummontecarlointegration, nannicini2025quantumalgorithmsoptimizers}. If applied to estimation problems in the context of quantum circuits, it is often referred to as amplitude estimation (AE) \cite{Brassard2002} and can be used to sample the information hidden inside of amplitudes of quantum states of the type:
\begin{align}\label{eq:AEst}
    \ket{\psi} = \mathcal{V}\ket{0}_{n+1} = \sqrt{p} \ket{1} \ket{\psi_1} + \sqrt{1 - p} \ket{0} \ket{\psi_2},
\end{align}
where $p$ is a positive value that needs to be estimated and $\ket{\psi_1}$, $\ket{\psi_2} \in \mathbb{C}^{2^n}$ and $\ket{\psi} \in \mathbb{C}^{2^{n+1}}$ are general $n$-qubit quantum states. There exists a quantum algorithm that outputs to an estimator $\Tilde{p}$ of $p$ \cite{Brassard2002}, which is bounded from above as follows given two integers $k$ and $n_q$, where $k$ is related to the success probability and $n_q$ is the number of oracle queries:
\begin{align}\label{eq:amplitude_estimation}
    \abs{p - \Tilde{p}} \leq 2\pi k \frac{\sqrt{p ( 1 - p)}}{n_q} + \frac{\pi^2k^2}{n_q^2}.
\end{align}
The AE routine succeeds with probability $p_{\text{succ}} = 8/\pi^2$
if $k=1$ and $p_{\text{succ}} \geq 1 - \frac{1}{2(k-1)}$ if $k > 1$. The algorithm that performs multiple applications of the Grover operator for AE is simply given by \cite{Brassard2002}:
\begin{align}\label{eq:Q_operator}
    \mathcal{Q} = -\mathcal{V} S_0 \mathcal{V}^{-1} S_{\chi},
\end{align}
where $S_{\chi}$ is the reflection operator that switches the sign of the state if the state is equal to the target, i.e., the state $ \sqrt{p} \ket{1} \ket{\psi_{1}}$ in Eq.~\eqref{eq:AEst} (also known as the \textit{good} state) \cite{Brassard2002, Suzuki2020} and the operator $S_0$ flips the sign of the zero state. Generally, the routine for AE is defined for pure states and not for mixed inputs, such as those used in DQC1 (Deterministic Quantum Computation with one clean qubit \cite{Knill1998}), a quantum complexity class which is closely related to LCU circuits -- see also Fig.~\ref{fig:dqc1-rev} (a) and (b) and Appendix \ref{sec:samplingtraces} for a discussion about DQC1. However, the routine that generates the mixed state can be expressed as tracing out a pure state of higher dimensionality than the one considered as input to the sampling circuit. As a result, to apply AE to one of our problems, we have to consider the entire algorithm, the one that generates the mixed state and the one that generates the state in Eq.~\eqref{eq:U_aest}. As an example and as described in Ref.~\cite{Khatri2019}, the $n$-qubit maximally mixed state $\rho = \frac{\mathbb{I}}{d}$ can be generated by creating a $2n$-qubit pure state
\begin{align}\label{eq:bell_gen}
    \ket{\phi^{+}} = \frac{1}{\sqrt{d}} \sum_{i=0}^{d-1} \ket{i} \otimes \ket{i},
\end{align}
and tracing out one of the $n$-qubit subsystems. So the $(n+1)$-qubit DQC1 circuit can be written as a $(2n + 1)$-qubit circuit by including the generation of the mixed state in its original routine. 

If we consider NISQ circuits, which are not fault-tolerant, we quickly realize that the depth of the Grover's algorithm, which lies at the core of the AE routine, cannot be easily implemented on such circuits. However, alternative versions of AE can be potentially applied on noisy quantum devices \cite{Suzuki2020, Grinko2021, Plekhanov2022variationalquantum}. AE can be used to estimate means of data sampled from arbitrary distributions \cite{Shyamsundar2021} and has a potentially vast range of industry-relevant applications, e.g., in finance. In this context, it can also be used to sample quadratically faster from stochastic processes \cite{Stamatopoulos2020optionpricingusing}.  

\subsubsection{SE with amplitude amplification}\label{ae:SE}
We consider here SE for the trace estimation problem. In this case we need to apply the AE scheme to each SE circuit with index $i=1,...,L$, after encoding the estimation problem appropriately. It is clear that for each term this bound is quadratically better than classical quasi-Monte Carlo sampling. We assume that each $n$-qubit quantum circuit corresponding to each SE term can be written as \cite{Suzuki2020}:
\begin{align}\label{eq:U_aest}
    \mathcal{V}_i \ket{0}_{n+1} = \sqrt{p_i} \ket{1} \ket{\psi_{i,1}} +  \sqrt{1 - p_i} \ket{0} \ket{\psi_{i,2}},
\end{align}
for $i=1,...,L$ and for two quantum states $\ket{\psi_{i,1}}$ and $\ket{\psi_{i,2}}$  that are determined by $\mathcal{V}_i$. This is always possible by extending the original circuit Hilbert space using an additional ancilla qubit. 

For each amplitude encoded in a quantum circuit $\mathcal{V}_i$ we have:
\begin{align}\label{eq:amplitude_estimation_i}
    \abs{p_i - \Tilde{p}_i} \leq 2\pi k \frac{\sqrt{p_i( 1 - p_i)}}{n_q^{(i)}} + \frac{\pi^2k^2}{[n_q^{(i)}]^2},
\end{align}
where $\tilde{p}_i$ is the QAE estimator of $p_i$ and $n_q^{(i)}$ are the queries for the $i$th AE problem.  Our aim is to achieve an overall precision $\epsilon$ of the full estimation. For $i=1,...,L$, $n_q^{(i)} \geq \text{min}_{i=1,...,L} \left[n^ {(i)}_q\right] =: n_q $, we have:
\begin{align}\label{eq:epsilon_ae}
    &\epsilon = \vert \sum_{i=1}^L p_i - \sum_{i=1}^L \Tilde{p}_i \vert \overset{\Delta}{\leq} \sum_{i=1}^L \vert p_i - \Tilde{p}_i \vert  \leq \\ \nonumber &\sum_{i=1}^L \frac{2 \pi k}{n_q}  \sqrt{p_i ( 1 - p_i)} + \frac{k^2L\pi^2}{n_q^2},
\end{align}
where $n_q$ is the number of oracle calls for each circuit $i=1,...,L$ and $k \in \mathbb{N}_{>0}$ and $\Delta$ denotes the use of the triangle inequality. We use here balanced weights ($a_i = 1$ for $i=1,...,L$), but the result can be generalized straightforwardly to the weighted case. The estimation is probabilistic, i.e., it is successful with probability $p_{\text{succ}}$. In the next steps we describe the working principles of the AE routine, see also Ref.~\cite{Brassard2002}. The probability of success of the AE algorithm can be bound from below using the trigamma function $\psi^{(1)}(k) = \sum_{s=k}^{\infty} 1/s^2$ \cite{Brassard2002, AbramowitzStegun}. For any $k \in \mathbb{N}_{>0}$ we have:
\begin{align}
    p_{\text{succ}} \geq 1 - \frac{1}{2}\psi^{(1)}(k).
\end{align}
As the trigamma function fulfills the identity \cite{Brassard2002}:
\begin{align}
    \psi^{(1)}(k) \leq \frac{1}{k - 1},
\end{align}
the lower bound for the success probability can be written as
\begin{align}
    p_{\text{succ}} \geq 1 - \frac{1}{2(k-1)},
\end{align}
where $p_{\text{succ}}$ is the success probability of one quantum AE algorithm running on one of the estimation problems. For $k=1$ we have $p_{\text{succ}} = \frac{8}{\pi^2}$ and considering that the variance $p_i(1 - p_i)$ is always smaller than $\frac{1}{4}$ for $0 \leq p_i \leq 1$, we can use Eq.~\eqref{eq:epsilon_ae} to find a bound for the precision $\epsilon$ of the estimation:
\begin{align}
    \epsilon \leq \frac{\pi L}{n_q} + \frac{ \pi^2 L}{n_q^2},
\end{align}
and after solving the quadratic equation for small $\epsilon$ we obtain for the bound $\frac{\pi L}{n_q} \approx \epsilon$. Naively, the upper bound for the number of evaluations needed seems to be $O(L^2/\epsilon)$ -- since we additionally consider the evaluation of $L$ parallel AE routines -- but a more detailed treatment -- see Appendix \ref{appx:SE_AE} for the derivation from the point of view of Maximum Likelihood Quantum Amplitude Estimation (MLQAE) \cite{Suzuki2020} -- leads to a better bound which matches previous results obtained in other similar contexts \cite{Huggins2022}:
\begin{equation}
    S=O(L\sqrt{L}/\epsilon),
\end{equation} that is, $L$ circuits with $n_q$ queries calls each, in analogy with SE -- to have convergence with probability $p_{\text{succ}}$. 

\subsubsection{LCU with amplitude amplification}

In the case of the circuit shown in Fig.~\ref{fig:SE_vs_lcu}, an amplification oracle can be constructed using the methodology given in Ref.~\cite{Wada2024quantumenhancedmean, Suzuki2020}, which uses an ancilla qubit to encode the estimation of arbitrary observables. 
In classical estimation, the renormalization factor causes the variance of the LCU estimator to grow quadratically with the number of estimates. In the case of AE, it only increases the error linearly. 
\begin{align}\label{eq:U_aest_LCU}
    \mathcal{V}_{\text{LCU}} \ket{0}_{b} = \sqrt{\bar{p}} \ket{1} \ket{\psi_{\text{LCU},1}} +  \sqrt{1 - \bar{p}} \ket{0} \ket{\psi_{\text{LCU},2}},
\end{align}
where $b = (n+1)r$, $\bar{p} = \sum_{i=1}^L w_i p_i$
and for two quantum states $\ket{\psi_{\text{LCU},1}}$ and $\ket{\psi_{\text{LCU},2}}$ that are defined by the action of $\mathcal{V}_{\text{LCU}}$.
An amplified LCU circuit $\tilde{C}^{\text{AE}}_{\text{LCU}}$ estimates the same value as $C_{\text{LCU}}$ given in Eq.~\eqref{eq:c_lcu}, but with a lower variance:
\begin{align}\label{eq:amplified_lcu}
    & \hspace{0.5cm} C = \EX[ \tilde{C}_{\text{LCU}}] = \EX[\tilde{C}^{\text{AE}}_{\text{LCU}}]\\
    &\Var(\tilde{C}^{\text{AE}}_{\text{LCU}}) = \norm{\boldsymbol{a}}_1^2 \frac{\bar{p}(1 - \bar{p})}{n_q^2} = \epsilon^2,
\end{align}
where $C_{\text{full}}$ is the corresponding non-renormalized cost function and $n_q$ the number of queries. As usual the variance of the estimator determines the error threshold $\epsilon$. The number of samples/queries is thus quadratically smaller than in standard sampling and as such $n_q \sim  O(\frac{\norm{\boldsymbol{a}}_1}{\epsilon}) = O(L/\epsilon)$.
We immediately see that the sampling complexity of this estimator scales as $O(L/\epsilon)$ for the non-renormalized estimation. 
We can briefly analyze a further important difference between the two algorithms. In SE case, we run $L$ different AE routines, each one with a success probability $p_{\text{succ}} \geq \frac{8}{\pi^2}$.  In the LCU case, we apply the routine to one single circuit. Clearly, the second approach has an advantage in terms of overall probability of failure. In fact, in the former case there are $L$ independent algorithmic runs that can return the wrong outcome. In the latter case, however, only one circuit with a corresponding AE routine and success probability $p_{\text{succ}}$ is used.

\subsubsection{Single-ancilla LCU}
Estimating a quantum cost function that depends on parameters $\btheta$ requires projective measurements of a density matrix whose dynamics is described by, e.g., a variational circuit. The estimation process can be accomplished by preparing independent circuits or by means of the LCU approach, which requires ancilla qubits to encode the linear combination. A third approach, the single-ancilla LCU (SA-LCU), is presented in Ref.~\cite{Chakraborty2024implementingany}. It uses random sampling of unitaries from the set of weights that are used in the the linear combination. Ref.~\cite{Chakraborty2024implementingany} shows that it provides us with the correct result in the case of projective measurements. In this case, given a collection of unitaries $\mathcal{U}_S = \{U_1, ... U_L\}$ and normalized weights $\{w_1, ..., w_L\}$, such weights induce a probability distribution $\pi_{\mathcal{U}}$ over the set of unitaries \cite{Chakraborty2024implementingany}. 
Randomly sampling a unitary according to the distribution of weights and implementing it on the circuit in Fig.~\ref{fig:SE_vs_lcu} without using the linear combination register returns the expected value:
\begin{align}\label{eq:SA-LCU}
    \bar{p} = \EX_{U \sim \pi_{\mathcal{U}}} \left[p(U) \right] = \sum_{i=1}^L w_i p_i,
\end{align}
where $p_i$ is the probability of finding the $i$th state, i.e., $U_i \rho U_i^{\dagger}$ in the eigenstate of $\Pi$. 
This is equivalent to preparing a mixed state $\rho_{\text{LCU}}$ of the type:
\begin{align}
    \rho_{\text{LCU}} = \sum_{i=1}^L w_i U_i \rho U_i^{\dagger},
\end{align}
and performing a measurement $\Pi$ on it. In other words, if it is possible to efficiently generate this mixed state, then one can obtain the same kind of parallelization as with LCU. This approach resembles the one used for trace estimation in DQC1, where we prepare instead a maximally mixed state $\rho = \frac{\mathbb{I}}{d}$. If we assume a large number of samples and only consider the variance with respect to the sampled random unitaries \cite{Mele2024introductiontohaar}, we obtain the expression:
\begin{align}\label{eq:var_lcu_ancfree}
    \Var_{U \sim \pi_{\mathcal{U}}} \left[p(U)\right] = \EX_{U \sim \pi_{\mathcal{U}}}  \left[p(U)^2 \right] - \EX_{U \sim \pi_{\mathcal{U}}}  \left[p(U) \right]^2,
\end{align}
which gives
\begin{align}\label{eq:no_tot_var_lcu}
    \Var_{U \sim \pi_{\mathcal{U}}} \left[ p(U) \right] = \sum_{i=1}^L w_i p_i^2 - \bar{p}^2.
\end{align}
However, this derivation is not valid in the few-shot limit, as we need to consider the nested sampling of both $U$ and the shots coming from the quantum circuit that we use to estimate $p(U)$. In the latter case, the variance is computed using the law of total variance \cite{soch2024statproofbook}. By defining $\bar{x}$ as the random variable associated with a single shot from the circuit with unitary $U$, we have
\begin{align}\label{eq:sa_lcu_full}
    \Var(\bar{x}) = \EX_{U \sim \pi_{\mathcal{U}}}\left[ \Var_q(\bar{x}) \right] + \Var_{U \sim \pi_{\mathcal{U}}} \left[ {\EX_q[\bar{x}]} \right],
\end{align}
where $\Var_q(\bar{x})$ and $\EX_q[\bar{x}]$ refer to the quantum statistics of the single-shot circuit sampling. Eq.~\eqref{eq:sa_lcu_full} results in:
\begin{align}
    &\Var(\bar{x}) = \bar{p}(1 - \bar{p}) ,
\end{align}
where $\bar{p} = \EX_{U \sim \pi_{\mathcal{U}}}  \left[p(U) \right]$, which is exactly the variance of the LCU circuit. The sampling complexity is therefore $O(\frac{1}{\epsilon^2})$ for the renormalized version and $O(\frac{L^2}{\epsilon^2})$ for the non-renormalized one. We observe that due to $p_i^2 \leq p_i$, Eq.~\eqref{eq:no_tot_var_lcu} is always smaller than Eq.~\eqref{eq:var_lcu_ancfree}. 

Let us now consider the same problem from the point of view of amplitude estimation, that is, we implement the SA-LCU estimator given in Eq.~\eqref{eq:SA-LCU} and perform amplitude amplification on it. In this case, the speed up seems not to be present, due to the behaviour of the variance due to the law of total variance \cite{soch2024statproofbook, heinrich2023randomizedbenchmarkingrandomquantum}. We define the total number of queries $n_q$ for AE and the number of random unitary samples $N_U$ needed by this approach to reach a certain precision $\epsilon$. This is not the same as the full LCU, as the sampling of the unitaries from $\pi_{\mathcal{U}}$ is performed classically. The estimator $\tilde{p}_{\text{SA}}$ with $\EX_{U}[\tilde{p}_{\text{SA}}] = \bar{p}$ for the amplified ancilla-free LCU that outputs the amplified value $p_{\text{AE}}(U)$ given a unitary $U \sim \pi_{\mathcal{U}}$ has a variance of:
\begin{align}\label{eq:sa-lcu-totvar}
    &\Var{(\tilde{p}_{\text{SA}})} = \frac{1}{N_U} \Big(\EX_U[p_{\text{AE}}(U)]\left(\frac{1}{n_q^2} - \EX_U[p_{\text{AE}}(U)]\right) + \\  & \hspace{1cm} +\left(1 - \frac{1}{n_q^2}\right)\EX_U[p_{\text{AE}}(U)^2]\Big) \leq \frac{\EX_U[p_{\text{AE}}(U)^2]}{N_U}. \nonumber
\end{align}
For the case $M=1$ we retrieve the variance of the ancilla-free and ancilla-based LCU estimator. It is therefore clear that there is no advantage in increasing the number of samples from the AE estimator, as the reduction in the variance is purely classical and controlled by the number of unitaries sampled from the corresponding distribution rather than the circuit shots. Interestingly, this means that any average of amplified quantum cost functions retains this property. As a consequence, such estimation problems can be tackled, in theory, by using exclusively \textit{single-shot} estimation. \\
On the other hand, the estimator that encodes the sum of unitaries on the quantum circuit can be amplified to the Heisenberg limit. As the ancilla-free LCU can implement the mean value of any observable, we conclude that for any average value of an observable $\EX_{\boldsymbol{\lambda} \sim P} \left[ \langle \mathcal{O}(\lambda) \rangle \right]$ (for example, the average over a dataset in QML), single-shot estimators are a better choice, provided the number of data points available is high enough. This is true for both amplified and non-amplified nested estimation problems \cite{heinrich2023randomizedbenchmarkingrandomquantum}.

\subsection{Example: Quantum machine learning}
Quantum machine learning (QML) refers to the class of algorithms that allow to approximate functions on quantum computers using appropriate families of quantum circuits given a data set. The goal is to perform tasks such as classification, clustering and regression with a certain speedup compared to the classical case  \cite{schatzki2021entangleddatasetsquantummachine, Huang2020}. In this context, we distinguish two main approaches: (1) Algorithms that make use of known quantum routines, such as the HHL algorithm \cite{Harrow2009} and Grover search \cite{Grover1996} and (2) variational algorithms that encode a problem using quantum circuits and variational families of unitaries that are then optimized accordingly. The latter class of algorithms is then commonly sub-divided in explicit and implicit models \cite{Jerbi2023qmlbkm}. In these models, we try to construct an approximator $f_{\btheta}$ with parameters $\btheta \in \mathbb{R}^N$ -- see also Eq.~\eqref{eq:product_gates_ansatz}. Explicit models are defined by a mapping with input $\boldsymbol{x} \in \mathbb{R}^m$:
\begin{align}\label{eq:qml_explicit}
    f({\btheta}, \boldsymbol{x}) = \tr{\rho(\boldsymbol{x}) \mathcal{O}({\btheta})},
\end{align}
with $\mathcal{O}(\btheta) = V(\btheta) \mathcal{O} V(\btheta)^{\dagger}$, while implicit models are given by a linear combination of functions of the data points $\boldsymbol{x}_1, ..., \boldsymbol{x}_M \in \mathbb{R}^m$: $f_{\boldsymbol{\alpha}, \mathcal{D}}(\boldsymbol{x}) = \sum_{m=1}^M \alpha_m k(\boldsymbol{x}, \boldsymbol{x}_m)$
where the kernel is defined as $k(\boldsymbol{x}, \boldsymbol{x}_m) = \tr{\rho(\boldsymbol{x}) \rho(\boldsymbol{x}_m)}$ and $\boldsymbol{\alpha}$ is a vector of parameters. Both quantities can be sampled using the LCU circuit -- see Fig.~\ref{fig:SE_vs_lcu} (b). 
\begin{figure}
    \centering
    \includegraphics[width=0.45\textwidth]{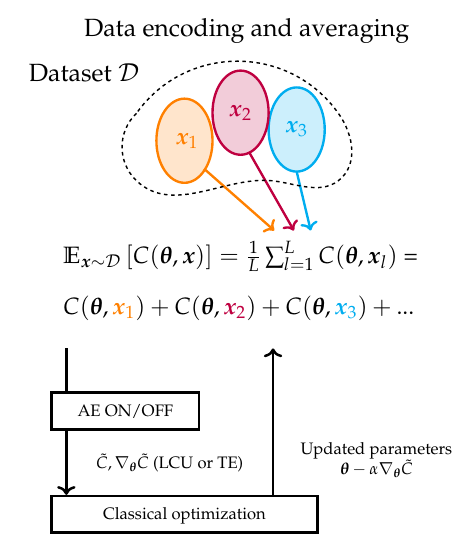}
    \caption{A representation of data encoding and sampling for a QNN \cite{qiskit2024} (explicit model). Classical data needs to be loaded on the quantum sampler/estimator. This procedure can be quite expensive due to the input data size and may require some classical pre-processing \cite{Jerbi2023qmlbkm}, but it can be realized with both LCU and SE approaches. In addition to the two estimators, (near-term) AE routines can be considered \cite{Oshio2024, Brassard2002, Suzuki2020, Grinko2021}. The cost function is controlled by variational parameters $\boldsymbol{\theta}$ that are optimized classically using gradient-based \cite{li2024efficientquantumgradienthigherorder, Motzoi2011, dalgaard2020hessian} or gradient-free \cite{Caneva2011, NelderMead1965} algorithms. Different types of cost functions, such as the one given in Eq.~\eqref{eq:qml_cost}, can be estimated using the LCU method given in Eq.~\eqref{eq:pos_neg_C} or variants thereof.}
    \label{fig:qml_diagram}
\end{figure}
As an illustrative example, we consider the problem of regression for quantum cost functions. When dealing with  (quantum) supervised learning, we are usually given a dataset $\mathcal{D}$ made of pairs of inputs $\boldsymbol{x}_1 \in \mathbb{R}^m, ..., \boldsymbol{x}_L \in \mathbb{R}^m$ and corresponding labels $y_1 \in \left[0,1\right], ..., y_L \in \left[0,1\right]$, i.e., $\mathcal{D} = \{ (\boldsymbol{x}_i, y_i), i=1,...,L\}$. We consider here only two possible label options, i.e., the problem corresponds to binary classification and requires two output qubits. The generalization to the $k$-nary classification problem can be achieved by considering a vector $\boldsymbol{y} \in \{-1,1\}^k$ and using $\log(k)$ qubits, where $k$ is the number of different possible labels in the data. We assume that there exists a function $f: \mathbb{R}^N \times \mathbb{R}^{m} \mapsto [0,1], \left(\btheta ,  \boldsymbol{x}\right) \longmapsto y' = f(\btheta, \boldsymbol{x})$ that depends on parameters $\btheta$ that takes $\boldsymbol{x}_i$ as inputs and outputs guess values $y_i'$. The regression problem is usually solved by minimizing the mean-squared-error loss \cite{Kay1993}, that is by minimizing the following cost function with respect to the model parameters $\btheta$:
\begin{align}\label{eq:qml_cost}
    C_{\text{ML}}(\btheta) = \frac{1}{2L} \sum_{i=1}^L (y_i - f(\btheta, \boldsymbol{x}_i))^2.
\end{align}
The solution $\btheta^*$ to the minimization problem is given by $\btheta^* = \text{argmin}_{\btheta} \ C_{\text{ML}}(\btheta)$ and can be found using different optimization procedures \cite{Ostaszewski2021structure}. Let us now consider the case in which the model is given by a QNN circuit. In this case, the function approximator $f$ is given by a family of parametrized quantum circuits -- see Eq.~\eqref{eq:qml_explicit} --
where the input-dependent density matrix $\rho(\boldsymbol{x})$ is constructed by applying a so-called feature map $V_{\boldsymbol{\phi}}(\boldsymbol{x})$ -- $\boldsymbol{\phi}$ are specific feature encoding parameters -- on the initial quantum state $\rho$ such that $\rho(\boldsymbol{x}) = V_{\boldsymbol{\phi}}(\boldsymbol{x}) \rho V^{\dagger}_{\boldsymbol{\phi}}(\boldsymbol{x})$.
Compared to the standard cost functions of classical machine learning, this cost function is computed as an average over measurement outcomes.
This type of encoding has been used extensively in QML and it is available in Ref.~\cite{qiskit2024}. If we consider an observable $\mathcal{O} = Z_{\text{prod}}$ and measure therefore the parity of the circuit output as a function of the input data, we see that we can simplify the cost function in Eq.~\eqref{eq:qml_cost} using the fact that the squared terms sum to one we are left with the expression \cite{schatzki2021entangleddatasetsquantummachine}:
\begin{align}
    &C_{\text{ML}}(\btheta) = \frac{1}{2} - C_1(\btheta) + C_2(\btheta),\\ 
    &C_1(\btheta) = \frac{1}{L} \sum_{l=1}^L f(\btheta, \boldsymbol{x}_i)y_i\\
    & C_2(\btheta) = \frac{1}{2L} \sum_{l=1}^L f(\btheta, \boldsymbol{x}_i)^2.
\end{align}
It is clear that this is nothing else than a linear combination of estimates that contains both positive and negative coefficients. It can therefore be computed using both of the methodologies outlined in \ref{sec:sampling_sest} (standard approach, SE) and \ref{sec:sampling_lcu} (LCU), respectively. A schematic diagram of a QML sampling process and optimization routine is given in Fig.~\ref{fig:qml_diagram}. \\
The estimation of such quantity represents a possible application of the LCU approach. 
In our case, we make use of the simplified cost function:
\begin{align}\label{eq:qml_lcu_cost}
    C_1(\btheta) = 1 - \frac{1}{L} \sum_{i=1}^L f(\btheta, \boldsymbol{x}_i) y_i.
\end{align}
The second part of the cost function can be sampled by either adding a control operation to the LCU circuit or by sampling with two different circuits. The first is a LCU circuit that implements Eq.~\eqref{eq:qml_lcu_cost}. The second a is a circuit that implements $C_2(\btheta)$. The latter can be realized by preparing the tensor product of the QNN unitary that is used to parameterize $f(\btheta, \boldsymbol{x})$ conditioned on the LCU register. The LCU register implements the summation process and the tensor product allows us to estimate $f(\btheta, \boldsymbol{x})^2$.

\subsubsection{Sampling cost functions from datasets}

We test the implementation of LCU and SE on a QML problem. We employ the Quantum Neural Network (QNN) given in \textsc{qiskit} \cite{qiskit2024}, which is composed of a ZZ feature map for encoding the input data $\boldsymbol{x}$ sampled from the data set and a variational circuit $V(\btheta)$. In this context, we are interested in simply evaluating the average cost function over the $(\boldsymbol{x}_i,y_i)$ input values for $i=1,...,L$ and estimating its variance using the expressions introduced in Eq.~\eqref{eq:SE_estimator} for SE and Eq.~\eqref{eq:var_sz_lcu} for the LCU estimator applied on the cost function in Eq.~\eqref{eq:qml_lcu_cost}. Since the linear combination contains both positive and negative values, we use the expression given in Eq.~\eqref{eq:pnlc_sampling} and modify the variance accordingly. As before, in this context we have the inner expectation value, corresponding to the average over the quantum statistics and the external one, corresponding to the average over the classical sampling. \\
For the plots in Fig.~\ref{fig:qml} we use four different datasets: (I) normally distributed variables $\boldsymbol{x} \sim \mathcal{N}(\boldsymbol{0}_d, \mathbb{I}_d)$, for which we have $y_i = f(\boldsymbol{x}_i) = 2 \cdot \text{sign}(\sum_j x^j_i) - 1$, i.e., the sign of the sum of vector components, -- (II) -- autocorrelated variables, where each variable $\boldsymbol{x}_i$ is defined by the following relation:
\begin{align}\label{eq:autocorr}
    \forall i=2,...,L: \ \boldsymbol{x}_i = i \cdot \boldsymbol{x}_0,\  \boldsymbol{x}_0 \sim \mathcal{N}(\boldsymbol{0}_d, \mathbb{I}_d),
\end{align}
and again $y_i = f(\boldsymbol{x}_i) = 2 \cdot \text{sign}(\sum_j x^j_i) - 1$. The autocorrelated data is used to analyze the behaviour of SE variance in presence of correlations between different circuits. (III) data sampled from the IRIS \cite{online_iris_53} dataset -- (IV) -- data sampled from the MNIST data set \cite{deng2012mnist} after performing a PCA  transformation \cite{Bishop2007} to achieve dimensionality reduction \cite{Jerbi2023qmlbkm}.
\begin{figure*}
    \hspace{-1cm}
    \includegraphics[width=\linewidth]{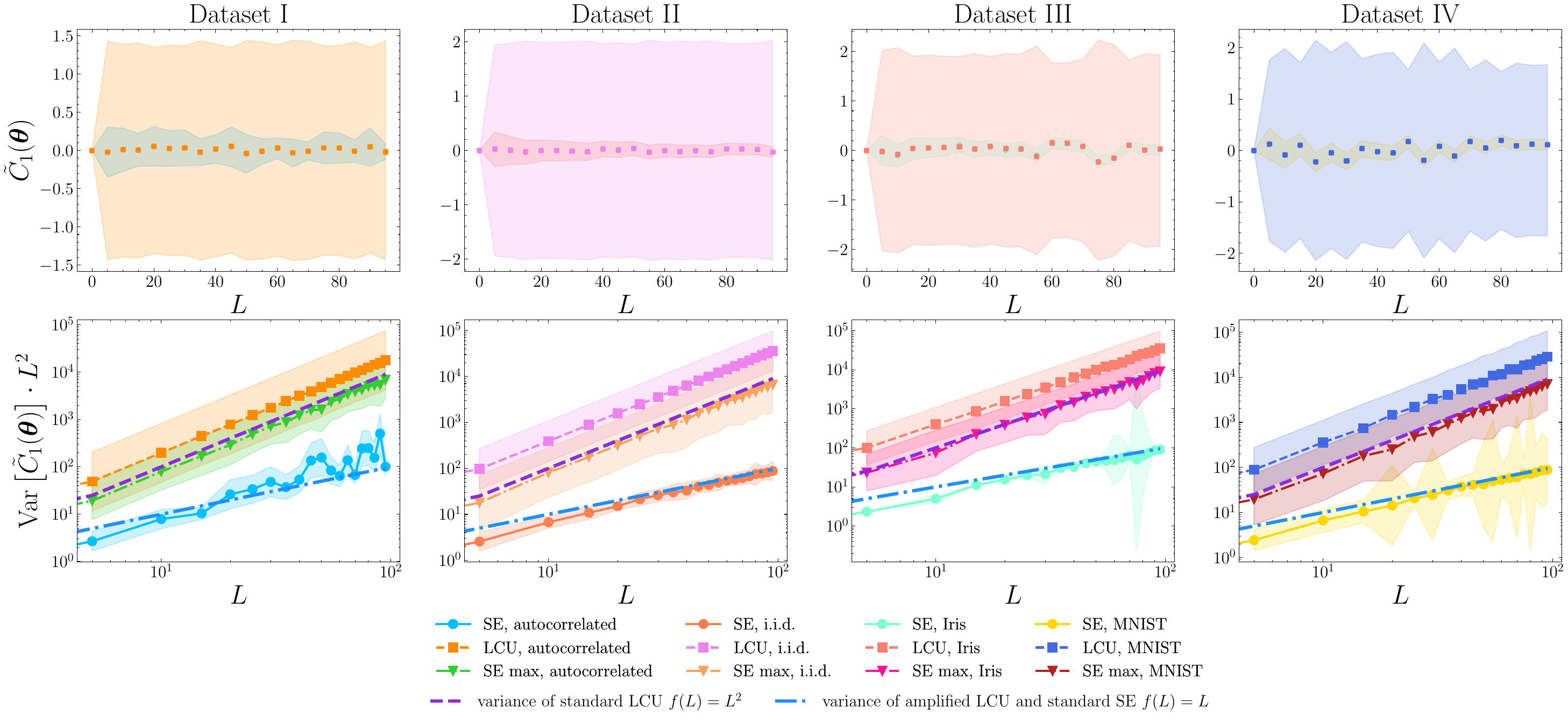}
    \caption{An example of estimation performed with (simulated) quantum circuits in \textsc{qiskit} \cite{qiskit2024} for the regression cost function $C_1(\boldsymbol{\theta})$ in Eq.~\eqref{eq:qml_lcu_cost}, where we use 10000 shots per circuit and from 2 to 100 estimates. The values for datasets I, II and IV are also averaged over 50 different sampling runs. (Top line) Mean values and sampling complexity of the estimator $\tilde{C}_1(\boldsymbol{\theta})$ of $C_1(\boldsymbol{\theta})$ for a random $\boldsymbol{\theta} \sim \mathcal{N}(\boldsymbol{0}_N,\mathbb{I}_N)$ -- see Eq. \eqref{eq:qml_lcu_cost}. The sampling complexity is defined as the asymptotical total number of queries needed to estimate a quantity up to a fixed precision $\epsilon$. (Bottom line) sampling complexities of estimators according to Eqs.~\eqref{eq:var_te} and \eqref{eq:var_sz_lcu} for different types of datasets used as inputs to the QNN for both SE (blue line) and the LCU (orange line) estimators. In addition, the maximum possible SE variance is also shown (green line), and, as expected from Eq.~\eqref{eq:cauchy_schwarz}, it always lies below the LCU variance. Shaded regions show the uncertainty for both mean (standard deviation) and variance (here we use an approximate estimate of the fourth moment for SE and LCU, while for the maximum of SE we use the fourth moment of LCU as an upper bound). Column (I) shows the results of sampling auto-correlated quantities as shown in Eq.~\eqref{eq:autocorr}. (II) shows the results for sampling i.i.d.~Gaussian variables. (III) shows the results of sampling from the IRIS dataset and (IV) from the MNIST dataset (whose dimensionality is reduced first with a PCA transformation, see also the procedure used in Ref.~\cite{Jerbi2023qmlbkm}). We observe that in all cases the variance grows linearly compared to the variance of the LCU, which is always quadratic. In the auto-correlated case, the particular structure of the data seems to induce a superlinear behaviour, most likely due to the fact that estimates that are functions of the same unitaries (or highly correlated unitaries) are considered, see also the discussion in Appendix \ref{sec:sampling_PoissonBinomial}.
    }
    \label{fig:qml}
\end{figure*}
The results of our simulations are shown in Fig.~\ref{fig:qml}. We used 10000 shots for each circuit evaluation, and a number ranging from 2 to 100 of estimates of classical data that are summed together. In these plots we show different types of sampled data: in the first row we plot the values of the cost functions as a function of the number of samples for the four datasets (autocorrelated, uncorrelated, \textsc{IRIS} and \textsc{MNIST}, respectively) for both LCU and SE. The values for datasets I, II and IV are also averaged over 50 different sampling runs. Shaded regions show the standard deviation and the fourth moment of the estimates. We see that the standard deviation for the LCU estimator is always larger than the one of SE. The second row of plots shows the variance of the LCU estimators, SEs and also the maximum possible variance that hypothetic classically correlated (Rademacher) variables $X_1,...,X_L$ with the same mean values as those used for SE. As we showed already, the maximum of such quantity cannot exceed the LCU variance, which is confirmed by the result in each plot. 

Since the cost functions are initially renormalized, in order to obtain the sampling complexity we need to multiply them with an appropriate re-scaling parameter $L^2$ for the LCU estimator and $L^3$ for SE -- because we need to account for the sampling of $L$ different circuits, as shown in Eq.~\eqref{eq:complexitySE}. As a result, we see that the sampling complexities of the two estimators are asymptotically the same.
The maximum possible variance that SE can achieve is due to the Cauchy-Schwarz inequality -- see also Eq.~\eqref{eq:cauchy_schwarz} (assuming classical correlations between the samples are possible). In this case, however, we have to consider that as a result of the correlations, some operations may commute, which would reduce the total number of copies needed to estimate their values. The extreme case is the estimation of the same (Pauli) observable multiple times: in this case the LCU and SE become the same estimator. Interestingly, this is also the (classical) case in which the so-called Poisson-Binomial probability distribution \cite{Wang1993} and its Binomial approximation are the same -- see the discussion in Appendix \ref{sec:sampling_PoissonBinomial}. This aspect helps us shed light on the different properties of LCU and SE and how these differences affect the scaling of estimation on quantum circuits. It also shows that full LCU is mostly indicated for sampling and estimation tasks where full or at least near-term amplitude amplification algorithms can be implemented.

\begin{figure*}[ht!]
    \hspace{-1cm}
    \includegraphics[width=18.5 cm]{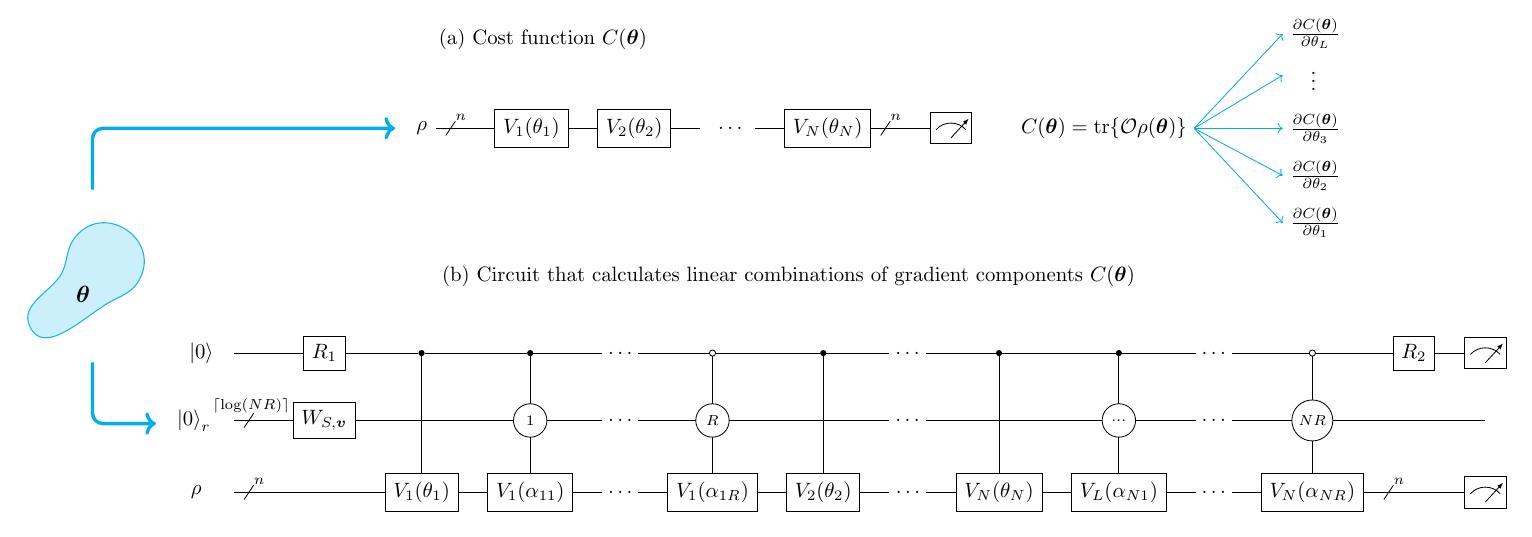}
    \caption{Schematic representation of the forward derivative circuit for a general quantum cost function given in Eq.~\eqref{eq:h_v_sigma_z}. The cost function circuit is pictured in (a). In the case of parameter-shift rules, the cost function is evaluated multiple times by shifting the parameter vector along the derivative axis for instance according to Eq.~\eqref{eq:psr}, and where $\alpha_{ir},\ i=1,..., N, r=1,...,R$ represent the shifts in the respective angles. The gradient circuit \cite{Somma2002} (b), instead, uses a LCU register (or two registers for positive-negative coefficients) with controlled operations to perform either parameter-shift rules or finite-differences \cite{Wierichs2022, LiJun2017, Izmaylov2021, bittel2022fast, Mari2021}, or to encode the gradient operator itself \cite{Schuld2019, li2024efficientquantumgradienthigherorder}. The gradient circuit uses multi-controlled gates on the values $1,...,NR$ (assuming w.l.o.g.~that each gate $V_1,...,V_N$ requires a total of $R$ shifts). Controlled operations on the values $1,..., NR$ can be implemented using binary encoding and $\lceil \log(NR) \rceil$ qubits. The LCU unitary $W_{S,\boldsymbol{v}}$ encodes both the coefficient matrix $S$ in Eq.~\eqref{eq:psr} and the forward derivative coefficients $\boldsymbol{v}$ in Eq.~\eqref{eq:for_dev}. To obtain the standard gradient of the quantum cost function the circuit output needs to be evaluated $N$ times with $N$ different binary input vectors. The forward derivative, instead, can be evaluated directly by encoding a vector $\boldsymbol{v}$ in the LCU register with $r=\lceil \log(NR) \rceil$ qubits (we assume here for the sake simplicity only positive linear combinations). The forward gradient can be estimated by using a random input $\boldsymbol{v} \sim Q$, where $Q$ is a suitable probability distribution, see Ref.~\cite{Baydin2018}.
    }
    \label{fig:forward_dev_circ}
\end{figure*}
\section{Gradients of quantum cost functions}\label{sec:grad_cost_functions}
In this Section, we consider the problem of sampling and estimating gradients of quantum cost functions. This problem has been discussed in several previous works \cite{Schuld2020, Schuld2019, Wierichs2022, Wiersema2024herecomessun, LiJun2017, Banchi2021measuringanalytic, crooks2019gradientsparameterizedquantumgates, Izmaylov2021, Kyriienko2021, Mitarai2018, Wierichs2022, preti2024hybrid, li2024efficientquantumgradienthigherorder}, both from the point of view of LCU-type approaches and SE. Here, we want to focus in particular on the circuit shown in Fig.~\ref{fig:forward_dev_circ} (b) and apply it to the following problems: the estimation of derivatives with forward propagation \cite{Baydin2018}, see Section \ref{sec:forward_prop}, its application in the calculation of general SU($d$) gradients \cite{Wiersema2024herecomessun}, see Section \ref{sec:sun_gradient_lcu} and Fig.~\ref{fig:sun_gradient_circuit}, and its application in quantum control \cite{Khaneja2005}, see Section \ref{sec:q_control_gradients}. These altogether complete the description of gradient estimation circuits using LCU approaches and allow for the estimation of arbitrary gradients of unitary operations.
\begin{figure}[ht!]
    \centering
    \includegraphics[width=0.5\textwidth]{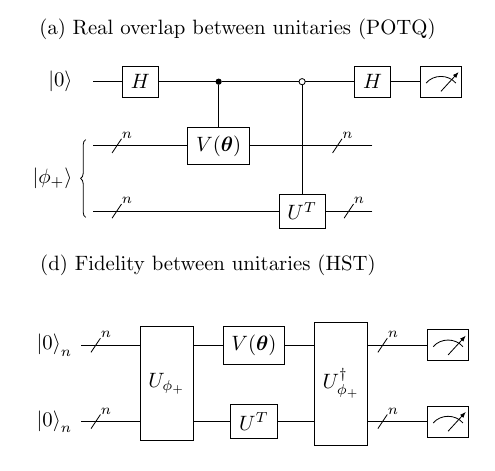}
    \caption{Relevant test circuits for quantum compilation introduced in Ref.~\cite{Khatri2019}: (a) Power-of-Two-Qubits (POTQ) circuit (the original article uses one qubit more, but the result is analogous) and fidelity or Hilbert-Schmidt test circuit (b). The operation $U_{\phi^{+}}$ \cite{Khatri2019} prepares the $2n$-qubit generalized Bell state $\ket{\phi^{+}}$ given in Eq.~\eqref{eq:bell_gen} starting from $\ket{0}^{\otimes 2n}$. See Ref.~\cite{li2024efficientquantumgradienthigherorder} for a general review of gradients based on Hadamard tests in the context of parametrized circuits.}
    \label{fig:test_circ_lcu}
\end{figure}
\subsection{Quantum circuit gradients}
For general gradients of quantum cost functions -- see Eq.~\eqref{eq:SEst} -- we can write:
\begin{align}\label{eq:cost_function_grad}
    \nabla_{\btheta} \langle \mathcal{O}(\boldsymbol{\theta}) \rangle = \Re{\tr(\nabla_{\btheta} V(\btheta) \rho V^{\dagger}\mathcal{O})},
\end{align}
which, as shown in Fig.~\ref{fig:test_circ_lcu}, can be again implemented by combining LCU and a Hadamard test \cite{li2024efficientquantumgradienthigherorder}.
Let us consider the outcome of the so-called Power-of-Two-Qubits circuit or POTQ \cite{Khatri2019} circuit, a specific type of Hadamard test that outputs the following overlap between unitaries $V(\btheta)$ and $U^T$ and which is shown in Fig.~\ref{fig:test_circ_lcu} (a):
\begin{align}\label{eq:potq_cost}
    p_{\pm}(\btheta) = \frac{1}{2}\left(1 \pm \frac{1}{d}\Re{\tr{V(\btheta) U^{\dagger}}}\right),
\end{align}
where $p_{\pm}(\btheta)$ are the probability of finding the control qubit in the state 1 or 0, respectively. The cost function corresponding to the real trace overlap can then be estimated as $C_{\text{POTQ}}(\btheta) = 1 - p_{+}(\btheta) - p_{-}(\btheta) = 1-\frac{1}{d}\Re{\tr{V(\btheta) U^{\dagger}}}$. The gradient of this cost function is given by:
\begin{align}\label{eq:potq_grad}
    \nabla_{\btheta} C_{\text{POTQ}}(\btheta) = -\frac{1}{d}\Re{\tr{\nabla_{\btheta} V(\btheta) U^{\dagger}}}.
\end{align}
Eq.~\eqref{eq:potq_grad} can be estimated using again a POTQ circuit where the unitary gradient $\nabla_{\btheta} V(\btheta)$ is implemented. 

We first consider the specific case in which the unitary $V(\btheta)$ has a single parameter $\theta$, i.e., $V(\theta) = \exp{-i \theta H}$ for a Hamiltonian $H$. Assuming the Hamiltonian can be written as $H = \sum_{i=1} h_i P_i$ \cite{Schuld2019, Schuld2020}, this gradient can be implemented using the circuit given in Fig.~\ref{fig:SE_vs_lcu} (b), since $\dot V(\theta)=-iHV(\theta)$. For a quadratic cost function, such as the standard gate infidelity -- see Fig.~\eqref{fig:test_circ_lcu} and Ref.~\cite{Khatri2019}:
\begin{align}\label{eq:infidelity}
    I(\btheta) = 1 - \frac{1}{d^2} \abs{\tr{V(\btheta)U^{\dagger}}}^2,
\end{align}
we have instead a gradient of the type:
\begin{align}\label{eq:fidelity_grad}
    \nabla_{\btheta} I(\btheta) = - \frac{2}{d^2} \Re{\tr{\left( V(\btheta)U^{\dagger} \right) \otimes \left( \nabla_{\btheta}V(\btheta)U^{\dagger} \right) }},
\end{align}
which can be sampled by modifying the POTQ circuit with a register of $4n$ qubits and encoding the gradient of the unitary operation on the circuit appropriately \cite{li2024efficientquantumgradienthigherorder}.

\subsection{Parameter-shift rules}
\vspace{-1cm}
Let us now consider the circuit product ansatz:
\begin{align}\label{eq:product_gates_ansatz}
    V_{\text{CA}}(\btheta) = \prod_{i=1}^N V_i(\theta_i),
\end{align}
with $\btheta \in \mathbb{R}^N$. In simulation, the gradient of this circuit can be effectively computed using the GRAPE algorithm \cite{Khaneja2005}. However, its evaluation on a real quantum device seems challenging, as it is not possible to naively store the results of intermediate unitaries on quantum experiments as we do in classical simulations, though some proposals for back-propagation on quantum circuits do exist \cite{bowles2024backpropagationscalingparameterisedquantum, abbas2023quantumbackpropagationinformationreuse}. 
In the case where Eq.~\eqref{eq:product_gates_ansatz} is valid, we see that the gradient of the cost function given in Eq.~\eqref{eq:cost_function_grad} simplifies to:
\begin{align}
    \pdv{}{\theta_i} \langle \mathcal{O}(\boldsymbol{\theta}) \rangle = i{\Tr{\mathcal{O} V_{(1)}^{(i)}(\btheta) [H_i, \tilde{\rho}_i(\boldsymbol \theta)] V_{(1)}^{(i)\dagger}(\btheta)}},
\end{align}
where $\tilde{\rho}_i(\boldsymbol \theta) = V_{(i+1)}^{(N)}(\btheta) \rho V_{(i+1)}^{(N)\dagger}(\btheta)$ and $V_k^j(\boldsymbol \theta) = \prod_{l=N}^L V_l(\theta_l)$.
By finding appropriate coefficients $S^i_k$ and parameter shifts $\alpha_{ik}$, we can use the interpolation ansatz \cite{Wierichs2022}:
\begin{align}\label{eq:psr}
    \pdv{}{\theta_i} \langle \mathcal{O}(\boldsymbol{\theta}) \rangle= \sum_{k=1}^R S_{ik} \langle  \mathcal{O}(\boldsymbol{\theta} + \alpha_{ik}\boldsymbol{e}_i) \rangle,
\end{align}
where $R$ is the number of parameter shifts (in Ref.~\cite{Wierichs2022} it is set as twice the number of distinct eigenvalue differences contained in the spectrum of the gate generator) and $\boldsymbol e_i,\ i=1,...,N$ are unit vectors of the parameter space $\mathbb{R}^N$ -- see Eq.~\eqref{eq:product_gates_ansatz}. We can substitute symbolically any type of product ansatz into Eq.~\eqref{eq:psr} to find a suitable formula for the parameter-shift vectors $\boldsymbol{\alpha}_1=(\alpha_{11}, ...,\alpha_{1R})^{\text{T}}, ..., (\alpha_{N1}, ...,\alpha_{NR})^{\text{T}}$ and the coefficient matrix $S_{ik}$. This expression is the general parameter-shift rule for quantum gradients \cite{Wierichs2022}, written using a different notation.
Moreover, even in the noiseless case, one needs $O(R^2N/\epsilon^2)$ circuit evaluations to compute every parameter variation. The same is true if the gradient is computed through, e.g., finite-difference methods \cite{Mari2021}, with the notable exception that the scaling for one finite-difference shift is $O\left[N/(\epsilon^2 \delta^2)\right]$ \cite{Kyriienko2021}, which is generally unfavorable due to the small size of $\delta$. In this case, a shift $\delta \sim O(1)$ is often the optimal choice \cite{Mari2021, Wiersema2024herecomessun}.

It is clear that Eq.~\eqref{eq:psr} can also be implemented using the LCU sampler -- see Eq.~\eqref{eq:pnlc_sampling}. Pauli gates are a simple application, as they only have two distinct eigenvalues \cite{LiJun2017}. The $n$-qubit XY gate or M{\o}lmer-S{\o}rensen gate offer a more interesting use case with $n + 1$ and $\lceil n/2 \rceil + 1 $ distinct eigenvalues \cite{preti2024hybrid, consiglio2025variationalquantumalgorithmsmanybody} respectively, which correspond to $O[\log(n)]$ control qubits in the LCU register. However, for arbitrary generators the eigenvalue structure may also scale more unfavourably.

For arbitrary generators with partially known spectra, several methods of reconstructing their landscape have been proposed. Often, these methods require performing linear transformations of a vector of estimates, such as computing their Discrete Fourier Transform (DFT). In this case, however, the coefficients need to be determined from the estimates classically, which means that loading them on the LCU register is less beneficial than approaching the problem using the method given in Ref.~\cite{Schuld2019}. 

\subsection{Directional derivatives and forward propagation}\label{sec:forward_prop}

Another application of LCU is represented by forward propagation \cite{Baydin2018}. While back-propagation seems to be challenging to implement on quantum computers \cite{bowles2024backpropagationscalingparameterisedquantum, abbas2023quantumbackpropagationinformationreuse}, forward propagation on quantum circuits can instead be achieved by using the circuit shown in Fig.~\ref{fig:grape_circuit} (d) or by implementing it with the SE. The forward derivative of a cost function $C$ that depends on parameters $\btheta \in \mathbb{R}^N$ is defined as:
\begin{align}\label{eq:for_dev}
    \nabla_{\boldsymbol{v}} C(\btheta) = \sum_{i=1}^N v_i \pdv{C(\btheta)}{\theta_i},
\end{align}
for a vector $\boldsymbol{v} \in \mathbb{R}^N$.
And the corresponding forward gradient can be obtained by sampling in random directions \cite{Baydin2022gwb, belouze2022optimization}:
\begin{align}
    \nabla_{\btheta} C(\btheta) = \EX_{\boldsymbol{v} \sim Q} \left[ \nabla_{\boldsymbol{v}} C(\btheta) \boldsymbol{v} \right],
\end{align}
where $\boldsymbol{v} \sim Q$ is a random vector with mean zero and bounded variance sampled from a probability distribution $Q$. In Ref.~\cite{belouze2022optimization} $Q$ is assumed to be a Rademacher distribution: values sampled from this distribution can be implemented in the LCU register using random graph states \cite{Wu_2014_random}. Since $\boldsymbol{v}$ can have both positive and negative values, the circuits given in Fig.~\ref{fig:positive_negative_lcu} are needed. 
Despite its straightforward circuit-centric implementation, the scaling of forward propagation on quantum computers is significantly worse than the corresponding scaling on classical deterministic computers \cite{Baydin2018}. We consider here the product ansatz circuit given in Eq.~\eqref{eq:product_gates_ansatz}. We assume that each gate can be written as $V_i(\theta_i) = \exp{- i \theta_i H_i} = \exp{- i \theta_i \sum_{k=1}^L a_{ik}P_{ik}}$, where $P_{ik}$ are arbitrary Pauli operators and $a_{ik}$ the coefficients of the Hamiltonian of the $i$th gate in the Pauli basis \cite{Schuld2019}. The relevant scaling parameters of forward propagation are the number $N$ of gates in the product ansatz and the number of Pauli coefficients $K$ in each gate according the LCU approach -- see Eq.~\eqref{eq:H}. Therefore, the LCU circuit can estimate:
\begin{align}\label{eq:forw_pauli}
    &g_{\boldsymbol{v}} = -2\sum_{i=1}^N v_i \sum_{k=1}^K a_{ik} \Im{\tr{P_{ik} \rho_{CA}(\btheta) \mathcal{O}}},\\
    &\hspace{1.8cm}\nabla_{\boldsymbol{\theta}}C(\btheta) = \EX_{v \sim Q}[g_{\boldsymbol{v}}],
\end{align}
for $\rho_{CA}(\btheta) = V_{CA}(\btheta) \rho V^{\dagger}_{CA}(\btheta)$. Eq.~\eqref{eq:forw_pauli} is the forward derivative in Eq.~\eqref{eq:for_dev} of the cost function in Eq.~\eqref{eq:SEst} that uses the circuit defined in Eq.~\eqref{eq:product_gates_ansatz}. 
Due to propagation of uncertainty, we obtain a $O(\frac{N^2K^2}{\epsilon^2})$ scaling for the forward gradient, which is worse than the PSR and LCU-gradient scaling with respect to $N$. Eq.~\eqref{eq:forw_pauli} can be estimated by combining appropriate parameter shifts and forward propagation -- see Fig.~\eqref{fig:forward_dev_circ} (b) --, i.e., with a sampling complexity of $O(\frac{N^2R^2}{\epsilon^2})$. An amplified version of forward propagation computed on the LCU circuit would reduce the sampling complexity of forward propagation to $O(\frac{NK}{\epsilon})$. In this case, we reach the same scaling as the PSR approach in $N$ and quadratically better scaling in $L$. This analysis, however, does not cover the total variance of the estimation problem due to the number of shots and the variance with respect to $\boldsymbol{v} \sim Q$. In fact, forward propagation suffers from slowdowns in the optimization due to a curse of dimensionality that limits its effectivity even in classical settings \cite{belouze2022optimization}. 
\subsection{Most general LCU gradient: SU($d$) gradient circuit}\label{sec:sun_gradient_lcu}
In this case the circuit is a single multi-qubit circuit parametrized by: 
\begin{align}\label{eq:sun_gate}
    V(\btheta) = e^{-i H(\btheta)} = \exp{-i \sum_{k=0}^{K-1} \theta_k H_k}.
\end{align} 
The circuit that provides an unbiased estimator for the adjoint is given in Ref.~\cite{li2024efficientquantumgradienthigherorder} and Fig.~\ref{fig:sun_gradient_circuit}. Now we need to combine this circuit with the LCU itself. In the case of a general SU($d$) gate, Eq.~\eqref{eq:potq_grad} becomes \cite{Wiersema2024herecomessun}:
\begin{align}\label{eq:potq_gradient}
    \nabla_{\btheta} C_{\text{POTQ}}(\btheta) = -\frac{1}{d}\Re{\Tr{ \Omega(\btheta) V(\btheta) U^{\dagger}}},
\end{align}
with the operator
\begin{align}\label{eq:SUN_gradient_omega}
    \Omega(\btheta) = \sum_{l=0}^{\infty} \frac{(-i)^l}{(l+1)!} \norm{\btheta}_1^l \ad^{l}_{\bar{H}(\btheta)} \left(\bar{\nabla}_{\btheta} H(\btheta) \right),
\end{align}
where $\ad^{l}_{\bar{H}(\btheta)} \left(\bar{\nabla}_{\btheta} H(\btheta) \right)$ denotes the nested commutator of $l$th order
\begin{align}
    \ad^{l}_{\bar{H}(\btheta)} \left(\bar{\nabla_{\btheta}} H(\btheta) \right) = \underbrace{\left[\bar{H}(\btheta), ..., \left[\bar{H}(\btheta), \left(\bar{\nabla}_{\btheta} H(\btheta) \right) \right] \right]}_{l+1 \ \text{elements}}, \nonumber
\end{align}
and $\bar{H}$ is the renormalized version of the Hamiltonian, i.e., $\bar{H}(\btheta) = H(\btheta)/\norm{\btheta}_1$ with $\norm{\btheta}_1 = \sum_{k=0}^{K-1} \abs{\theta_k}$, as defined in Eq.~\eqref{eq:sun_gate}. The expression $\bar{\nabla}_{\btheta} = \frac{1}{\norm{\btheta}_1} \nabla_{\boldsymbol{\theta}}$ is a rescaled nabla operator. As such, the derivative of the unitary for $k=0,...,K-1$ given in Eq.~\eqref{eq:sun_gate} can be written as an infinite sum of matrices:
\begin{align}
    \pdv{}{\theta_k}V(\boldsymbol{\theta}) = \sum_{l=0}^{\infty} \mathcal{W}_l(\boldsymbol{\theta}),
\end{align}
where $\mathcal{W}_l(\boldsymbol{\theta}) = \frac{(-i)^l}{(l+1)!} \norm{\btheta}_1^l \ad^{l}_{\bar{H}(\btheta)} \left( H_k \right) V(\btheta)$. If we combine the estimator circuit given in Eq.~\eqref{eq:pnlc_sampling} with the circuit given in Ref.~\cite{li2024efficientquantumgradienthigherorder}, which allows for the computation of nested commutators -- see Fig.~\ref{fig:sun_gradient_circuit} -- we can compute any unitary gradient with a full quantum circuit up to a desired approximation order. A second LCU register is necessary to encode $\bar{H}(\btheta)$ coherently on the quantum circuit.
By inserting the expansion given in Eq.~\eqref{eq:SUN_gradient_omega} into our gradient Eq.~\eqref{eq:potq_gradient}, we obtain 
\begin{equation}
\nabla_{\btheta}C_{\text{POTQ}}(\btheta) = \sum_{l=0}^{\infty} T_l,
\end{equation}
with 
\begin{equation}
T_l = \frac{(-1)}{(l+1)!} \|\theta\|^l \frac{1}{d} \Re{ (-i)^l \Tr\bigl[V \ad_H^l(G) U^\dagger\bigr]}, \label{eq:single_order}
\end{equation}
where $G = \frac{1}{\norm{\btheta}_1} \nabla_{\btheta} H(\btheta) = (H_0, ..., H_{K-1})^{\text{T}}$ is a vector of matrices and the trace and matrix multiplications operations in Eq.~\eqref{eq:single_order} are carried out in a vectorized form.
For an $L$th order expansion 
\begin{equation}\label{eq:sun_lthorder}
\nabla_{\btheta}C_{\text{POTQ}}(\btheta) \approx \sum_{l=0}^{L} T_l,
\end{equation}
the remainder is then given by $R_L=\sum_{l=L+1}^\infty T_l$. Here we focus specifically on the POTQ cost function. However, similar expressions can be derived for arbitrary cost functions of the type given in Eq.~\eqref{eq:SEst}, whose gradient is given by Eq.~\eqref{eq:cost_function_grad}, as well as for the infidelity gradient given in Eq.~\eqref{eq:fidelity_grad}. We now turn to specific examples of multi-parameter, multi-qubit gates where Eq.~\eqref{eq:sun_lthorder} can be implemented using LCU methods. In particular, we analyze the behaviour of the SU($d$) gradient approximation given in Eq.~\eqref{eq:sun_lthorder} on control problems with fixed and time-dependent control parameters.

\begin{figure}
    \centering
    \includegraphics[width=0.5\textwidth]{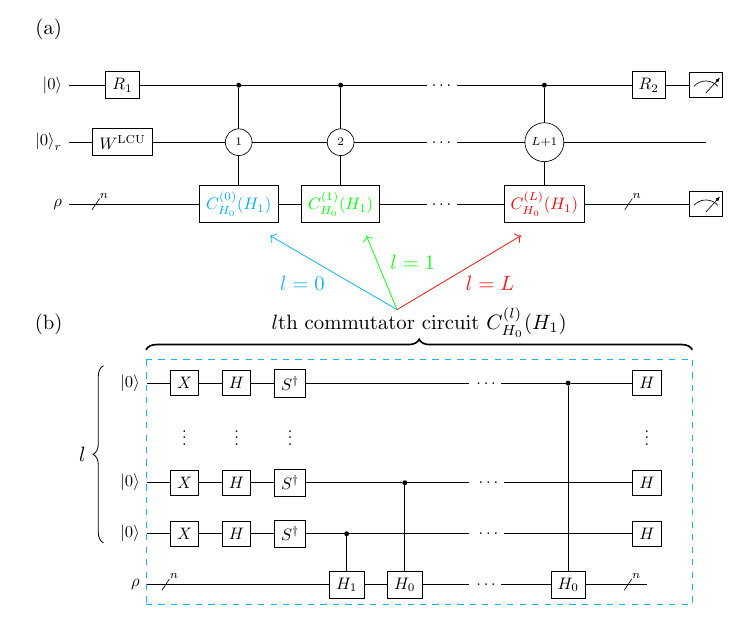}
    \caption{Representation of the SU($d$) gradient circuit \cite{Wiersema2024herecomessun} using the nested commutator circuit from Ref.~\cite{li2024efficientquantumgradienthigherorder} for a control problem with Pauli gates $H_0$ and $H_1$. The circuit $C^{(l)}_{H_0}(H_1)$ shown in (b) computes the expected value of the adjoint $\ad^l_{H_0}(H_1)$, i.e., the nested commutator term of order $l$. Each commutator term of order $l=0$ (which is simply $H_1$), $l=1$ (which is $[H_0,H_1]$), etc.~is mapped to the corresponding entry in the LCU register with $r=\lceil \log(L) \rceil$ qubits. The LCU state is prepared by the unitary $W^{\text{LCU}}$. The circuit in (a) can be simplified further due to the redundancies in the composition of $L$ multiply-controlled circuits of type (b), each one with a number of Hamiltonian terms that grows from $1$ to $L$ as in Eq.~\eqref{eq:sun_lthorder}.}
    \label{fig:sun_gradient_circuit}
\end{figure}

\subsubsection{Example}
We consider now a simplified unitary operation that is typical of control problems \cite{Machnes2011}, i.e.:
\begin{align}
    &V_c(\btheta) = \exp{-i(\theta_0 H_0 + \theta_1 H_1)\Delta t} \label{eq:simple_control_problem} \\
    &V_0 =  \exp{-i \theta_0 H_0 \Delta t}, \label{eq:simple_control_problem_V0}
\end{align}
where $H_0$ and $H_1$ are drift and control Hamiltonians respectively, $\theta_1$ is  a control parameter and $\Delta t$ is a time interval. This is a special case of Eq.~\eqref{eq:sun_gate}, where $\theta_k=0$ for $k=1,...,K-1$ and an additional dynamical evolution of time $\Delta t$ is applied. We define therefore the vector parameter $\btheta = (\theta_0, \theta_1)^T$. We can compute the control gradient of Eq.~\eqref{eq:simple_control_problem} by evaluating $\pdv{V_c(\btheta)}{\theta_1} \Big |_{\btheta = (1, 0)^T}$ using Eq.~\eqref{eq:SUN_gradient_omega}:
\begin{align}\label{eq:h0h1_grad}
    \pdv{V_c(\btheta)}{\theta_1} \Big |_{\btheta = (1, 0)^T} = \sum_{l=0}^{\infty} \mathcal{W}^c_l,
\end{align}
with $\mathcal{W}^c_l = \frac{(-i \Delta t)^l}{(l+1)!} \ad^{l}_{H_0} \left(H_1 \right) V_0$. The circuit given in Fig.~\ref{fig:sun_gradient_circuit} can be used to compute the derivative by implementing $H_0$ and $H_1$ using the LCU method if the Hamiltonians are given by sums of Pauli operators \cite{Schuld2019, Childs2012}. 
\begin{figure*}[ht!]
    \hspace{-0.75cm}
    \includegraphics[width=18 cm]{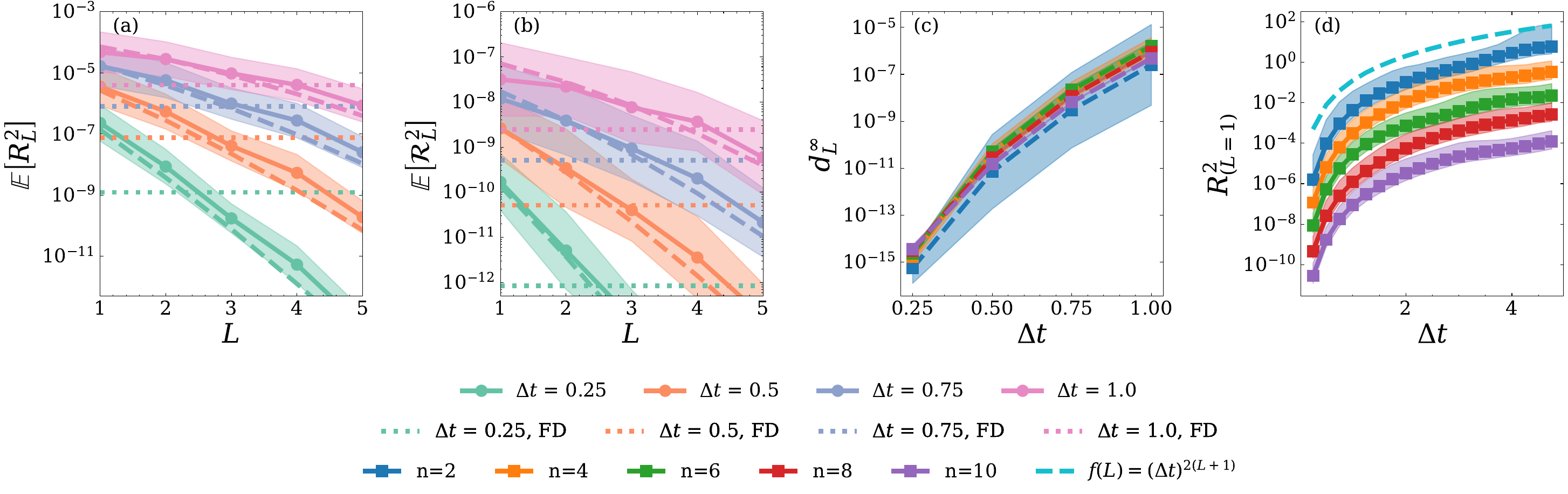}
    \caption{Representation of the convergence of the SU($d$) derivative given in Eq.~\eqref{eq:sun_lthorder}, using the parameters defined in Eq.~\eqref{eq:h0h1_grad} and as such represented using LCU. We use here the squared approximation error $R_L^2$ -- POTQ, Eq.~\eqref{eq:potq_grad} -- and $\mathcal{R}_L^2$ -- infidelity, Eq.~\eqref{eq:fidelity_grad} --, so that we can compare it easily with our theoretical predictions. Numerical simulations of the truncation error are shown as straight lines and the theoretical prediction as dashed lines, with different colours representing different $\Delta t$. The values are averaged over 50 random Hamiltonians $H_0$ and $H_1$ sampled from the GUE \cite{Tao2012-randommatrix} with Dyson coefficient $\Lambda_0 = \Lambda_1 = 1$. (a), (b) show the gradient approximation of the unitary of a $6$-qubit Hamiltonian for the POTQ cost function and the fidelity cost function, respectively, and for different values of $\theta$ as a function of the order of approximation, $n$, in the expansion $L$. Dotted lines represent the symmetric $O(1)$ finite-difference (FD) approach with a shift of $\delta=0.75$. The $O(1)$ approach is the best choice when sampling classically from quantum circuits due to the scaling of the variance with $\delta$ -- $O(1/\delta^2)$, see also Ref.~\cite{Wiersema2024herecomessun}. (c) shows the maximum norm of the difference between the true gradient computed with JAX and the gradient approximation as a function of $\theta$ -- see Eq.~\eqref{eq:infimum_norm}. (d) shows the precision in the approximation of the first order in the expansion for the POTQ cost function, which for this case corresponds to terms in the LCU that are proportional to $H_1$ ($L=0$) and  $[H_0, H_1]$ ($L=1$). We see that for several cost functions, a relatively short circuit already produces a high-quality approximation of the SU($d$) derivative.}
    \label{fig:sun_gradient_sim}
\end{figure*}
\subsubsection{LCU implementation}
We first consider only two Pauli operators, i.e., $H_0=P_0$ and $H_1=P_1$. This helps us illustrate the structure of the quantum circuit that we generalize later to arbitrary Hamiltonians using standard LCU methods. In this case, the circuits given in Fig.~\ref{fig:sun_gradient_circuit} can be implemented using (multi-controlled) single-qubit rotations $R_x, R_y$ and $R_z$ and (multi-qubit) Toffoli gates. 

The number of multi-controlled gates needed to implement a gradient approximation circuit of order $L$ scales as $O[L^2 \log(L)]$. For general Hamiltonians $H_0$ and $H_1$ generated by $K$ Pauli operators, the overall depth is $O[K^2 L^2 \log(K) \log(L)]$. Note as before that the factor of $\log(L)$ can be further brought down to $\log(\log(L))$ by using $\log(L)$ additional ancillas \cite{Motzoi2017}. We observe further that we do not need all multi-controlled bits for the (multi-controlled) Pauli gates that implement the controlled operations $P_1, P_0, ..., P_0$ : the first one, $P_1$, acts on all entries of the LCU state and does not need controls on the LCU register, but only on the first qubit. The other gates are multi-controlled on $L-1$, $L-2$,..., $1$ entries, respectively, so they require only $1$, $1$, $2$, ..., $L$ multi-controlled bits each. This approach reduces the number of gates by a constant factor, even though the asymptotic circuit depth remains the same. 

We now turn to the implementation of the LCU register. Each commutator circuit ($l=0,...,L$) in Fig.~\ref{fig:sun_gradient_circuit} (a) allows for the estimation of the (renormalized) mean value -- see Eq.~(15) in Ref.~\cite{li2024efficientquantumgradienthigherorder}:
\begin{align}\label{eq:G_sun_circuit}
   g_l = \left(-\frac{1}{2} \right)^l \Re{(-i)^l \tr{\ad^{l}_{H_1}(H_0) \rho_{V_0} \mathcal{O}}},
\end{align}
where $\rho_{V_0} = V_0 \rho V_0^{\dagger}$ -- see Eq.~\eqref{eq:simple_control_problem_V0}. As a consequence, the LCU circuit in Fig.~\ref{fig:sun_gradient_circuit} (a) estimates the quantity:
\begin{align}
    \mathcal{G}_L = \sum_{l=0}^L w_l(\Delta t) g_l,
\end{align}
where $w_l(\Delta t)$ are appropriate $\Delta t$-dependent LCU weights that reproduce the coefficients given in Eq.~\eqref{eq:h0h1_grad}. By comparing Eq.~\eqref{eq:G_sun_circuit} with Eq.~\eqref{eq:h0h1_grad}, we see that the state we need to prepare in the LCU register is nothing else than a coherent state with parameter equal to $\sqrt{2\Delta t}$ implemented by the operator $W^{\text{LCU}}$ shown in Fig.~\ref{fig:sun_gradient_circuit} (a):
\begin{align}\label{eq:lcu_coherent}
     &W^{\text{LCU}}(\Delta t) \ket{0}_L = \ket{\sqrt{2\Delta t}}, \\
     &\ket{\sqrt{2\Delta t}} = e^{-\abs{\Delta t}} \sum_{l=0}^{\infty} \frac{(\sqrt{2\Delta t})^{l}}{\sqrt{(l)!}} \ket{l}.
\end{align} 
The positive weights are given by $w_l(\Delta t) = \frac{(2 \Delta t)^l}{l!}e^{-2\abs{\Delta t}}$.
We observe that compared to Eqs.~\eqref{eq:SUN_gradient_omega} and \eqref{eq:h0h1_grad}, using this state would lead to a $l!$ coefficient rather than $(l+1)!$. The derivation of this gradient encoding scheme is provided in Appendix \ref{sec:alpha_gradient}.
A generalization to the case in which the gradient is not computed at $\theta_1 = 0$ and the Hamiltonians are not just $n$-qubit Pauli operators (for example because there are multiple control Hamiltonians that are kept fixed while performing the derivative with respect to $\theta_1$) can be obtained using the LCU approach to implement the drift Hamiltonian and by encoding coefficients proportional to $(\Delta t)^l$ in the LCU register, see Eq.~\eqref{eq:lcu_coherent}. 

\subsubsection{Expectation value of the truncation error}
We want to analyze the convergence behaviour of the gradient approximation in Eq.~\eqref{eq:sun_lthorder} for a quantum gate of the type given in Eq.~\eqref{eq:simple_control_problem}. In particular, we consider here the gradient of the POTQ test given in Eq.~\eqref{eq:potq_grad}. The square truncation error $R_L$ is given by:
\begin{align}
    R_L^2 = \left(\sum_{l=L+1}^{\infty} T_l\right)^2,
\end{align}
where $T_l$ is given in Eq.~\eqref{eq:single_order}. Here we use $R_L$ to describe the square truncation error of the POTQ gradient, while $\mathcal{R}_L$ is defined analogously for the gradient of the infidelity. We can estimate the behavior of the expected value of the truncation error as a function of a random Hamiltonian generator, in order to potentially determine the truncation length $L$ that is needed to achieve a given target precision in the gradient approximation. In the following section, we specifically consider the framework of quantum control to understand the behaviour of the SU($d$) gradient from the perspective of the LCU implementation. We employ random $n$-qubit drift and control Hamiltonians $H_0$ and $H_1$ that are sampled from the Gaussian Unitary Ensemble (GUE) \cite{Tao2012-randommatrix}, respectively. We also employ random target unitaries $U$, generated with a QR decomposition. The gradient is evaluated at control value equal to zero ($\theta_1 = 0$) to facilitate the simulation, but this method can be applied to arbitrary control problems, since we can always redefine the drift Hamiltonian to contain control Hamiltonians, whose pulse parameters are not varied.
\begin{figure*}[ht!]
    \hspace{-1.2cm}
    \includegraphics[width=18cm]{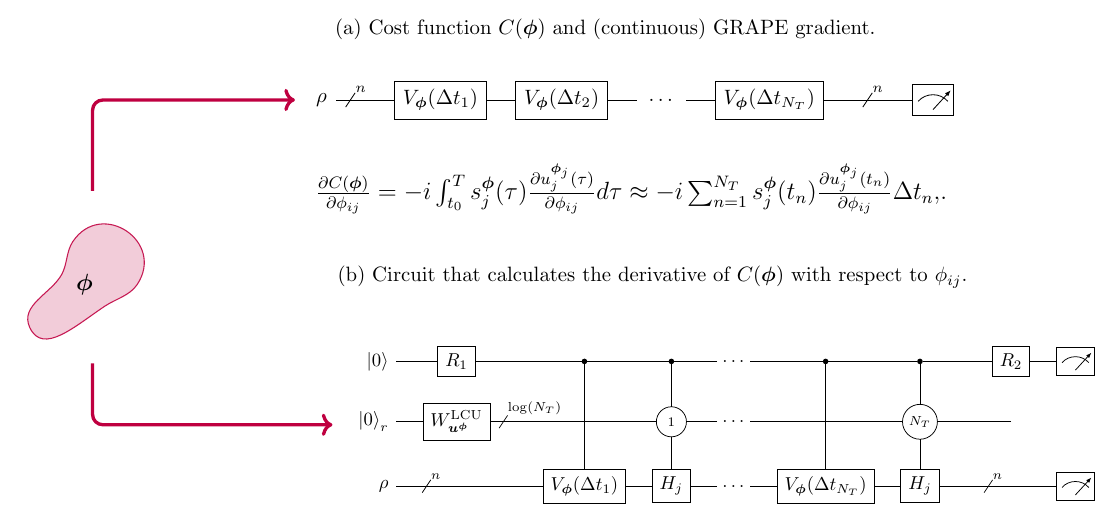}
    \caption{Schematic representation of the LCU-GRAPE circuit for a general quantum cost function optimized with control pulses. The cost function circuit is pictured in (a). Compared to the previous example, i.e., Fig.~\ref{fig:forward_dev_circ}, the circuit uses the unitary evolution $V_{\boldsymbol{\phi}}(t_0,T) = \mathcal{T}\exp{- i \int_{t_0}^T H_{\boldsymbol{\phi}}(\tau) d\tau}$, where $\mathcal{T}$ is the time-ordering operator and $H(t) = H_0 + \sum_{j=1}^{K-1} u^{\boldsymbol{\phi_j}}_j(t) H_j$ the Hamiltonian of a bilinear single-control problem parametrized by a vector $\boldsymbol{\phi}$. The LCU-GRAPE  circuit shown in (b) is an estimator of the (renormalized) gradient given in (a) and therefore implements, on a quantum circuit, the gradient sampling procedure outlined in Refs.~\cite{ kottmann2023evaluating, Leng2024}. The operation $W^{\text{LCU}}_{\boldsymbol{u}^{\boldsymbol{\phi}}}$ loads the LCU register that uses $r=\lceil \log(N_T) \rceil$ qubits (we consider again for the sake of simplicity only linear combinations of positive coefficients) with  coefficients proportional to $\pdv{u_j(t_i)}{\phi_{ij}}\Delta t_i$ and can therefore The derivatives with respect to the single time-sliced control pulses can be implemented in different way, either with the SU($d$) gradient circuit -- see Fig.~\eqref{fig:sun_gradient_circuit} and the procedure outlined in Section \ref{sec:sun_gradient_lcu}, or with another type of gate gradient estimation method, depending on the structure of the spectrum of the gate element $V_{\boldsymbol{\phi}}(t_i), i=1,...,N_T$ itself. We also see that this gradient resembles the structure of a forward derivative, because each time-sliced gate depends on the same set of parameters $\boldsymbol{\phi}$, so the gradient results a sum of each contribution. Overall, this procedure gives a $O(T^2/\epsilon^2)$ and $O(T/\epsilon)$ sampling complexity for non-amplified and amplified estimates, respectively.}
    \label{fig:grape_circuit}
\end{figure*}

The formal derivation of the exact average of the square truncation error with respect to Hermitian matrices $H_0$ and $H_1$ sampled from the GUE is given in Appendix \ref{sec:truncation_error}. The asymptotic behaviour of the squared truncation error is given by:
\begin{equation}
  \EX[R_L^2] \sim O\left(\frac{\Lambda_1^2}{d^2} \left(\|\btheta\|\Lambda_0 \right)^{2(L+1)} \right),
\end{equation} 
where $d=2^n$, $\Lambda_1$ and $\Lambda_0$ are the Dyson coefficients of the distributions of $H_0$ and $H_1$, and $\btheta$ is the parameter vector of the control problem as defined in Eq.~\eqref{eq:simple_control_problem}. The value $\EX[R_L^2]$ can also provide us with a bound for the average error itself since $\abs{\EX[R_L]} \leq \sqrt{\EX[R_L^2]
} $. We simulate the result by computing the gradient approximation in Eq.~\eqref{eq:sun_lthorder} with the problem structure given in Eq.~\eqref{eq:h0h1_grad}, i.e., with $\theta_1 = 0$ and $\theta_0 =  1$. We average the approximation error over 50 random Hamiltonians $H_0$ and $H_1$ sampled from the GUE with $\Lambda_0 = \Lambda_1 = 1$ (see Ref.~\cite{Tao2012-randommatrix} and Appendix \ref{sec:random_operators}) for different numbers of qubits and values of the parameter $\theta$.
The results of our simulation of the SU($d$) gradient are shown in Fig.~\ref{fig:sun_gradient_sim} for different test values of $\theta$. The gradient used for comparison is obtained with \textsc{JAX} \cite{jax2018github}. Full lines and dashed lines represent the simulated average of $R_L^2$ -- POTQ gradient -- (a) and $\mathcal{R}_L^2$ -- infidelity gradient - (b) up to an order $L$. Dotted lines represent the symmetric finite-difference (FD) approach with a shift $\delta = 0.75$ -- see also Ref.~\cite{Wiersema2024herecomessun}. We see that the theoretical predictions match the average value of the squared truncation error within the standard deviation represented by the shaded regions and that the approximate gradient to order $L$ quickly reaches the precision of a symmetric $O(1)$ FD approach, which is commonly used when sampling classically from quantum circuits \cite{Wiersema2024herecomessun}. 

By setting an appropriate maximum expansion index $L$, we can study the convergence of the gradient approximation. The convergence with respect to $L$ is particularly relevant for the LCU implementation, because, if the integer $L$ needed to reach a gradient precision $\delta$ is reasonably small, so will be the depth of the corresponding LCU-based quantum circuit. 

In Fig.~\ref{fig:sun_gradient_sim} (c) we show the maximum norm of the unitary gradient approximation:
\begin{align}\label{eq:infimum_norm}
d^{\infty}_{L} = \norm{\pdv{V(\boldsymbol{\theta})}{\theta_1} \Big |_{\boldsymbol{\theta} = (1, 0)^T} - \mathcal{W}^c_L}_{\infty},
\end{align}
for the maximum index considered, i.e., for us $L=14$. as a function of the parmeter $\Delta t$. This plot helps us visualize the precision of the unitary gradient itself if compared to the precision of the gradient of the two quantum cost functions given in Fig.~\ref{fig:sun_gradient_sim} (a) and (b). In Fig.~\ref{fig:sun_gradient_sim} (d) we see as an example the change of $R_L^2$ for $L=1$ as a function of $\theta$: the behaviour of the function reflects the predicted dependence of $\EX[R_L^2]$ from $\theta$, i.e., $O[\theta^{2(L+1)}]$, as shown in Eq.~\eqref{eq:approx_RL2} and in more detail in Eq.~\eqref{eq:exact_RL2}.

\subsection{Gradients of quantum dynamics}\label{sec:q_control_gradients}
In the case of pulse-level optimization, the problem usually has a bilinear structure with a drift Hamiltonian $H_0$ and control Hamiltonians $H_1, ..., H_{K-1}$ -- see also Eq.~\eqref{eq:sun_gate} -- with corresponding time-dependent control fields $u^{\boldsymbol{\phi}_1}_1(t), ..., u^{\boldsymbol{\phi}_{K-1}}_{K-1}(t)$. W.l.o.g., we can assume that these control fields are each parametrized independently by real vectors $\boldsymbol{\phi}_1, ..., \boldsymbol{\phi}_{K-1}$. The gradient of the unitary evolution of $H_{\boldsymbol{\phi}}(t) = H_0 + \sum_{k=1}^{K-1} u^{\boldsymbol{\phi}_k}_k(t) H_k$, where $\boldsymbol{\phi} = (\boldsymbol{\phi}_1, ..., \boldsymbol{\phi}_{K-1})^{\text{T}}$, that is of $V_{\boldsymbol{\phi}}(t) = \mathcal{T}e^{-i \int_{t_0}^T H(\tau) d\tau}$, where $\mathcal{T}$ is the time-ordering operator, from the initial time $t=t_0$ to the final time $t=T$ with respect to the parameters, is given by \cite{Khaneja2005}:
\begin{align}\label{eq:dU_integral}
    \pdv{V_{\boldsymbol{\phi}}}{\phi_{ij}} = - i \int_{t=t_0}^T V_{\boldsymbol{\phi}}(t_0, \tau) H_j \pdv{u^{\boldsymbol{\phi}_j}_j(t)}{\phi_{ij}} V_{\boldsymbol{\phi}}(\tau, T) d \tau.
\end{align}
For a control cost function of the type of Eq.~\eqref{eq:SEst}, i.e.:
\begin{align}\label{eq:control_cost_function}
    C(\boldsymbol{\phi}) = \tr{V_{\boldsymbol{\phi}}(t_0, T) \rho V^{\dagger}_{\boldsymbol{\phi}}(t_0, T) \mathcal{O}},
\end{align}
the gradient is given by -- see Appendix \ref{appx:QC_grad} and Refs.~\cite{Goerz2015, kottmann2023evaluating, Leng2024}:
\begin{align}\label{eq:grape_continuous_grad}
    &\pdv{C(\boldsymbol{\phi})}{\phi_{ij}} = -i \int_{t=t_0}^T s^{\boldsymbol{\phi}}_j(\tau) \pdv{u^{\boldsymbol{\phi}_j}_j(t)}{\phi_{ij}} d \tau, \\
    &s_j^{\boldsymbol{\phi}}(\tau) = \tr{\rho(T) [H_j(t_0, \tau), \mathcal{O}]}.
\end{align}
Eq.~\eqref{eq:grape_continuous_grad} resembles the implementation of the analog LCU algorithm \cite{Chakraborty2024implementingany}. Full quantum control gradients are usually too challenging to implement for basic qubit optimization on near-term devices. However, quantum control has been suggested as a possible ansatz to optimize variational quantum algorithms \cite{Magann2021}. In these implementations, the use of LCU algorithms can be considered beneficial. Usually, the gradient given in Eq.~\eqref{eq:grape_continuous_grad} is computed for a discretized time dynamics, i.e., where $T = \sum_{n=1}^{N_T} \Delta t_n$, in which case we have:
\begin{align}\label{eq:grape_sum}
    \pdv{C(\boldsymbol{\phi})}{\phi_{ij}} \approx -i \sum_{n=1}^{N_T}  s^{\boldsymbol{\phi}}_j(t_n) \pdv{u^{\boldsymbol{\phi}_j}_j(t_n)}{\phi_{ij}} \Delta t_n,
\end{align}
which is the discretized version of the GRAPE gradient \cite{Khaneja2005}. We refer to the circuit estimator for this quantity as LCU-GRAPE. Assuming the variance of the control operators is bounded, the sampling complexity of this approach with a classical sampling process scales as $O(T^2/\epsilon^2)$ and therefore as $O(T/\epsilon)$ for an amplified sampling. The circuit given in Fig.~\ref{fig:grape_circuit} (LCU-GRAPE), when combined with amplitude estimation, potentially allows for quadratic speed up in the evaluation of GRAPE-like gradients. However, the depth of such LCU circuits and the additional ancillas makes them impractical on near-term quantum devices and in the control of experimental qubits for gate synthesis and state preparation \cite{Kelly2014}. On the other hand, Ref.~\cite{Leng2024} draws connections between (Ordinary Differential Equation Networks) ODENets \cite{chen2018neural} and quantum dynamics typical of control systems. The analysis contained therein provides us with training methods for ODENets on quantum systems using Eq.~\eqref{eq:grape_continuous_grad} and stochastic parameter-shift rules \cite{Banchi2021measuringanalytic}. As we discussed above, classical estimation of such a derivative is quadratic in time -- $O(T^2/\epsilon^2)$ -- both in the case of SE and LCU, while amplified LCU-GRAPE can potentially reach $O(T/\epsilon)$. For large parametrized quantum models, this represents a significant speedup. While its relevance is limited for NISQ circuits, it will most probably increase in the next years as logical error rates continue to decrease. 

\section{Summary and Conclusion}
In this work, we thoroughly analyze the sampling complexity of different LCU estimators in different contexts. In the first part, we focused on reviewing the approaches to LCU sampling and estimation compared to SE. The complexity of estimating observables using LCU and SE  without any quantum AE routine is the same, i.e., $O(L^2/\epsilon^2)$ \cite{Chakraborty2024implementingany}. In the case of AE, the sampling complexity of the LCU estimator reduces to $O(L/\epsilon)$, which is considerably faster than the $O(L\sqrt{L}/\epsilon)$ scaling of SE. We also draw connections between classical probability theory, circuit sampling and DQC1 by considering the SA-LCU approach presented in Ref.~\cite{Chakraborty2024implementingany}. 
In the second part of the work, we focus on the specific application to gradient estimation, and more specifically we discuss how LCU can be used to represent the gradient of arbitrary cost functions that depend on unitary evolution operators, such as the SU($d$) gradient introduced in Ref.~\cite{Wiersema2024herecomessun} and GRAPE-like gradients (LCU-GRAPE). We present and discuss in particular the circuits that allow for the estimation of SU($d$) gradients and control gradients, and analyze the convergence properties of the gradient approximation both from a numerical and an analytical perspective using concepts from random matrix theory. These results are relevant for advanced implementations of gradient-based optimization on quantum hardware that make use of either non-standard multi-qubit gates or circuit ansätze based on quantum optimal control theory.\\

\section{Code and Data Availability}
The code and data used for this work are available at Ref.~\cite{GPVQuEst}. 
\begin{acknowledgments}
This work was supported by AIDAS, The European Joint Virtual Lab, an initiative of Forschungszentrum Jülich (FZJ) and the French Alternative Energies and Atomic Energy Commission (CEA), by the German Federal Ministry of Education and Research (BMBF), project QSolid, Grant No.~13N16149, by the German Research Foundation (DFG) under Germany’s Excellence Strategy – Cluster of Excellence Matter and Light for Quantum Computing (ML4Q) EXC 2004/1 – 390534769 and by the Jülich Supercomputing Center (JSC). We acknowledge funding from the Horizon Europe program (HORIZON-CL4-2021-DIGITAL-EMERGING-02-10) via the project 101080085 (QCFD) and by HORIZON-CL4-2022-QUANTUM-01-SGA Project
under Grant 101113946 OpenSuperQPlus10. We are grateful to Markus Heinrich, Manuel Guatto, Robert Zeier and Roberto Gargiulo for the stimulating discussions. Simulations were realized in \textsc{python} using the libraries \textsc{qiskit} \cite{qiskit2024}, \textsc{jax} \cite{jax2018github}, \textsc{qclib} \cite{QCLib} and \textsc{numba} \cite{lam2015numba}. 
\end{acknowledgments}

\onecolumngrid
\appendix
\section{Trace estimation with DQC1}

\subsection{DQC1}

When dealing with NMR quantum computers \cite{Jones2010}, it is common to have access to $n$-qubit mixed states. On these and similar quantum systems, having a quantum computer that only uses mixed states and some control qubits would be beneficial, rather than using only pure states that undergo unitary evolutions. It turns out that there exist problems that can be solved exponentially faster on this type of quantum computer compared to a classical computer \cite{Shor2008}. However, it has been shown that this computation model is also significantly weaker than standard quantum computation \cite{Shor2008, Shepherd2006computation}. This model is referred to nowadays as DQC1 \cite{Knill1998} (Deterministic Quantum Computation with One Clean Qubit). Several problems involving computing distance measures between unitaries and states using quantum circuits are DQC1-complete or -hard \cite{Poulin2004, Khairy2020, Bravo-Prieto2019}. One for all, trace estimation of unitaries is DQC1-complete \cite{Shepherd2006computation}. The same is true for energy estimation for Hamiltonians with very low (logarithmic) connectivity \cite{Shepherd2006computation}.
It seems that problems involving LCU-based sampling tend to be DQC1-hard \cite{Bravo-Prieto2019}. In general, one can construct a DQC1-cost function by implementing a LCU-type circuit with controlled operations \cite{Park2019}. 

\subsection{Sampling Unitary traces and DQC1 basics}\label{sec:samplingtraces}

The complexity class DQC1 that allows direct computation with one clean qubit uses a single- or multi-qubit controlled unitary $V$ acting on a maximally mixed state (which corresponds to uniformly sampled random pure states) to perform quantum computation \cite{Knill1998}. This model has been shown to be hard to simulate on classical computers if more than three output qubits are considered \cite{Morimae2014}. Estimating the re-normalized real part of the trace of a unitary $V$ is a complete problem for DQC1 \cite{Shepherd2006computation}. The circuit that allows this estimation is given in Fig.~\ref{fig:dqc1-rev} (a) for mixed-states inputs and (b) for pure states. \\
The probability of measuring the control qubit for the circuit that uses mixed states -- see Fig.~\ref{fig:dqc1-rev} (a) -- in the zero (+) or one (-) state is given by
\begin{align}
    \bar{p}_{\pm} = \frac{1}{2}\left(1 \pm \frac{1}{d} \Re{\Tr{V}}\right).
\end{align}
The imaginary part can be obtained by inserting an $S$ gate before the final measurement in the circuits given in Fig.~\eqref{fig:dqc1-rev}.
We can see that if we want to compute the trace of a unitary, the variance of the estimator for the trace behaves as a Bernoulli variable \cite{Aharanov2006}:
\begin{align}
    &\text{Var}(\bar{x}) = \bar{p}_{+}(1 -\bar{p}_{+}).
\end{align}
Here, we consider only one of the two outcomes and thus set $ \bar{p} = \bar{p}_{+}$. Without using a maximally mixed state $\rho = \frac{\mathbb{I}}{d}$, computing the trace requires preparing $d=2^n$ orthonormal basis states $\{ \ket{i} \}_{i=1}^d$ and estimating their corresponding probability distribution $p_i$:
\begin{align}
    p_i = \frac{1}{2}(1 + \Re{\bra{i} V \ket{i}}).
\end{align}
As the real part of the trace of $V$ is given by $\Re{\Tr{V}} = \sum_{i=1}^{d} \Re{\bra{i} V \ket{i}}$, we can construct an estimator for the trace by creating an estimator $\bar{x}'$ that collects the counts of the $d$ different circuits and sums them together. \\
The variance of this estimator behaves as a sum of independent Bernoulli variables, each one with variance $p_i(1 - p_i)$:
\begin{align}
    \text{Var}(\bar{x}') = \frac{1}{d^2} \sum_{i=1}^d p_i(1 - p_i),
\end{align}
and is $d$-times smaller than the variance of the estimator based on the maximally mixed state.
\begin{proof}
We prepare the $d$ states $\ket{i}, i=1,...,d$ and apply the Hadamard test on each of them using the controlled unitary $V$. Since the states are prepared on $d$ different Hilbert spaces, the variance of the estimates is simply the sum of the variances, each one equal to $p_i(1 - p_i)$ for $i=1,...,d$.
Afterwards, we use the fact that $p_i \leq \sqrt{p_i}$ for $0 \leq p_i \leq 1$ and write
\begin{align}
    \frac{1}{d^2} \sum_{i=1}^d \sum_{j=1}^d \left( \EX \left[x_i x_j \right] - p_i p_j \right) \overset{\text{C.S.}}{\leq} \frac{1}{d^2}  \sum_{i=1}^d \sum_{j=1}^d  \sqrt{p_i} \sqrt{p_j} (1 - \sqrt{p_i} \sqrt{p_j}).
\end{align}
Then we apply one of the generalized mean inequalities (GM) \cite{Bullen2003}, which follows from the Jensen's inequality \cite{Jensen1906}:
\begin{align}
      \sum_{i=1}^d \sum_{j=1}^d  \sqrt{p_i} \sqrt{p_j} \overset{\text{GM}}{\leq}  \frac{1}{d} \sum_{i=1}^d  p_i,
\end{align}
and we obtain
\begin{align}
    \frac{1}{d^2} \sum_{i=1}^d \sum_{j=1}^d \left( \EX \left[x_i x_j \right] - p_i p_j \right) \leq  \frac{1}{d} \sum_{i=1}^d  p_i \left(1 -  \frac{1}{d} \sum_{i=1}^d  p_i \right) = \text{Var}(\bar{x}).
\end{align}
The last term is the variance of the DQC1 estimator. This derivation assumes that the circuits are somewhat correlated. However, if no correlations are present, because the estimators are prepared independently from each other, the term on the left hand side is $d$ times smaller than the full variance with non-zero covariances. Therefore, we have
\begin{align}
     \text{Var}(\bar{x}') \leq  \frac{1}{d} \text{Var}(\bar{x}).
\end{align}
\end{proof}
We see that the variance of the first estimator is at least $d$ times smaller than the one of the DQC1 estimator. However, the DQC1 estimator does not require the preparation of $d$ different states, as long as we have access to a source of randomly distributed pure states in input (or, equivalently, a maximally mixed state). In summary, the query complexity of both  estimators with precision $\epsilon$ is $O(d^2/\epsilon^2)$ -- in the former case, we need to sample the sum of $d$ circuit outputs with precision $O(d/\epsilon^2)$, in the latter we sample from one circuit with precision $O(d^2/\epsilon^2)$.

\begin{figure*}
\includegraphics[width=\textwidth]{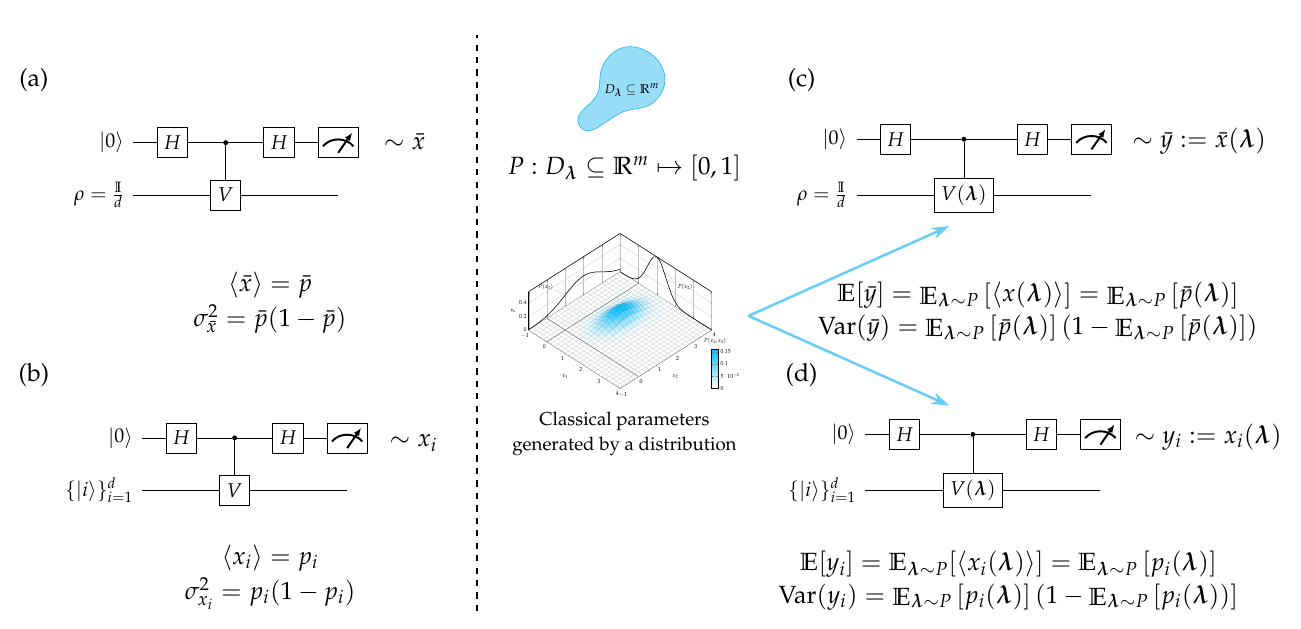}
    \caption{In this figure we consider different types of sampling problems that use the Hadamard-like test circuit normally employed for DQC1 tasks. In (a) and (b) the same circuits are considered with a fixed unitary $V$, without any external dependance. The values $\bar{x}$ and $x_i, i=1,...,d$ represent the Bernoulli counts sampled using the circuits. The circuits (c) and (d) instead depend for some internal degrees of freedom on an external parameter $\boldsymbol{\lambda} \in D_{\boldsymbol{\lambda}} \subseteq \mathbb{R}^m$ sampled from a probability distribution $P: D_{\boldsymbol{\lambda}} \subseteq \mathbb{R}^m \mapsto [0,1]$. In this case, $\bar{y}$ and $y_i, i=1,...,d$ represent the new Bernoulli counts sampled using both the circuits and the distribution $P$. Therefore, sampling from these circuits requires averaging over the number of shots sampled from the circuits and the number of samples obtained from the probability distribution $P$. In (c), the circuit uses a maximally mixed state, whereas in (d) the $d$ basis states $\ket{i}, i=1,...,d$ are prepared and the corresponding probability distribution of measuring the control qubit in state $0$ or $1$ is sampled. }
    \label{fig:dqc1-rev}
\end{figure*}

\subsection{Conditional sampling and averaging}\label{sec:sampling_PoissonBinomial}
The sampling methods discussed in the previous sections offer some insight in the computational structure of these methods. In particular, we see that an LCU circuit (or the equivalent ancilla-free approach) maps an originally Bernoulli-like estimation problem to an equivalent Bernoulli-like estimation problem. If the original goal is to sum $L$ Bernoulli estimates $p_1, ..., p_L$, whose value lies between $0$ and $L$, the LCU circuit outputs the renormalized version of this estimate, whose value lies between $0$ and $1$. As a consequence, the corresponding non-renormalized estimate $L \bar{p}$ behaves as a sum of strongly correlated variables with variance $L^2 \bar{p}(1 - \bar{p})$. Eq.~\eqref{eq:max_variance} shows that the correlations induced by the LCU circuits are stronger than any classical correlation. In classical probability theory, the sum of $L$ independent Bernoulli-distributed trials follows the Poisson-Binomial distribution \cite{Wang1993}. Interestingly, this distribution can be approximated by a Binomial distribution with mean $\bar{p}$. However, the variance of such distribution is always $L$ times larger than the true variance of the Poisson-Binomial distribution. The output of the LCU circuit corresponds exactly to the approximation of the Poisson-Binomial distribution, which instead is the distribution of $L$ independent Hadamard tests. A different question arises when we sum potentially correlated variables, for example if the originally independent Hadamard tests are sampled from a joint distribution. In this case, there may be potential correlations present induced, e.g., by random circuit sampling \cite{heinrich2023randomizedbenchmarkingrandomquantum}, due to the law of total variance.

\section{SE with near-term amplitude estimation}\label{appx:SE_AE}

Before moving to to SE with amplitude amplification, we briefly review the principles of Maximum Likelihood Quantum Amplitude Estimation (MLQAE) \cite{Suzuki2020}. MLQAE starts from the amplified states:
\begin{align}\label{eq:psi_m}
    \ket{\psi^{m}} = \mathcal{Q}^{m} \ket{\psi} = \sin((2m+1)\theta_p) \ket{1} \ket{\psi_1} + \cos((2m+1)\theta_p) \ket{0} \ket{\psi_2},
\end{align}
where $\sin(\theta_p)^2 = p$, where $p$ is the probability of measuring $\ket{1}$ and also the parameter to estimate, using the language of estimation theory. The states in Eq.~\eqref{eq:psi_m} are obtained after applying $m$ times the Grover operator of AE \cite{Brassard2002}: $\mathcal{Q} = - \mathcal{V} S_0 \mathcal{V}^{-1}S_{\chi}$, as defined in Eq.~\eqref{eq:Q_operator}. If we prepare $m$ copies of the initial state $\ket{\psi}$ and perform independent runs of AE each one with one of the operators $\mathcal{Q}, \mathcal{Q}^2, ..., \mathcal{Q}^M$ respectively, the probability distribution of this estimation problem is given by
\begin{align}
    P(\theta_p, \boldsymbol{x}, \boldsymbol{N}) = \prod_{i=1}^M \prod_{k=1}^{N_m} \sin^{2}((2m+1)\theta_p)^{x_{mk}} \cos^{2}((2m+1)\theta_p)^{1 - x_{mk}},
\end{align}
where the integers $N_m$ are the number of shots from each circuit implementing $\mathcal{Q}, \mathcal{Q}^2, ..., \mathcal{Q}^M$ and $x_{mk}$, $k=1,...,N_m,\ m=1,...,M$ represents the data used for parameter estimation, i.e., the binary measurement results of the shot $k$ from the $m$th circuit where the state is found in $\ket{1}$ (while $1 - x_{mk}$ is the outcome corresponding to $\ket{0})$. All data is represented by the vector $\boldsymbol{x}=(x_{11},...,x_{MN_M})^{\text{T}}$ and the discrete parameters of the distribution are given by the vector of integers $\boldsymbol{N}=(N_1,...,N_M)^{\text{T}}$.
Ref.~\cite{Suzuki2020} shows that if one applies the principle of Maximum Likelihood Estimation on $P$ and if the allocation of the number of shots is chosen optimally, the total number of queries $n_q = \sum_{m=1}^M N_m$, which accounts for both the shots and the repeated applications of the operator $Q$, needed can potentially reach the Heisenberg limit, that is, the Fisher information of the Maximum Likelihood estimator obeys the following bound \cite{Suzuki2020}:
\begin{align}
    F \leq \frac{n_q^2}{p(1-p)}.
\end{align}

Now we consider again the problem of estimating outputprobabilities $p_1, ..., p_L$ from quantum circuits $\mathcal{V}_i, ..., \mathcal{V}_L$ defined as in Eq.~\eqref{eq:U_aest}. Each of these circuits resembles a Hadamard test. We define the random variables corresponding to the \textit{good} outputs of each circuit as $X_1, ..., X_L$, where $\EX[X_i] = p_i$, for $i=1,...,L $. Let us approach the problem from the point of view of MLQAE. If we assume that each estimate is obtained through MLQAE and that the estimator reaches the Heisenberg limit, then each estimator $\tilde{p}_1, ..., \tilde{p}_L$ that uses either near-term AE \cite{Suzuki2020} has an error proportional to: 
\begin{align}
    \epsilon_i^2 = \frac{1}{F(X_i)} = \frac{\sigma_i^2}{[n_q^{(i)}]^2},
\end{align}
where $\sigma_i^2 = p_i(1-p_i)$ -- $F$ is the Fisher information of the $i$th random variable $X_i$ associated with $p_i$ -- instead of the classical $\epsilon_i^2 = \frac{\sigma_i^2}{n_q^{(i)}}$. However, if these amplitudes are summed classically, the uncertainty propagation obeys again the rules of classical propagation of the uncertainty. This is due to the Stam inequality, which argues that for $L$ random variables $X_1, ..., X_L$ the Fisher information of any given parameter obeys the following bound \cite{Stam1959, Bobkov2014}:
\begin{align}
    \frac{1}{F(X_1 + ... + X_L)} \geq \frac{1}{F(X_1)} + ... + \frac{1}{F(X_L)}.
\end{align}
As a consequence, for the sum of amplified estimates with Heisenberg-like scaling of the classical Fisher information -- see Ref.~\cite{Suzuki2020} --, we have the following bound for the error:
\begin{align}
    \epsilon^2 \geq \frac{1}{F(X_1 + ... + X_L)} \geq \sum_{i=1}^L w_i^2 \frac{\sigma_i^2}{[n_q^{(i)}]^2}.
\end{align}
The total number of queries from all circuits can be minimized with respect to $n_q^{(i)}, \ i=1,...,L$ using Lagrange multipliers \cite{Rubin2018}:
\begin{align}
    &\mathcal{L}(n_q^{(i)}, ..., n_q^{(L)}, \lambda) = \sum_{i=1}^L n_q^{(i)} + \lambda \left(\sum_{i=1}^L \frac{w_i^2 \sigma_i^2 }{[n_q^{(i)}]^2} - \epsilon^2\right) \\
    &\hspace{1.0cm} \left( \forall 1 \leq i \leq L:\ \pdv{\mathcal{L}}{n_q^{(i)}} = 0 \right) \wedge \left( \pdv{\mathcal{L}}{\lambda} = 0 \right). 
\end{align}
The minimization procedure results in:
\begin{align}
    &\hspace{-0.3cm}\forall 1 \leq i \leq L:\ [n_q^{(i)}]^3 = 2\sigma_i w_i^2 \lambda \\
    &2\lambda = \left(\frac{1}{\epsilon^2}\sum_{i=1}^L (w_i^2 \sigma_i)^{\frac{1}{3}} \right)^{\frac{3}{2}},
\end{align}
and leads to a total number of queries $N_q$ equal to
\begin{align}\label{eq:ae_te_complexity}
    N_q = \sum_{i=1}^L n_q^{(i)} = \frac{1}{\epsilon} \sum_{i=1}^L  \left( \sum_{i=1}^L (2\sigma_i w_i^2)^{\frac{1}{3}} \right)^{\frac{1}{2}} \sim O \left(\frac{L \sqrt{L}}{\epsilon}\right).
\end{align}
As expected, we obtain a partial improvement over the classical evaluation even by implementing MLQAE. However, the improvement is worse than a LCU circuit where MLQAE is applied. It seems that the typical LCU entanglement is needed in order to obtain a better improvment. It is unclear whether another MLQAE approach could obtain an asymptotical sampling complexity of $O(L/\epsilon)$ without the use of an LCU circuit. If the estimation is divided on batches of size $k$, which are encoded in corresponding LCU circuits and subsequently amplified, and whose results are then summed classically, one has a complexity of $O(\frac{L}{\epsilon}\sqrt{L/k})$, which reduces to both the LCU and SE amplfied case for $k=1$ and $k=L$, respectively.

\subsubsection{Classical shadow tomography for quantum cost functions}
To better contextualize the use of the LCU in estimation, we want to compare it to other popular estimation strategies that emerged in the last years: Tomography based on classical shadows \cite{Huang2020, Huang2020shadow}. A classical shadow $\hat{\rho}$ is an estimator of the true density matrix of a quantum experiment $\rho$. It can be used to estimate the mean value of any observable with respect to $\rho$. In facts, for any channel acting on $\rho$ -- $d=2^n$ for $n$ qubits --, we can write \cite{Huang2020shadow, Mele2024introductiontohaar}:
\begin{align}
    \mathcal{Q}(\rho) = \sum_{l=1}^d \mathbb{E}_{U \sim \nu} \left[ \tr{ U  \rho U^{\dagger} \ket{l} \bra{l}} U^{\dagger} \ket{l} \bra{l} U \right],
\end{align}
where $U \sim \nu$ denotes random sampling from the Haar measure. Then the estimator $\tilde{\rho}$
\begin{align}
    \tilde{\rho} = \mathcal{Q}^{-1}(U^{\dagger}  \ket{l} \bra{l} U)
\end{align}
is an unbiased estimator of $\rho$ \cite{Mele2024introductiontohaar}, which can be used to estimate mean values of arbitrary observables. For an observable with $L$ non-zero non-commuting Pauli coefficients, the sampling complexity is $O[L \log(L)/\epsilon^4]$. The significant reduction in the number of queries needed is traded off with a worse scaling with the precision $\epsilon$. In the presence of AE sampling, as we see from Eq.~\eqref{eq:ae_te_complexity} -- see also \cite{Mele2024introductiontohaar} --, the variance of the estimator reduces to $O(\sqrt{L}/\epsilon)$, which results in $O[\sqrt{L}\log(L)/\epsilon^3]$.
In order to be performed efficiently, classical shadows have two requirements: (a) the inversion operation of the channel needs to be efficient \cite{abbas2023quantumbackpropagationinformationreuse} and (b) an efficient method to sample Haar random unitaries $U \sim \nu$ needs to be available. Both the classical shadow approach and the approach that uses the Jordan algorithm speed up the oracular evaluation of multiple observables, but do not reduce the variance. 
If the estimation of a quantum cost function is carried out using the shadow estimation protocol, the estimation of its gradient can, in theory, also be carried out using the same method \cite{abbas2023quantumbackpropagationinformationreuse}. General gradient estimation of quantum cost functions based on LCU, finite-difference or PSR approaches has a linear dependence from the number of variational parameters $N$, i.e., $O(N)$. This represents a significant bottleneck to the efficient implementation of variational circuits optimization \cite{Cerezo2021review} and quantum machine learning algorithms \cite{Biamonte2017, Schuld2020}. In this case, we know that estimating the gradient of a quantum cost function with observable $\mathcal{O}$ that evolves with a unitary $V_i = \exp{-i H_i \theta_i}$ corresponds to estimating the mean value of the observable $i\left[\mathcal{O}, H_i \right]$. For a sequence of independent gates with the same structure, a nested estimation is required in which $i\left[\mathcal{O}_i , H_i \right]$, $\mathcal{O}_i = \prod_{j=1}^i V_i \rho \left( \prod_{j=1}^i V_i \right)^{\dagger}$ fully determines the gradient component $\pdv{}{\theta_i} C(\boldsymbol{\theta_i})$ for $i=1,...,N$. Ref.~\cite{abbas2023quantumbackpropagationinformationreuse} shows that no general backpropagation can be accomplished without access to a quantum memory. However, there seem to be classes of circuits that are not classically simulable and whose gradients can be sampled in sub-linear time with respect to the number of parameters $N$ \cite{ bowles2024backpropagationscalingparameterisedquantum}.

\subsection{SU($d$) gradient with coherent state}\label{sec:alpha_gradient}
We consider the cost function $C(\boldsymbol{\theta}) = \tr{V_c(\boldsymbol{\theta}) \rho V_c(\boldsymbol{\theta})^{\dagger} Z_{\text{prod}}}$. A generalization to arbitrary observables can be achieved by implementing either the SE or the LCU estimator. A generalization to positive and negative linear combinations is given in Section \ref{sec:extension_glc}. The gradient of the cost function is provided in Eq.~\eqref{eq:cost_function_grad} for a unitary $V_c(\boldsymbol{\theta}) = e^{-i (H_0 \theta_0 + H_1 \theta_1) \Delta t}$ -- see Eq.~\eqref{eq:simple_control_problem}. The derivative with respect to $\theta_1$ evaluated at $\theta_1=0$ is given by:
\begin{align}\label{eq:lcu_sungrad_derivation}
    \pdv{C(\boldsymbol{\theta})}{\theta_1} \Big |_{\boldsymbol{\theta} = (1, 0)^T} =  \tr{\pdv{V_c(\boldsymbol{\theta})}{\theta_i} \rho V_c^{\dagger}(\boldsymbol{\theta}) Z_{\text{prod}}} = 2\norm{\boldsymbol{a}}_1 \sum_{l=0}^{\infty} \Re{\frac{(- 2 i \Delta t)^{l}}{(l+1)!}\tr{\left(\frac{1}{2}\right)^l\ad^{l}_{H_0}(H_1) \rho_c(\boldsymbol{\theta}) Z_{\text{prod}}}}
\end{align}
where on the right hand side we used the substitution $\rho_c(\boldsymbol{\theta}) = V_c(\boldsymbol{\theta})\rho V_c^{\dagger}(\boldsymbol{\theta})$. We note that for a complex value $z \in \mathbb{C}$:
\begin{align}
    \Re{i^l z} = \begin{cases} 
      -\Im{z} & l\mod 4 = 1\\
      -\Re{z} & l\mod 4 = 2 \\
      \Im{z} & l\mod 4 = 3\\
      \Re{z} & l\mod 4 = 0,
   \end{cases}
\end{align}
so we cannot simply move the $i^l$ outside of the real part operation. 
The adjoint term is computed by the circuit given in Fig.~\ref{fig:sun_gradient_circuit} (b) for a given order $l$. Furthermore, we use a Hadamard test-like circuit. For a renormalized observable $Z_{\text{prod}}$ whose mean value lies in $I=(-1,1)$, we can always use LCU and a Hadamard test \cite{li2024efficientquantumgradienthigherorder, Oshio2024} to estimate $\frac{1}{2}\left(1 \pm \Re{\rho_c(\boldsymbol{\theta}) Z_{\text{prod}}}\right)$ (using the Hadamard gate $H$) and/or $\frac{1}{2}\left(1 \pm \Im{\rho_c(\boldsymbol{\theta}) Z_{\text{prod}}}\right)$ (using the Hadamard gate and the $S$ gate, $HS$). 
We introduce a multi-controlled register for the LCU summation that goes from $0$ to $L -1$ with $r = \lceil \log(L) \rceil$ qubits :
\begin{align}
    \ket{\sqrt{2 \Delta t}_L} = \frac{1}{\sqrt{\mathcal{M}_{L}}} \sum_{l=0}^{L-1} \frac{(\sqrt{2 \Delta t})^{l}}{\sqrt{(l)!}} \ket{l},
\end{align}
where $\mathcal{M}_{L} = \sum_{l=0}^{L-1} \frac{\abs{2 \Delta t}^l}{l!}$ is the renormalization factor. We then use a Hadamard test circuit with $\bar{Z_{\text{prod}}}$ as measurement and controlled commutator circuits $l=0,...,L-1$ as in Fig.~\ref{fig:sun_gradient_circuit} (b), we can estimate the quantity:
\begin{align}\label{eq:lcu_sungrad_estimation_L}
     q^L_{\pm} = \frac{1}{2}\left(1 \pm \frac{1}{ \mathcal{M}_L}\sum_{l=0}^{L-1}  \frac{(2 \Delta t)^{l}}{l!} \Re{\tr{(-i/2)^l \ad^{l}_{H_0}(H_1) \rho_c(\boldsymbol{\theta})Z_{\text{prod}}}}\right),
\end{align}
In the limit of $L$ approaching infinity, we have the coherent state:
\begin{align}
    \ket{\sqrt{2\Delta t}} = e^{-\abs{\Delta t}} \sum_{l=0}^{\infty} \frac{(\sqrt{2\Delta t})^{l}}{\sqrt{(l)!}} \ket{l},
\end{align}
and
\begin{align}
    \label{eq:lcu_sungrad_estimation_inf}
     q^{\infty}_{\pm} = \frac{1}{2}\left(1 \pm \frac{1}{ \mathcal{M}_{\Delta t}}\sum_{l=0}^{\infty}  \frac{(2 \Delta t)^{l}}{l!} \Re{\tr{(-i/2)^l \ad^{l}_{H_0}(H_1) \rho_c(\boldsymbol{\theta})Z_{\text{prod}}}}\right),
\end{align}
where $\mathcal{M}_{\Delta t} = \exp{ 2 \abs{\Delta t}}$ is the normalization coefficient of the coherent state. The estimate given by the commutator circuit is already renormalized thanks to the $l$ Hadamard gates in Fig.~\ref{fig:sun_gradient_circuit}. The addition of a number $l$ of $S$ gates provides the $(-i)^l$ terms.
Eq.~\eqref{eq:lcu_sungrad_estimation_inf} does not match Eq.~\eqref{eq:lcu_sungrad_derivation} yet. In order to match the two equations up to a constant, we change the first controlled operation on the LCU register to be the identity, whereas the operator that corresponds to $\ad^{l}_{H_0}(H_1)$ is conditioned on the next entry in the register. As a result we have:
\begin{align}
    e^{\infty}_{\pm} = \frac{1}{2} \left[ 1 \pm \frac{1}{\mathcal{M}_{\Delta t}} \left(b(\btheta) + \sum_{l=1}^{\infty} \frac{(2\Delta t)^l}{l!} \left( -\frac{1}{2}\right)^{l-1} \Re{i^{l-1} \tr{\ad^{l-1}_{H_0}(H_1) \rho_c(\boldsymbol{\theta})Z_{\text{prod}}}}\right)\right],
\end{align}
and
\begin{align}
    &e^{\infty}_{+} - e^{\infty}_{-} = \frac{1}{\mathcal{M}_{\Delta t}} \left( b(\btheta) + 2 \Delta t \sum_{l=0}^{\infty} \frac{(2\Delta t)^l}{(l+1)!^l} \Re{(-i/2)^l\tr{\ad^{l}_{H_0}(H_1) \rho_c(\boldsymbol{\theta})Z_{\text{prod}}}}\right) =  \\ &\nonumber \hspace{3cm} \frac{1}{\mathcal{M}_{\Delta t}}\left( b(\boldsymbol{\theta}) + \Delta t \pdv{}{\theta_1}C(\boldsymbol{\theta})  \Big |_{\boldsymbol{\theta} = (1, 0)^T} \right), \hspace{1cm}
\end{align}
with the bias $b(\btheta) = \Re{\tr{\rho_c(\boldsymbol{\theta})Z_{\text{prod}}}}$ that needs to be removed. 
The variance analysis of the LCU approach shown in Table \ref{tab:samp_comp} implies a $O(\mathcal{M}_{\Delta t}^2/\epsilon^2)$ = $O \left(e^{4\abs{\Delta t}}/\epsilon^2 \right)$ asymptotic sampling complexity. For an approximation index $L$ we have $O(\mathcal{M}_L^2/\epsilon^2)$. As the de-biased estimate is multiplied with $\Delta t$, compensating for it will increase the variance by a factor $1/\Delta t^2$. As such, the optimal choice for $\Delta t$ is $O(1)$. Amplitude estimation reduces the sampling complexities by a quadratic factor. The required coherent states can be constructed efficiently with, e.g., the quantum algorithms developed in \cite{Arrazola2019, RLiu2022}.

\subsection{Quantum control gradients}\label{appx:QC_grad}
In quantum control problems, the unitary dynamics is determined by the Hamiltonian with drift operator $H_0$ and control operators $H_1,...,H_{K-1}$:
\begin{align}\label{eq:control_ham_appx}
    H_{\boldsymbol{\phi}}(t) = H_0 + \sum_{k=1}^{K-1} H_k u^{\boldsymbol{\phi}_k}_k(t).
\end{align}
The time-dependent control pulses $u^{\boldsymbol{\phi}_1}_1(t),...,u^{\boldsymbol{\phi}_{K-1}}_{K-1}$ depend on parameters $\boldsymbol{\phi}_1, ..., \boldsymbol{\phi}_{K-1}$. The solution of the control problem depends on a parameter vector $\boldsymbol{\phi} = (\boldsymbol{\phi}_1, ..., \boldsymbol{\phi}_{K-1})^{\text{T}}$. The unitary control dynamics with time-ordering operator $\mathcal{T}$ is given by:
\begin{align}\label{eq:unitary_control_appx}
    V_{\boldsymbol{\phi}}(t_0, T) = \mathcal{T} \exp{-i \int_{t=t_0}^T H_{\boldsymbol{\phi}}(\tau)d\tau}.
\end{align}
We want now to consider the case, similar to Refs.~\cite{kottmann2023evaluating, Wiersema2024herecomessun}, in which we aim to perform a variational pulse-level optimization of a quantum cost function using the unitary Eq.~\eqref{eq:unitary_control_appx}. Therefore, we need to estimate the gradient of such quantum cost functions with respect to the control dynamics.

\subsubsection{Continuous GRAPE gradient}
We study now the derivative of the cost function $C(\boldsymbol{\phi}) = \tr{V_{\boldsymbol{\phi}}(t_0,T) \rho V^{\dagger}_{\boldsymbol{\phi}}(t_0,T) \mathcal{O}}$ -- see Eq.~\eqref{eq:control_cost_function} -- with respect to the control parameters $\phi_{ij}$, where $i$ refers to the $i$th real parameter value of the vector $\boldsymbol{\phi}_j$ that corresponds to the $j$th control operator \cite{kottmann2023evaluating, Leng2024}: 
\begin{align}
    &\pdv{C(\boldsymbol{\phi})}{\phi_{ij}} = \tr{\pdv{V_{\boldsymbol{\phi}}}{\phi_{ij}} \rho V_{\boldsymbol{\phi}}^{\dagger} \mathcal{O}} + \tr{V_{\boldsymbol{\phi}} \rho \pdv{V_{\boldsymbol{\phi}}^{\dagger}}{\phi_{ij}} \mathcal{O}} = + i \int_{t_0}^T \tr{\rho(T) V_{\boldsymbol{\phi}}(t_0, \tau) H_j V_{\boldsymbol{\phi}}(t_0, \tau) \mathcal{O}} \pdv{u^{\boldsymbol{\phi}_j}_j(\tau)}{\phi_{ij}} d\tau = \\ &\hspace{1.3cm}  -i \int_{t_0}^T \nonumber \tr{\rho(T) [H_j(t_0, \tau), \mathcal{O}]} \pdv{u^{\boldsymbol{\phi}_j}_j(\tau)}{\phi_{ij}} d\tau = -i \int_{t_0}^T s_j^{\boldsymbol{\phi}}(\tau)  \pdv{u^{\boldsymbol{\phi}_j}_j(\tau)}{\phi_{ij}} d\tau,
\end{align}
where $\rho(T) = V_{\boldsymbol{\phi}}(t_0,T) \rho V_{\boldsymbol{\phi}}(T,t_0)$, $H_j(t_0,\tau) = V_{\boldsymbol{\phi}}(t_0,\tau) H_j V_{\boldsymbol{\phi}}(\tau, t_0)$ and $s_j^{\boldsymbol{\phi}}(\tau) = \tr{\rho(T) [H_j(t_0, \tau), \mathcal{O}]}$. Here we used the property of the propagator $V(t_0, T) = V(t_0, \tau)V(\tau, T)$ and the cyclic properties of the trace. We see that the gradient of the quantum cost function with respect to the pulse parameters is equal to the time integral of a different time-dependent quantum cost function multiplied with the derivative of classical time-dependent signals $u^{\boldsymbol{\phi}_j}_j(\tau)$, which can be determined using classical automatic differentiation of the input pulses coming from an amplitude waveform generator. 

\subsubsection{LCU GRAPE circuit}
Similarly as in the case of the SU($d$) gradient, the quantum control gradient consists of a (continuous) sum of estimates of quantum observables. Due to the presence of the integral, this gradient can be evaluated using either the analog or the standard LCU approach \cite{Chakraborty2024implementingany}. Another possibility is to use the GRAPE approach \cite{Khaneja2005}: the time propagation of the pulse is divided in a sequence of time samples $t_0, t_1, ..., t_{N_T-1}$ for a pulse $u(t_0), u(t_1), ..., u(t_{N_T-1})$. In this case, the gradient is given by a discrete sum of terms -- see Eq.~\eqref{eq:grape_sum} --, which can be estimated using the circuit given in Fig.~\eqref{fig:grape_circuit} (b). 
The length of the circuit grows here as $O[N_T \log(N_T)]$, which is the number of discrete time step considered to represent the continuous integral given in Eq.~\eqref{eq:grape_continuous_grad}. 

As the GRAPE circuits is particularly challenging to implement on basic quantum devices, it cannot be used to perform gate set optimization and compilation on quantum hardware. It can be implemented, however, in contexts where quantum control is used as an ansatz for variational quantum algorithms \cite{Magann2021} and on a quantum hardware platform that allows for efficient implementations of multi-qubit operations. In particular, the amplified version of this gradient allows for quadratic speed-up over naive gradient evaluation, due to the presence of a (coherent) quantum summation process instead of a classical one. 

\section{Truncation Error in the Gradient Series} \label{sec:truncation_error}

By inserting the expansion defined in Eq.~\eqref{eq:SUN_gradient_omega} into the gradient given in Eq.~\eqref{eq:potq_grad}, we obtain: 
\begin{equation}
\nabla_{\btheta}C(\btheta) = \sum_{l=0}^{\infty} T_l(\btheta),
\end{equation}
with 
\begin{equation}
T_l = \frac{(-1)}{(l+1)!} \|\btheta\|^l \frac{1}{d} \Re \Tr\bigl[(-i)^l V \ad_H^l(G) U^\dagger\bigr], \label{eq:single_order_appx}
\end{equation}
and substituting $G = \nabla_{\btheta}\bar H(\btheta)$.
For an $L$th order expansion 
\begin{equation}
\nabla_{\btheta}C_L(\btheta) = \sum_{l=0}^{L} T_l(\btheta),
\end{equation}
the remainder is then given by 
\begin{equation}
    R_L(\btheta)=\sum_{l=L+1}^\infty T_l(\btheta). \label{eq:remainder}
\end{equation}    
In the following sections we will derive bounds and estimations for the square of the truncation error $R_L$.

\section{Gradient Convergence with Random Operators} \label{sec:random_operators}
We consider the Hermitian operators $A$ and $B$, which are sampled from 2 Gaussian Unitary Ensembles (GUE) of operators, 
parametrized by the Dyson indexes $\beta_A$, $\beta_B$ and operator norm expectation values $\Lambda_A^2 = \EX\left[\Tr(A^2)\right] = \EX\left[\sum_{i=1} (\lambda_i^A)^2\right]$, 
$\Lambda_B^2 = \EX\left[\Tr(B^2)\right] = \EX\left[\sum_{i=1} \lambda_i^B\right]$, where $\lambda_i^A$, $\lambda_i^B$ are the eigenvalues of $A$, $B$ respectively. The GUE is defined by the following Gaussian measure defined on the space of $d \times d$ complex Hermitian matrices \cite{Potters_Bouchaud_2020}:
\begin{align}\label{eq:gue_def}
    \frac{1}{Z_{\text{GUE}}} e^{- \frac{d}{2} \tr{H}^2},
\end{align}
where the partition function is given by $Z_{\text{GUE}} = 2^{d/2} \left(\frac{\pi}{d}\right)^{\frac{d^2}{2}}$. In the simulations and derivations that use random Hamiltonians -- see, e.g., Fig.~\ref{fig:sun_gradient_sim} -- such Hamiltonians are sampled using Eq.~\eqref{eq:gue_def}.
We furthermore consider the unitaries to be  $U, V \sim \text{ Haar on } U(d)$, where we first treat the case of independent $U$ and $V$ to then generalize to the correlated case as required for partially optimized systems.
We are interested in deriving statistical properties of the remainder defined in Equation~\eqref{eq:remainder}. 
Using the unitary invariance of the trace, we can express the operators in the diagonal basis of $A$, so that 
\begin{equation}
    A = U \underbrace{\diag(\lambda_1, \dots , \lambda_d)}_{\tA} U^\dagger,
\end{equation}
with the diagonal operator $\tilde{A}$. In this basis $B$ is expressed as 
\begin{equation}
    \tB =  \sum_{p,q=1}^d b_{pq} \ket{p}\bra{q}.
\end{equation} 

For the first order we consider the commutator, 
\begin{equation}
    [\tA, \tB ] = \sum_{p,q=1}^d (\lambda_p - \lambda_q) b_{pq} \ket{p}\bra{q}.
\end{equation}
This generalizes to the adjoint operator recursively:
\begin{equation}
    \ad_{\tA}^l(\tB) = [\tA, \ad_{\tA}^{l-1}(\tB)] = \sum_{p,q=1}^d (\lambda_p - \lambda_q)^l b_{pq} \ket{p}\bra{q},
\end{equation}
which conveniently does not introduce additional sums.
Multiplication with the unitary operator $V$, which we also represent in the eigenbasis of $A$ as $\tV = \sum_{p,q=1}^d v_{pq} \ket{p}\bra{q}$ gives 
\begin{equation}
    \tV\ad_{\tA}^l(\tB) = \sum_{p,q,r=1}^d (\lambda_p - \lambda_q)^l  v_{rp} b_{pq} \ket{r}\bra{q}.
\end{equation}
The (real) trace of this is then equivalent to the original operators, and reduces to 
\begin{equation} \Re{\Tr\left( V \ad_{A}^l(B) \right)} = \Re{\Tr\left( \tV \ad_{\tA}^l(\tB) \right)} = \sum_{p,q=1}^d (\lambda_p - \lambda_q)^l \Re{v_{qp} b_{pq}}.
\end{equation}
Similarly for the complete remainder term, we have 
\begin{equation}
  R_L = \sum_{l=L+1}^\infty T_l = \frac{1}{d} \sum_{p,q=1}^d \Re(v_{qp} b_{pq}) \sum_{l=L+1}^\infty \frac{(-1)^l \|\btheta\|^l (\lambda_p - \lambda_q)^l }{(l+1)!}. 
\end{equation}
The expectation value of the squared remainder follows as 
\begin{equation}
  \EX[R_L^2] = \frac{1}{d^2} \sum_{p,q=1}^d \EX\left[ \Re(v_{qp} b_{qp})^2 \right] \sum_{l,m=L+1}^\infty \frac{(-1)^{l+m} \|\btheta\|^{l+m} \EX\left[(\lambda_p - \lambda_q)^{l+m}\right] }{(m+1)! (l+1)!},  \label{eq:remainder_2}
\end{equation} 
Due to circular symmetry of the Haar measure unitary, we have $\EX[v_{qp}]=\EX[v_{qp}^2] = 0$ and $\EX[\|v_{qp}\|^2] = \tfrac{1}{d}$.
For $B$ meanwhile we have $\EX[b_{qp}] = 0$ and $\EX[\|b_{qp}\|^2] = \frac{\Lambda_B^2}{d}$ (for $p \neq q$). Hence we find $\boxed{\EX\left[ \Re(v_{qp} b_{qp})^2 \right] = \frac{1}{2} \EX[\|v_{qp}\|^2] \EX[\|b_{qp}\|^2] = \frac{\Lambda_B^2}{2 d^2}}$, 
summed over $d(d-1)$ index combinations $(p,q)$ with $p \neq q$. \\

We will approximate the eigenvalue distribution without correlations, neglecting the level repulsion, using the large $d$ limiting case known as the Wigner semi-circle law \cite{Wigner1958semicircle},
so that the eigenvalue distribution is approximated as
\begin{equation}
  \rho(\lambda) = \frac{2}{\pi R^2} \sqrt{R^2 - \lambda^2} \quad \text{for } |\lambda| \leq R,  
\end{equation}
with $R = 2 \Lambda_A$. 
Due to the symmetry of the Wigner semi-circle law, the odd moments vanish $\EX[\lambda^{2m+1}] = 0$.
for the even moments we have
\begin{equation}
  \EX[\lambda^{2m}] = \frac{1}{m+1}\left(\frac{R_H}{2}\right)^{2m} \binom{2m}{m} = C_n \Lambda_A^{2m},
\end{equation}
with the Catalan numbers $C_n = \frac{1}{n+1}\binom{2n}{n}$. Hence we find using binomial expansion 
\begin{equation}
  \EX[(\lambda_p-\lambda_q)^{2m}] = \sum_{l=0}^{2m} \binom{2m}{l} (-1)^l \EX[\lambda_p^{l}] \EX[\lambda_q^{2m-l}] = \Lambda_A^{2m} \sum_{l=0}^m \binom{2m}{2l} C_l C_{m-l},
\end{equation}
with all odd terms vanishing. \\

We hence find for the remainder in Equation~\eqref{eq:remainder_2}, using the substitution $2k = l+m$
\begin{equation}
  \EX[R_L^2] = \frac{\Lambda_B^2 d(d-1)}{2 d^4} \sum_{k=L+1}^\infty \left( \|\btheta\|\Lambda_A \right)^{2k} \sum_{l=0}^k \binom{2k}{2l} C_l C_{k-l} \sum_{m=L+1}^{2k-L-1} \frac{1}{(m+1)! (2k-m+1)!}. \label{eq:exact_RL2}
\end{equation}
To leading order (i.e., setting $k= L+1$) the final sum contributes only a single term $1/(L+2)!^2$, so that 
\begin{equation}\label{eq:approx_RL2}
  \EX[R_L^2] \approx \frac{\Lambda_B^2 (d-1)}{2 d^3} \left(\|\btheta\|\Lambda_A \right)^{2(L+1)} \sum_{l=0}^{L+1} \frac{\binom{2(L+1)}{2l} C_l C_{L+1-l} }{(L+2)!^2}.
\end{equation}

\subsection{Generalizing the result for the infidelity}
The gradient of the infidelity, which is computed with the Hilbert-Schmidt test, is given by:
\begin{align}\label{eq:grad_fidelity_appendix}
    \nabla_{\btheta} C(\btheta) = - \frac{2}{d^2} \Re{\tr{\left( V(\btheta)U^{\dagger} \right) \otimes \left( \nabla_{\btheta}V(\btheta)U^{\dagger} \right) }}.
\end{align}
By using the SU($d$) derivative:
\begin{align}\label{eq:omega_fidelity_appendix}
    \nabla_{\btheta}V(\btheta) = \sum_{l=0}^{\infty} \frac{(-i)^l}{(l+1)!} \norm{\btheta}_1^l \ad^{l}_{\bar{H}(\btheta)} \left(\bar{\nabla}_{\btheta} H(\btheta) \right) V(\btheta) = \sum_{l=0}^\infty \mathcal{W}_l,
\end{align}
so that 
\begin{equation}
    \nabla_{\btheta} C(\btheta) = \frac{2}{d} \sum_{l=0}^\infty \Re{\tr{\left( V(\btheta)U^{\dagger} \right) \otimes \left( \mathcal{W}_l U^{\dagger} \right) }} = \sum_{l=0}^\infty \mathcal{T}_l.
 \end{equation}

We now turn to the squared truncation error of the gradient of the (in)fidelity, which is given in Eq.~\eqref{eq:fidelity_grad}.  We denote this truncation error with $\mathcal{R}_L = \sum_{l=L+1}^\infty \mathcal{T}_l$. The procedure is the same as for the gradient of the POTQ circuit. In this case, Eq.~\eqref{eq:remainder_2} becomes:
\begin{equation}
  \EX[\mathcal{R}_L^2] = \frac{1}{d^4} \sum_{p,q,s =1}^d \EX\left[ \Re(v_{ss} \, v_{qp} b_{qp} )^2 \right] \sum_{l,m=L+1}^\infty \frac{(-1)^{l+m} \|\btheta\|^{l+m} \EX\left[(\lambda_p - \lambda_q)^{l+m}\right] }{(m+1)! (l+1)!},  \label{eq:remainder_2_I},
\end{equation}
where we reuse $[V(\btheta) U^\dagger]_{p,q} = v_{pq}$.
Using column phase invariance, the expectation values expand as 
\begin{align}
\EX \left[ \Re(v_{ss} \, v_{qp} b_{qp} )^2 \right] = \frac{1}{2} \EX[ \|v_{ss}\|^2 \, \|v_{qp}\|^2 \|b_{qp}\|^2 ] = \frac{1}{2} \EX[ \|v_{ss}\|^2 \, \|v_{qp}\|^2] \, \EX[ \|b_{qp}\|^2 ].
\end{align} 
The $v$-dependent term contains correlations between the components, due to the column normalization, which can easily be derived via the fourth order moments for Haar unitaries, so that 
\begin{align}
    \EX[ \|v_{ss}\|^2 \, \|v_{qp}\|^2] = \begin{cases}
        \frac{1}{d(d+1)} & \text{for} \quad s\in {p,q} \; q=s \\
        \frac{1}{d^2 - 1} & \text{for} \quad s\notin {p,q} 
    \end{cases}.
\end{align}
Hence, with $\EX[ \|b_{qp}\|^2 ]=\frac{\Lambda_B^2}{d}$, we find for the sum
\begin{equation}
    \sum_{p,q,s =1}^d \EX\left[ \Re(v_{ss} \, v_{qp} b_{qp} )^2 \right]  = \frac{1}{2} \sum_{p,q =1}^d\underbrace{\left(\frac{2}{d(d+1)}+ \frac{d-2}{d^2-1}\right)}_{=\frac{d^2-2}{d^2(d-1)}} \EX[ \|b_{qp}\|^2 ] = \frac{d^2-2}{2 d^3} \Lambda_B^2.
\end{equation}

The squared truncation error of the fidelity gradient can therefore be approximated asymptotically as:
\begin{align}
    \EX[\mathcal{R}_L^2] = \frac{4(d^2-2)}{d^3(d-1)} \EX[R_L^2] \approx \frac{4}{d^2} \EX[R_L^2].   
\end{align} 

\subsection{Partially optimized $U$}
When using the gradients to optimize $V$, we can no longer assume random $V$. In the following section we discuss the changes to our estimation theory introduced by these correlations for the most pertinent case, the minimization of the infidelity between $U$ and $V(\btheta)$. With a fidelity $F= \tfrac{1}{d^2} \left|\tr\left( V(\btheta) U^\dagger \right)\right|^2= \tfrac{1}{d^2} \left|\tr\left( W \right)\right|^2$ we can decompose $W= \sum_{i=0}^{d^2-1} w_i P_i$, where $P_i$ are the $n$ qubit normalized Pauli matrices, with $P_0 = I$ and the $w_i$ are the Pauli expansion coefficients, so that 
$|w_0|^2=F$ and $\sum_{i=1}^{d^2-1} |w_i|^2= 1-F$. We assume that the remaining $d^2-1$ components, that are not constrained by the fidelity, can still be considered as randomly sampled according to the Haar distribution, and only constrained to the reduced normalization $1-F$. In order to compute the resulting expectation value, let us rewrite the gradient of the infidelity from Eq.~\eqref{eq:grad_fidelity_appendix} as 
\begin{align}\label{eq:fid_grad_no_tensor}
    \grad_{\btheta} I(\btheta) = -\frac{2}{d} \Re\Big[ \underbrace{\frac{1}{d}\tr(V(\btheta) U^\dagger)^*}_{=w_0^*} \tr(\grad_{\btheta} V(\btheta) U^\dagger) \Big],
\end{align}
where we used the property of the trace $\tr{A \otimes B} = \tr{A} \tr{B}$ for complex matrices $A, B$.
We use again the expansion:
\begin{equation}
    \nabla_{\btheta}V
=\sum_{l=0}^\infty \frac{(-i)^l}{(l+1)!} \|\btheta\|^l\,
\ad^l_{\bar H(\btheta)}\!\big(\nabla_{\btheta}\bar H(\btheta)\big)\,V,
\end{equation}
so that
\begin{align}\label{eq:fid_grad_no_tensor_expanded}
    \grad_{\btheta} I(\btheta) = -\frac{2}{d} \sum_{l=0}^\infty \frac{1}{(l+1)!} \Re\Big[ w_0^* \; (-i)^l \|\btheta\|^l\,
\tr\left(\ad^l_{\bar H(\btheta)}\!\big(\nabla_{\btheta}\bar H(\btheta)\big)\,W \right) \Big] = -\frac{2}{d} \sum_{l=0}^\infty \mathcal{I}_l.
\end{align}
The trace of a product of matrices represented in the (normalized) Pauli expansion picture, i.e., $A = \sum_{i=1}^{d^2 - 1} a_i P_i$ and $B = \sum_{j=1}^{d^2 -1} b_j P_j$, where $P_i, P_j$ are normalized $n$-qubit Pauli strings, is given by:
\begin{equation}
    \tr(A B) = \sum_{i=0}^{d^2-1} a_i b_i.
\end{equation}
The adjoint operators are composed of commutators, which are traceless, and hence the first Pauli component of the adjoint vanishes.  
By separating $W= W_0+W_r$, so that $W_0= w_0 P_0$ and $W_r = \sum_{i=1}^{d^2 -1} w_i P_i$, we can transform 
\begin{equation}
    \tr\left(\ad^l_{\bar H(\btheta)}\!\big(\nabla_{\btheta}\bar H(\btheta)\big)\,W \right) = \tr\left(\ad^l_{\bar H(\btheta)}\!\big(\nabla_{\btheta}\bar H(\btheta)\big)\,W_r \right) = \sum_{p,q=1}^d (\lambda_p - \lambda_q)^l v_{qp} b_{pq},
\end{equation}
where we reuse the previous notation, where $B= \grad_{\btheta} \bar{H}(\btheta)$, but with the reduced amplitude $[W_r]_{qp} = v_{qp}$. We know, that $W=w_0 I + i \sqrt{\epsilon} A$, with $|w_0|^2 = F$, $\epsilon= 1- F$ and identifying $W_0 = w_0 I$, $W_r = i\sqrt{\epsilon} A$, where $A$ is unitary. We are then left with $\EX[|v_{pq}|^2]= \frac{\epsilon}{d} = \frac{1-F}{d}$. Hence, for the correlated remainder $\mathcal{R}_l = \sum_{l=L+1}^\infty \mathcal{I}_l$, the expected value over random Hamiltonians $H_0$ and $H_1$ reads:
\begin{align}
  \EX[R_L^2] &= \frac{4 F(\btheta)}{d^2} \sum_{p,q=1}^d \EX[ |v_{qp}|^2] \EX[|b_{qp}|^2 ] \sum_{l,m=L+1}^\infty \frac{(-1)^{l+m} \|\btheta\|^{l+m} \EX\left[(\lambda_p - \lambda_q)^{l+m}\right] }{(m+1)! (l+1)!}\nonumber \\
  &= \frac{4 F(\btheta)}{d^2} d(d-1) \frac{\left[ 1-F(\btheta) \right]\Lambda_B^2}{2 d^2} \sum_{l,m=L+1}^\infty \frac{(-1)^{l+m} \|\btheta\|^{l+m} \EX\left[(\lambda_p - \lambda_q)^{l+m}\right] }{(m+1)! (l+1)!} \nonumber \\
&=  \frac{2 (d-1) [F(\btheta)-F^2(\btheta)]\Lambda_B^2 }{d^3}  \sum_{l,m=L+1}^\infty \frac{(-1)^{l+m} \|\btheta\|^{l+m} \EX\left[(\lambda_p - \lambda_q)^{l+m}\right] }{(m+1)! (l+1)!}.\label{eq:expected_remainder_correlated_I}
\end{align} %


\begin{thebibliography}{128}%
	\makeatletter
	\providecommand \@ifxundefined [1]{%
		\@ifx{#1\undefined}
	}%
	\providecommand \@ifnum [1]{%
		\ifnum #1\expandafter \@firstoftwo
		\else \expandafter \@secondoftwo
		\fi
	}%
	\providecommand \@ifx [1]{%
		\ifx #1\expandafter \@firstoftwo
		\else \expandafter \@secondoftwo
		\fi
	}%
	\providecommand \natexlab [1]{#1}%
	\providecommand \enquote  [1]{``#1''}%
	\providecommand \bibnamefont  [1]{#1}%
	\providecommand \bibfnamefont [1]{#1}%
	\providecommand \citenamefont [1]{#1}%
	\providecommand \href@noop [0]{\@secondoftwo}%
	\providecommand \href [0]{\begingroup \@sanitize@url \@href}%
	\providecommand \@href[1]{\@@startlink{#1}\@@href}%
	\providecommand \@@href[1]{\endgroup#1\@@endlink}%
	\providecommand \@sanitize@url [0]{\catcode `\\12\catcode `\$12\catcode
		`\&12\catcode `\#12\catcode `\^12\catcode `\_12\catcode `\%12\relax}%
	\providecommand \@@startlink[1]{}%
	\providecommand \@@endlink[0]{}%
	\providecommand \url  [0]{\begingroup\@sanitize@url \@url }%
	\providecommand \@url [1]{\endgroup\@href {#1}{\urlprefix }}%
	\providecommand \urlprefix  [0]{URL }%
	\providecommand \Eprint [0]{\href }%
	\providecommand \doibase [0]{https://doi.org/}%
	\providecommand \selectlanguage [0]{\@gobble}%
	\providecommand \bibinfo  [0]{\@secondoftwo}%
	\providecommand \bibfield  [0]{\@secondoftwo}%
	\providecommand \translation [1]{[#1]}%
	\providecommand \BibitemOpen [0]{}%
	\providecommand \bibitemStop [0]{}%
	\providecommand \bibitemNoStop [0]{.\EOS\space}%
	\providecommand \EOS [0]{\spacefactor3000\relax}%
	\providecommand \BibitemShut  [1]{\csname bibitem#1\endcsname}%
	\let\auto@bib@innerbib\@empty
	\bibitem [{\citenamefont {Chakraborty}(2024)}]{Chakraborty2024implementingany}%
	\BibitemOpen
	\bibfield  {author} {\bibinfo {author} {\bibfnamefont {S.}~\bibnamefont
			{Chakraborty}},\ }\bibfield  {title} {\bibinfo {title} {Implementing any
			{L}inear {C}ombination of {U}nitaries on {I}ntermediate-term {Q}uantum
			{C}omputers},\ }\href {https://doi.org/10.22331/q-2024-10-10-1496} {\bibfield
		{journal} {\bibinfo  {journal} {{Quantum}}\ }\textbf {\bibinfo {volume}
			{8}},\ \bibinfo {pages} {1496} (\bibinfo {year} {2024})}\BibitemShut
	{NoStop}%
	\bibitem [{\citenamefont {Bharti}\ \emph {et~al.}(2022)\citenamefont {Bharti},
		\citenamefont {Cervera-Lierta}, \citenamefont {Kyaw}, \citenamefont {Haug},
		\citenamefont {Alperin-Lea}, \citenamefont {Anand}, \citenamefont {Degroote},
		\citenamefont {Heimonen}, \citenamefont {Kottmann}, \citenamefont {Menke},
		\citenamefont {Mok}, \citenamefont {Sim}, \citenamefont {Kwek},\ and\
		\citenamefont {Aspuru-Guzik}}]{Bharti_2022}%
	\BibitemOpen
	\bibfield  {author} {\bibinfo {author} {\bibfnamefont {K.}~\bibnamefont
			{Bharti}}, \bibinfo {author} {\bibfnamefont {A.}~\bibnamefont
			{Cervera-Lierta}}, \bibinfo {author} {\bibfnamefont {T.~H.}\ \bibnamefont
			{Kyaw}}, \bibinfo {author} {\bibfnamefont {T.}~\bibnamefont {Haug}}, \bibinfo
		{author} {\bibfnamefont {S.}~\bibnamefont {Alperin-Lea}}, \bibinfo {author}
		{\bibfnamefont {A.}~\bibnamefont {Anand}}, \bibinfo {author} {\bibfnamefont
			{M.}~\bibnamefont {Degroote}}, \bibinfo {author} {\bibfnamefont
			{H.}~\bibnamefont {Heimonen}}, \bibinfo {author} {\bibfnamefont {J.~S.}\
			\bibnamefont {Kottmann}}, \bibinfo {author} {\bibfnamefont {T.}~\bibnamefont
			{Menke}}, \bibinfo {author} {\bibfnamefont {W.-K.}\ \bibnamefont {Mok}},
		\bibinfo {author} {\bibfnamefont {S.}~\bibnamefont {Sim}}, \bibinfo {author}
		{\bibfnamefont {L.-C.}\ \bibnamefont {Kwek}},\ and\ \bibinfo {author}
		{\bibfnamefont {A.}~\bibnamefont {Aspuru-Guzik}},\ }\bibfield  {title}
	{\bibinfo {title} {Noisy intermediate-scale quantum algorithms},\ }\href
	{http://dx.doi.org/10.1103/RevModPhys.94.015004} {\bibfield  {journal}
		{\bibinfo  {journal} {Reviews of Modern Physics}\ }\textbf {\bibinfo {volume}
			{94}} (\bibinfo {year} {2022})}\BibitemShut {NoStop}%
	\bibitem [{\citenamefont {Kochenberger}\ \emph {et~al.}(2014)\citenamefont
		{Kochenberger}, \citenamefont {Hao}, \citenamefont {Glover}, \citenamefont
		{Lewis}, \citenamefont {L{\"u}}, \citenamefont {Wang},\ and\ \citenamefont
		{Wang}}]{Kochenberger2014}%
	\BibitemOpen
	\bibfield  {author} {\bibinfo {author} {\bibfnamefont {G.}~\bibnamefont
			{Kochenberger}}, \bibinfo {author} {\bibfnamefont {J.-K.}\ \bibnamefont
			{Hao}}, \bibinfo {author} {\bibfnamefont {F.}~\bibnamefont {Glover}},
		\bibinfo {author} {\bibfnamefont {M.}~\bibnamefont {Lewis}}, \bibinfo
		{author} {\bibfnamefont {Z.}~\bibnamefont {L{\"u}}}, \bibinfo {author}
		{\bibfnamefont {H.}~\bibnamefont {Wang}},\ and\ \bibinfo {author}
		{\bibfnamefont {Y.}~\bibnamefont {Wang}},\ }\bibfield  {title} {\bibinfo
		{title} {The unconstrained binary quadratic programming problem: a survey},\
	}\href {https://doi.org/10.1007/s10878-014-9734-0} {\bibfield  {journal}
		{\bibinfo  {journal} {Journal of Combinatorial Optimization}\ }\textbf
		{\bibinfo {volume} {28}},\ \bibinfo {pages} {58} (\bibinfo {year}
		{2014})}\BibitemShut {NoStop}%
	\bibitem [{\citenamefont {Wierichs}\ \emph {et~al.}(2022)\citenamefont
		{Wierichs}, \citenamefont {Izaac}, \citenamefont {Wang} \emph
		{et~al.}}]{Wierichs2022}%
	\BibitemOpen
	\bibfield  {author} {\bibinfo {author} {\bibfnamefont {D.}~\bibnamefont
			{Wierichs}}, \bibinfo {author} {\bibfnamefont {J.}~\bibnamefont {Izaac}},
		\bibinfo {author} {\bibfnamefont {C.}~\bibnamefont {Wang}}, \emph {et~al.},\
	}\bibfield  {title} {\bibinfo {title} {General parameter-shift rules for
			quantum gradients},\ }\href {https://doi.org/10.22331/q-2022-03-30-677}
	{\bibfield  {journal} {\bibinfo  {journal} {Quantum}\ }\textbf {\bibinfo
			{volume} {6}},\ \bibinfo {pages} {677} (\bibinfo {year} {2022})}\BibitemShut
	{NoStop}%
	\bibitem [{\citenamefont {Bowles}\ \emph {et~al.}(2024)\citenamefont {Bowles},
		\citenamefont {Wierichs},\ and\ \citenamefont
		{Park}}]{bowles2024backpropagationscalingparameterisedquantum}%
	\BibitemOpen
	\bibfield  {author} {\bibinfo {author} {\bibfnamefont {J.}~\bibnamefont
			{Bowles}}, \bibinfo {author} {\bibfnamefont {D.}~\bibnamefont {Wierichs}},\
		and\ \bibinfo {author} {\bibfnamefont {C.-Y.}\ \bibnamefont {Park}},\
	}\href@noop {} {\bibinfo {title} {Backpropagation scaling in parameterised
			quantum circuits}} (\bibinfo {year} {2024}),\ \Eprint
	{https://arxiv.org/abs/2306.14962} {arXiv:2306.14962} \BibitemShut {NoStop}%
	\bibitem [{\citenamefont {Abbas}\ \emph {et~al.}(2023)\citenamefont {Abbas},
		\citenamefont {King}, \citenamefont {Huang}, \citenamefont {Huggins},
		\citenamefont {Movassagh}, \citenamefont {Gilboa},\ and\ \citenamefont
		{McClean}}]{abbas2023quantumbackpropagationinformationreuse}%
	\BibitemOpen
	\bibfield  {author} {\bibinfo {author} {\bibfnamefont {A.}~\bibnamefont
			{Abbas}}, \bibinfo {author} {\bibfnamefont {R.}~\bibnamefont {King}},
		\bibinfo {author} {\bibfnamefont {H.-Y.}\ \bibnamefont {Huang}}, \bibinfo
		{author} {\bibfnamefont {W.~J.}\ \bibnamefont {Huggins}}, \bibinfo {author}
		{\bibfnamefont {R.}~\bibnamefont {Movassagh}}, \bibinfo {author}
		{\bibfnamefont {D.}~\bibnamefont {Gilboa}},\ and\ \bibinfo {author}
		{\bibfnamefont {J.~R.}\ \bibnamefont {McClean}},\ }\href@noop {} {\bibinfo
		{title} {On quantum backpropagation, information reuse, and cheating
			measurement collapse}} (\bibinfo {year} {2023}),\ \Eprint
	{https://arxiv.org/abs/2305.13362} {arXiv:2305.13362} \BibitemShut {NoStop}%
	\bibitem [{\citenamefont {Nielsen}\ and\ \citenamefont
		{Chuang}(2010)}]{NielsenChuang2010}%
	\BibitemOpen
	\bibfield  {author} {\bibinfo {author} {\bibfnamefont {M.~A.}\ \bibnamefont
			{Nielsen}}\ and\ \bibinfo {author} {\bibfnamefont {I.~L.}\ \bibnamefont
			{Chuang}},\ }\href {https://doi.org/10.1017/CBO9780511976667} {\emph
		{\bibinfo {title} {Quantum Computation and Quantum Information: 10th
				Anniversary Edition}}}\ (\bibinfo  {publisher} {Cambridge University Press},\
	\bibinfo {year} {2010})\BibitemShut {NoStop}%
	\bibitem [{\citenamefont {Krantz}\ \emph {et~al.}(2019)\citenamefont {Krantz},
		\citenamefont {Kjaergaard}, \citenamefont {Yan} \emph {et~al.}}]{Krantz2019}%
	\BibitemOpen
	\bibfield  {author} {\bibinfo {author} {\bibfnamefont {P.}~\bibnamefont
			{Krantz}}, \bibinfo {author} {\bibfnamefont {M.}~\bibnamefont {Kjaergaard}},
		\bibinfo {author} {\bibfnamefont {F.}~\bibnamefont {Yan}}, \emph {et~al.},\
	}\bibfield  {title} {\bibinfo {title} {{A quantum engineer's guide to
				superconducting qubits}},\ }\href {https://doi.org/10.1063/1.5089550}
	{\bibfield  {journal} {\bibinfo  {journal} {Applied Physics Reviews}\
		}\textbf {\bibinfo {volume} {6}},\ \bibinfo {pages} {021318} (\bibinfo {year}
		{2019})}\BibitemShut {NoStop}%
	\bibitem [{\citenamefont {Blais}\ \emph {et~al.}(2021)\citenamefont {Blais},
		\citenamefont {Grimsmo}, \citenamefont {Girvin} \emph {et~al.}}]{Blais2021}%
	\BibitemOpen
	\bibfield  {author} {\bibinfo {author} {\bibfnamefont {A.}~\bibnamefont
			{Blais}}, \bibinfo {author} {\bibfnamefont {A.~L.}\ \bibnamefont {Grimsmo}},
		\bibinfo {author} {\bibfnamefont {S.~M.}\ \bibnamefont {Girvin}}, \emph
		{et~al.},\ }\bibfield  {title} {\bibinfo {title} {Circuit quantum
			electrodynamics},\ }\href {https://doi.org/10.1103/RevModPhys.93.025005}
	{\bibfield  {journal} {\bibinfo  {journal} {Rev. Mod. Phys.}\ }\textbf
		{\bibinfo {volume} {93}},\ \bibinfo {pages} {025005} (\bibinfo {year}
		{2021})}\BibitemShut {NoStop}%
	\bibitem [{\citenamefont {Häffner}\ \emph {et~al.}(2008)\citenamefont
		{Häffner}, \citenamefont {Roos},\ and\ \citenamefont {Blatt}}]{Haffner2008}%
	\BibitemOpen
	\bibfield  {author} {\bibinfo {author} {\bibfnamefont {H.}~\bibnamefont
			{Häffner}}, \bibinfo {author} {\bibfnamefont {C.}~\bibnamefont {Roos}},\
		and\ \bibinfo {author} {\bibfnamefont {R.}~\bibnamefont {Blatt}},\ }\bibfield
	{title} {\bibinfo {title} {Quantum computing with trapped ions},\ }\href
	{https://doi.org/https://doi.org/10.1016/j.physrep.2008.09.003} {\bibfield
		{journal} {\bibinfo  {journal} {Physics Reports}\ }\textbf {\bibinfo {volume}
			{469}},\ \bibinfo {pages} {155} (\bibinfo {year} {2008})}\BibitemShut
	{NoStop}%
	\bibitem [{\citenamefont {Saffman}\ \emph {et~al.}(2010)\citenamefont
		{Saffman}, \citenamefont {Walker},\ and\ \citenamefont
		{M\o{}lmer}}]{Saffman2010}%
	\BibitemOpen
	\bibfield  {author} {\bibinfo {author} {\bibfnamefont {M.}~\bibnamefont
			{Saffman}}, \bibinfo {author} {\bibfnamefont {T.~G.}\ \bibnamefont
			{Walker}},\ and\ \bibinfo {author} {\bibfnamefont {K.}~\bibnamefont
			{M\o{}lmer}},\ }\bibfield  {title} {\bibinfo {title} {Quantum information
			with rydberg atoms},\ }\href {https://doi.org/10.1103/RevModPhys.82.2313}
	{\bibfield  {journal} {\bibinfo  {journal} {Rev. Mod. Phys.}\ }\textbf
		{\bibinfo {volume} {82}},\ \bibinfo {pages} {2313} (\bibinfo {year}
		{2010})}\BibitemShut {NoStop}%
	\bibitem [{\citenamefont {DeCross}\ \emph {et~al.}(2023)\citenamefont
		{DeCross}, \citenamefont {Chertkov}, \citenamefont {Kohagen},\ and\
		\citenamefont {Foss-Feig}}]{Decross2022}%
	\BibitemOpen
	\bibfield  {author} {\bibinfo {author} {\bibfnamefont {M.}~\bibnamefont
			{DeCross}}, \bibinfo {author} {\bibfnamefont {E.}~\bibnamefont {Chertkov}},
		\bibinfo {author} {\bibfnamefont {M.}~\bibnamefont {Kohagen}},\ and\ \bibinfo
		{author} {\bibfnamefont {M.}~\bibnamefont {Foss-Feig}},\ }\bibfield  {title}
	{\bibinfo {title} {Qubit-reuse compilation with mid-circuit measurement and
			reset},\ }\href {https://doi.org/10.1103/PhysRevX.13.041057} {\bibfield
		{journal} {\bibinfo  {journal} {Phys. Rev. X}\ }\textbf {\bibinfo {volume}
			{13}},\ \bibinfo {pages} {041057} (\bibinfo {year} {2023})}\BibitemShut
	{NoStop}%
	\bibitem [{\citenamefont {Farhi}\ \emph {et~al.}(2014)\citenamefont {Farhi},
		\citenamefont {Goldstone},\ and\ \citenamefont {Gutmann}}]{farhi2014quantum}%
	\BibitemOpen
	\bibfield  {author} {\bibinfo {author} {\bibfnamefont {E.}~\bibnamefont
			{Farhi}}, \bibinfo {author} {\bibfnamefont {J.}~\bibnamefont {Goldstone}},\
		and\ \bibinfo {author} {\bibfnamefont {S.}~\bibnamefont {Gutmann}},\
	}\href@noop {} {\bibinfo {title} {A quantum approximate optimization
			algorithm}} (\bibinfo {year} {2014}),\ \Eprint
	{https://arxiv.org/abs/1411.4028} {arXiv:1411.4028} \BibitemShut {NoStop}%
	\bibitem [{\citenamefont {Cerezo}\ \emph {et~al.}(2021)\citenamefont {Cerezo},
		\citenamefont {Arrasmith}, \citenamefont {Babbush}, \citenamefont {Benjamin},
		\citenamefont {Endo}, \citenamefont {Fujii}, \citenamefont {McClean},
		\citenamefont {Mitarai}, \citenamefont {Yuan}, \citenamefont {Cincio},\ and\
		\citenamefont {Coles}}]{Cerezo2021review}%
	\BibitemOpen
	\bibfield  {author} {\bibinfo {author} {\bibfnamefont {M.}~\bibnamefont
			{Cerezo}}, \bibinfo {author} {\bibfnamefont {A.}~\bibnamefont {Arrasmith}},
		\bibinfo {author} {\bibfnamefont {R.}~\bibnamefont {Babbush}}, \bibinfo
		{author} {\bibfnamefont {S.~C.}\ \bibnamefont {Benjamin}}, \bibinfo {author}
		{\bibfnamefont {S.}~\bibnamefont {Endo}}, \bibinfo {author} {\bibfnamefont
			{K.}~\bibnamefont {Fujii}}, \bibinfo {author} {\bibfnamefont {J.~R.}\
			\bibnamefont {McClean}}, \bibinfo {author} {\bibfnamefont {K.}~\bibnamefont
			{Mitarai}}, \bibinfo {author} {\bibfnamefont {X.}~\bibnamefont {Yuan}},
		\bibinfo {author} {\bibfnamefont {L.}~\bibnamefont {Cincio}},\ and\ \bibinfo
		{author} {\bibfnamefont {P.~J.}\ \bibnamefont {Coles}},\ }\bibfield  {title}
	{\bibinfo {title} {Variational quantum algorithms},\ }\href
	{https://doi.org/10.1038/s42254-021-00348-9} {\bibfield  {journal} {\bibinfo
			{journal} {Nature Reviews Physics}\ }\textbf {\bibinfo {volume} {3}},\
		\bibinfo {pages} {625} (\bibinfo {year} {2021})}\BibitemShut {NoStop}%
	\bibitem [{\citenamefont {Peruzzo}\ \emph {et~al.}(2014)\citenamefont
		{Peruzzo}, \citenamefont {McClean}, \citenamefont {Shadbolt}, \citenamefont
		{Yung}, \citenamefont {Zhou}, \citenamefont {Love}, \citenamefont
		{Aspuru-Guzik},\ and\ \citenamefont {O'Brien}}]{peruzzo2014variational}%
	\BibitemOpen
	\bibfield  {author} {\bibinfo {author} {\bibfnamefont {A.}~\bibnamefont
			{Peruzzo}}, \bibinfo {author} {\bibfnamefont {J.}~\bibnamefont {McClean}},
		\bibinfo {author} {\bibfnamefont {P.}~\bibnamefont {Shadbolt}}, \bibinfo
		{author} {\bibfnamefont {M.-H.}\ \bibnamefont {Yung}}, \bibinfo {author}
		{\bibfnamefont {X.-Q.}\ \bibnamefont {Zhou}}, \bibinfo {author}
		{\bibfnamefont {P.~J.}\ \bibnamefont {Love}}, \bibinfo {author}
		{\bibfnamefont {A.}~\bibnamefont {Aspuru-Guzik}},\ and\ \bibinfo {author}
		{\bibfnamefont {J.~L.}\ \bibnamefont {O'Brien}},\ }\bibfield  {title}
	{\bibinfo {title} {A variational eigenvalue solver on a photonic quantum
			processor},\ }\href {https://doi.org/10.1038/ncomms5213} {\bibfield
		{journal} {\bibinfo  {journal} {Nature Communications}\ }\textbf {\bibinfo
			{volume} {5}},\ \bibinfo {pages} {4213} (\bibinfo {year} {2014})}\BibitemShut
	{NoStop}%
	\bibitem [{\citenamefont {Huang}\ \emph
		{et~al.}(2020{\natexlab{a}})\citenamefont {Huang}, \citenamefont {Kueng},\
		and\ \citenamefont {Preskill}}]{Huang2020shadow}%
	\BibitemOpen
	\bibfield  {author} {\bibinfo {author} {\bibfnamefont {H.-Y.}\ \bibnamefont
			{Huang}}, \bibinfo {author} {\bibfnamefont {R.}~\bibnamefont {Kueng}},\ and\
		\bibinfo {author} {\bibfnamefont {J.}~\bibnamefont {Preskill}},\ }\bibfield
	{title} {\bibinfo {title} {Predicting many properties of a quantum system
			from very few measurements},\ }\href
	{https://doi.org/10.1038/s41567-020-0932-7} {\bibfield  {journal} {\bibinfo
			{journal} {Nature Physics}\ }\textbf {\bibinfo {volume} {16}},\ \bibinfo
		{pages} {1050} (\bibinfo {year} {2020}{\natexlab{a}})}\BibitemShut {NoStop}%
	\bibitem [{\citenamefont {Huang}\ \emph {et~al.}(2021)\citenamefont {Huang},
		\citenamefont {Kueng},\ and\ \citenamefont {Preskill}}]{Huang2021}%
	\BibitemOpen
	\bibfield  {author} {\bibinfo {author} {\bibfnamefont {H.-Y.}\ \bibnamefont
			{Huang}}, \bibinfo {author} {\bibfnamefont {R.}~\bibnamefont {Kueng}},\ and\
		\bibinfo {author} {\bibfnamefont {J.}~\bibnamefont {Preskill}},\ }\bibfield
	{title} {\bibinfo {title} {Information-theoretic bounds on quantum advantage
			in machine learning},\ }\href@noop {} {\bibfield  {journal} {\bibinfo
			{journal} {Physical Review Letters}\ }\textbf {\bibinfo {volume} {126}}
		(\bibinfo {year} {2021})}\BibitemShut {NoStop}%
	\bibitem [{\citenamefont {Huggins}\ \emph {et~al.}(2022)\citenamefont
		{Huggins}, \citenamefont {Wan}, \citenamefont {McClean} \emph
		{et~al.}}]{Huggins2022}%
	\BibitemOpen
	\bibfield  {author} {\bibinfo {author} {\bibfnamefont {W.~J.}\ \bibnamefont
			{Huggins}}, \bibinfo {author} {\bibfnamefont {K.}~\bibnamefont {Wan}},
		\bibinfo {author} {\bibfnamefont {J.}~\bibnamefont {McClean}}, \emph
		{et~al.},\ }\bibfield  {title} {\bibinfo {title} {Nearly optimal quantum
			algorithm for estimating multiple expectation values},\ }\href
	{https://doi.org/10.1103/PhysRevLett.129.240501} {\bibfield  {journal}
		{\bibinfo  {journal} {Phys. Rev. Lett.}\ }\textbf {\bibinfo {volume} {129}},\
		\bibinfo {pages} {240501} (\bibinfo {year} {2022})}\BibitemShut {NoStop}%
	\bibitem [{\citenamefont {Wada}\ \emph
		{et~al.}(2025{\natexlab{a}})\citenamefont {Wada}, \citenamefont {Yamamoto},\
		and\ \citenamefont {Yoshioka}}]{Wada_2025}%
	\BibitemOpen
	\bibfield  {author} {\bibinfo {author} {\bibfnamefont {K.}~\bibnamefont
			{Wada}}, \bibinfo {author} {\bibfnamefont {N.}~\bibnamefont {Yamamoto}},\
		and\ \bibinfo {author} {\bibfnamefont {N.}~\bibnamefont {Yoshioka}},\
	}\bibfield  {title} {\bibinfo {title} {Heisenberg-limited adaptive gradient
			estimation for multiple observables},\ }\href
	{http://dx.doi.org/10.1103/PRXQuantum.6.020308} {\bibfield  {journal}
		{\bibinfo  {journal} {PRX Quantum}\ }\textbf {\bibinfo {volume} {6}}
		(\bibinfo {year} {2025}{\natexlab{a}})}\BibitemShut {NoStop}%
	\bibitem [{\citenamefont {Babbush}\ \emph {et~al.}(2019)\citenamefont
		{Babbush}, \citenamefont {Berry},\ and\ \citenamefont {Neven}}]{Babbush2019}%
	\BibitemOpen
	\bibfield  {author} {\bibinfo {author} {\bibfnamefont {R.}~\bibnamefont
			{Babbush}}, \bibinfo {author} {\bibfnamefont {D.~W.}\ \bibnamefont {Berry}},\
		and\ \bibinfo {author} {\bibfnamefont {H.}~\bibnamefont {Neven}},\ }\bibfield
	{title} {\bibinfo {title} {Quantum simulation of the sachdev-ye-kitaev model
			by asymmetric qubitization},\ }\href
	{https://doi.org/10.1103/PhysRevA.99.040301} {\bibfield  {journal} {\bibinfo
			{journal} {Phys. Rev. A}\ }\textbf {\bibinfo {volume} {99}},\ \bibinfo
		{pages} {040301} (\bibinfo {year} {2019})}\BibitemShut {NoStop}%
	\bibitem [{\citenamefont {Wecker}\ \emph {et~al.}(2015)\citenamefont {Wecker},
		\citenamefont {Hastings},\ and\ \citenamefont {Troyer}}]{Wecker2015}%
	\BibitemOpen
	\bibfield  {author} {\bibinfo {author} {\bibfnamefont {D.}~\bibnamefont
			{Wecker}}, \bibinfo {author} {\bibfnamefont {M.~B.}\ \bibnamefont
			{Hastings}},\ and\ \bibinfo {author} {\bibfnamefont {M.}~\bibnamefont
			{Troyer}},\ }\bibfield  {title} {\bibinfo {title} {Progress towards practical
			quantum variational algorithms},\ }\href
	{https://doi.org/10.1103/PhysRevA.92.042303} {\bibfield  {journal} {\bibinfo
			{journal} {Phys. Rev. A}\ }\textbf {\bibinfo {volume} {92}},\ \bibinfo
		{pages} {042303} (\bibinfo {year} {2015})}\BibitemShut {NoStop}%
	\bibitem [{\citenamefont {Rubin}\ \emph {et~al.}(2018)\citenamefont {Rubin},
		\citenamefont {Babbush},\ and\ \citenamefont {McClean}}]{Rubin2018}%
	\BibitemOpen
	\bibfield  {author} {\bibinfo {author} {\bibfnamefont {N.~C.}\ \bibnamefont
			{Rubin}}, \bibinfo {author} {\bibfnamefont {R.}~\bibnamefont {Babbush}},\
		and\ \bibinfo {author} {\bibfnamefont {J.}~\bibnamefont {McClean}},\
	}\bibfield  {title} {\bibinfo {title} {Application of fermionic marginal
			constraints to hybrid quantum algorithms},\ }\href
	{https://doi.org/10.1088/1367-2630/aab919} {\bibfield  {journal} {\bibinfo
			{journal} {New Journal of Physics}\ }\textbf {\bibinfo {volume} {20}},\
		\bibinfo {pages} {053020} (\bibinfo {year} {2018})}\BibitemShut {NoStop}%
	\bibitem [{\citenamefont
		{Consiglio}(2025)}]{consiglio2025variationalquantumalgorithmsmanybody}%
	\BibitemOpen
	\bibfield  {author} {\bibinfo {author} {\bibfnamefont {M.}~\bibnamefont
			{Consiglio}},\ }\href@noop {} {\bibinfo {title} {Variational quantum
			algorithms for many-body systems}} (\bibinfo {year} {2025}),\ \Eprint
	{https://arxiv.org/abs/2502.11985} {arXiv:2502.11985} \BibitemShut {NoStop}%
	\bibitem [{\citenamefont {Biamonte}\ \emph {et~al.}(2017)\citenamefont
		{Biamonte}, \citenamefont {Wittek}, \citenamefont {Pancotti}, \citenamefont
		{Rebentrost}, \citenamefont {Wiebe},\ and\ \citenamefont
		{Lloyd}}]{Biamonte2017}%
	\BibitemOpen
	\bibfield  {author} {\bibinfo {author} {\bibfnamefont {J.}~\bibnamefont
			{Biamonte}}, \bibinfo {author} {\bibfnamefont {P.}~\bibnamefont {Wittek}},
		\bibinfo {author} {\bibfnamefont {N.}~\bibnamefont {Pancotti}}, \bibinfo
		{author} {\bibfnamefont {P.}~\bibnamefont {Rebentrost}}, \bibinfo {author}
		{\bibfnamefont {N.}~\bibnamefont {Wiebe}},\ and\ \bibinfo {author}
		{\bibfnamefont {S.}~\bibnamefont {Lloyd}},\ }\bibfield  {title} {\bibinfo
		{title} {Quantum machine learning},\ }\href
	{https://doi.org/10.1038/nature23474} {\bibfield  {journal} {\bibinfo
			{journal} {Nature}\ }\textbf {\bibinfo {volume} {549}},\ \bibinfo {pages}
		{195} (\bibinfo {year} {2017})}\BibitemShut {NoStop}%
	\bibitem [{\citenamefont {Jerbi}\ \emph
		{et~al.}(2023{\natexlab{a}})\citenamefont {Jerbi}, \citenamefont {Fiderer},
		\citenamefont {Poulsen~Nautrup}, \citenamefont {K{\"u}bler}, \citenamefont
		{Briegel},\ and\ \citenamefont {Dunjko}}]{Jerbi2023}%
	\BibitemOpen
	\bibfield  {author} {\bibinfo {author} {\bibfnamefont {S.}~\bibnamefont
			{Jerbi}}, \bibinfo {author} {\bibfnamefont {L.~J.}\ \bibnamefont {Fiderer}},
		\bibinfo {author} {\bibfnamefont {H.}~\bibnamefont {Poulsen~Nautrup}},
		\bibinfo {author} {\bibfnamefont {J.~M.}\ \bibnamefont {K{\"u}bler}},
		\bibinfo {author} {\bibfnamefont {H.~J.}\ \bibnamefont {Briegel}},\ and\
		\bibinfo {author} {\bibfnamefont {V.}~\bibnamefont {Dunjko}},\ }\bibfield
	{title} {\bibinfo {title} {Quantum machine learning beyond kernel methods},\
	}\href {https://doi.org/10.1038/s41467-023-36159-y} {\bibfield  {journal}
		{\bibinfo  {journal} {Nature Communications}\ }\textbf {\bibinfo {volume}
			{14}},\ \bibinfo {pages} {517} (\bibinfo {year}
		{2023}{\natexlab{a}})}\BibitemShut {NoStop}%
	\bibitem [{\citenamefont {Schuld}\ \emph {et~al.}(2019)\citenamefont {Schuld},
		\citenamefont {Bergholm}, \citenamefont {Gogolin}, \citenamefont {Izaac},\
		and\ \citenamefont {Killoran}}]{Schuld2019}%
	\BibitemOpen
	\bibfield  {author} {\bibinfo {author} {\bibfnamefont {M.}~\bibnamefont
			{Schuld}}, \bibinfo {author} {\bibfnamefont {V.}~\bibnamefont {Bergholm}},
		\bibinfo {author} {\bibfnamefont {C.}~\bibnamefont {Gogolin}}, \bibinfo
		{author} {\bibfnamefont {J.}~\bibnamefont {Izaac}},\ and\ \bibinfo {author}
		{\bibfnamefont {N.}~\bibnamefont {Killoran}},\ }\bibfield  {title} {\bibinfo
		{title} {Evaluating analytic gradients on quantum hardware},\ }\href
	{https://doi.org/10.1103/PhysRevA.99.032331} {\bibfield  {journal} {\bibinfo
			{journal} {Phys. Rev. A}\ }\textbf {\bibinfo {volume} {99}},\ \bibinfo
		{pages} {032331} (\bibinfo {year} {2019})}\BibitemShut {NoStop}%
	\bibitem [{\citenamefont {Schuld}\ \emph {et~al.}(2020)\citenamefont {Schuld},
		\citenamefont {Bocharov}, \citenamefont {Svore},\ and\ \citenamefont
		{Wiebe}}]{Schuld2020}%
	\BibitemOpen
	\bibfield  {author} {\bibinfo {author} {\bibfnamefont {M.}~\bibnamefont
			{Schuld}}, \bibinfo {author} {\bibfnamefont {A.}~\bibnamefont {Bocharov}},
		\bibinfo {author} {\bibfnamefont {K.~M.}\ \bibnamefont {Svore}},\ and\
		\bibinfo {author} {\bibfnamefont {N.}~\bibnamefont {Wiebe}},\ }\bibfield
	{title} {\bibinfo {title} {Circuit-centric quantum classifiers},\ }\href
	{http://dx.doi.org/10.1103/PhysRevA.101.032308} {\bibfield  {journal}
		{\bibinfo  {journal} {Physical Review A}\ }\textbf {\bibinfo {volume} {101}}
		(\bibinfo {year} {2020})}\BibitemShut {NoStop}%
	\bibitem [{\citenamefont {Schatzki}\ \emph {et~al.}(2021)\citenamefont
		{Schatzki}, \citenamefont {Arrasmith}, \citenamefont {Coles} \emph
		{et~al.}}]{schatzki2021entangleddatasetsquantummachine}%
	\BibitemOpen
	\bibfield  {author} {\bibinfo {author} {\bibfnamefont {L.}~\bibnamefont
			{Schatzki}}, \bibinfo {author} {\bibfnamefont {A.}~\bibnamefont {Arrasmith}},
		\bibinfo {author} {\bibfnamefont {P.~J.}\ \bibnamefont {Coles}}, \emph
		{et~al.},\ }\href@noop {} {\bibinfo {title} {Entangled datasets for quantum
			machine learning}} (\bibinfo {year} {2021}),\ \Eprint
	{https://arxiv.org/abs/2109.03400} {arXiv:2109.03400} \BibitemShut {NoStop}%
	\bibitem [{\citenamefont {Preti}\ and\ \citenamefont
		{Bern\'ad}(2024)}]{Preti2024stat}%
	\BibitemOpen
	\bibfield  {author} {\bibinfo {author} {\bibfnamefont {F.}~\bibnamefont
			{Preti}}\ and\ \bibinfo {author} {\bibfnamefont {J.~Z.}\ \bibnamefont
			{Bern\'ad}},\ }\bibfield  {title} {\bibinfo {title} {Statistical evaluation
			and optimization of entanglement purification protocols},\ }\href
	{https://doi.org/10.1103/PhysRevA.110.022619} {\bibfield  {journal} {\bibinfo
			{journal} {Phys. Rev. A}\ }\textbf {\bibinfo {volume} {110}},\ \bibinfo
		{pages} {022619} (\bibinfo {year} {2024})}\BibitemShut {NoStop}%
	\bibitem [{\citenamefont {Schirmer}\ \emph {et~al.}(2022)\citenamefont
		{Schirmer}, \citenamefont {Langbein}, \citenamefont {Weidner},\ and\
		\citenamefont {Jonckheere}}]{schirmer2020robust}%
	\BibitemOpen
	\bibfield  {author} {\bibinfo {author} {\bibfnamefont {S.~G.}\ \bibnamefont
			{Schirmer}}, \bibinfo {author} {\bibfnamefont {F.~C.}\ \bibnamefont
			{Langbein}}, \bibinfo {author} {\bibfnamefont {C.~A.}\ \bibnamefont
			{Weidner}},\ and\ \bibinfo {author} {\bibfnamefont {E.}~\bibnamefont
			{Jonckheere}},\ }\bibfield  {title} {\bibinfo {title} {Robust control
			performance for open quantum systems},\ }\href
	{https://doi.org/10.1109/TAC.2022.3181249} {\bibfield  {journal} {\bibinfo
			{journal} {IEEE Transactions on Automatic Control}\ }\textbf {\bibinfo
			{volume} {67}},\ \bibinfo {pages} {6012} (\bibinfo {year}
		{2022})}\BibitemShut {NoStop}%
	\bibitem [{\citenamefont {Preti}\ \emph
		{et~al.}(2022{\natexlab{a}})\citenamefont {Preti}, \citenamefont {Calarco},\
		and\ \citenamefont {Motzoi}}]{Preti2022soma}%
	\BibitemOpen
	\bibfield  {author} {\bibinfo {author} {\bibfnamefont {F.}~\bibnamefont
			{Preti}}, \bibinfo {author} {\bibfnamefont {T.}~\bibnamefont {Calarco}},\
		and\ \bibinfo {author} {\bibfnamefont {F.}~\bibnamefont {Motzoi}},\
	}\bibfield  {title} {\bibinfo {title} {Continuous quantum gate sets and
			pulse-class meta-optimization},\ }\href
	{https://doi.org/10.1103/PRXQuantum.3.040311} {\bibfield  {journal} {\bibinfo
			{journal} {PRX Quantum}\ }\textbf {\bibinfo {volume} {3}},\ \bibinfo {pages}
		{040311} (\bibinfo {year} {2022}{\natexlab{a}})}\BibitemShut {NoStop}%
	\bibitem [{\citenamefont {Cervera-Lierta}\ \emph {et~al.}(2021)\citenamefont
		{Cervera-Lierta}, \citenamefont {Kottmann},\ and\ \citenamefont
		{Aspuru-Guzik}}]{Cervera-Lierta2021}%
	\BibitemOpen
	\bibfield  {author} {\bibinfo {author} {\bibfnamefont {A.}~\bibnamefont
			{Cervera-Lierta}}, \bibinfo {author} {\bibfnamefont {J.~S.}\ \bibnamefont
			{Kottmann}},\ and\ \bibinfo {author} {\bibfnamefont {A.}~\bibnamefont
			{Aspuru-Guzik}},\ }\bibfield  {title} {\bibinfo {title} {Meta-variational
			quantum eigensolver: Learning energy profiles of parameterized hamiltonians
			for quantum simulation},\ }\href
	{https://doi.org/10.1103/PRXQuantum.2.020329} {\bibfield  {journal} {\bibinfo
			{journal} {PRX Quantum}\ }\textbf {\bibinfo {volume} {2}},\ \bibinfo {pages}
		{020329} (\bibinfo {year} {2021})}\BibitemShut {NoStop}%
	\bibitem [{\citenamefont {Oshnik}\ \emph {et~al.}(2022)\citenamefont {Oshnik},
		\citenamefont {Rembold}, \citenamefont {Calarco} \emph
		{et~al.}}]{Oshnik2022}%
	\BibitemOpen
	\bibfield  {author} {\bibinfo {author} {\bibfnamefont {N.}~\bibnamefont
			{Oshnik}}, \bibinfo {author} {\bibfnamefont {P.}~\bibnamefont {Rembold}},
		\bibinfo {author} {\bibfnamefont {T.}~\bibnamefont {Calarco}}, \emph
		{et~al.},\ }\bibfield  {title} {\bibinfo {title} {Robust magnetometry with
			single nitrogen-vacancy centers via two-step optimization},\ }\href
	{https://doi.org/10.1103/PhysRevA.106.013107} {\bibfield  {journal} {\bibinfo
			{journal} {Phys. Rev. A}\ }\textbf {\bibinfo {volume} {106}},\ \bibinfo
		{pages} {013107} (\bibinfo {year} {2022})}\BibitemShut {NoStop}%
	\bibitem [{\citenamefont {Dalgaard}\ \emph {et~al.}(2022)\citenamefont
		{Dalgaard}, \citenamefont {Weidner},\ and\ \citenamefont
		{Motzoi}}]{Dalgaard2022dup}%
	\BibitemOpen
	\bibfield  {author} {\bibinfo {author} {\bibfnamefont {M.}~\bibnamefont
			{Dalgaard}}, \bibinfo {author} {\bibfnamefont {C.~A.}\ \bibnamefont
			{Weidner}},\ and\ \bibinfo {author} {\bibfnamefont {F.}~\bibnamefont
			{Motzoi}},\ }\bibfield  {title} {\bibinfo {title} {Dynamical uncertainty
			propagation with noisy quantum parameters},\ }\href
	{https://doi.org/10.1103/PhysRevLett.128.150503} {\bibfield  {journal}
		{\bibinfo  {journal} {Phys. Rev. Lett.}\ }\textbf {\bibinfo {volume} {128}},\
		\bibinfo {pages} {150503} (\bibinfo {year} {2022})}\BibitemShut {NoStop}%
	\bibitem [{\citenamefont {Preti}\ \emph
		{et~al.}(2022{\natexlab{b}})\citenamefont {Preti}, \citenamefont {Calarco},
		\citenamefont {Torres} \emph {et~al.}}]{Preti2022purif}%
	\BibitemOpen
	\bibfield  {author} {\bibinfo {author} {\bibfnamefont {F.}~\bibnamefont
			{Preti}}, \bibinfo {author} {\bibfnamefont {T.}~\bibnamefont {Calarco}},
		\bibinfo {author} {\bibfnamefont {J.~M.}\ \bibnamefont {Torres}}, \emph
		{et~al.},\ }\bibfield  {title} {\bibinfo {title} {Optimal two-qubit gates in
			recurrence protocols of entanglement purification},\ }\href
	{https://doi.org/10.1103/PhysRevA.106.022422} {\bibfield  {journal} {\bibinfo
			{journal} {Phys. Rev. A}\ }\textbf {\bibinfo {volume} {106}},\ \bibinfo
		{pages} {022422} (\bibinfo {year} {2022}{\natexlab{b}})}\BibitemShut
	{NoStop}%
	\bibitem [{\citenamefont {Li}\ \emph {et~al.}(2017)\citenamefont {Li},
		\citenamefont {Yang}, \citenamefont {Peng} \emph {et~al.}}]{LiJun2017}%
	\BibitemOpen
	\bibfield  {author} {\bibinfo {author} {\bibfnamefont {J.}~\bibnamefont
			{Li}}, \bibinfo {author} {\bibfnamefont {X.}~\bibnamefont {Yang}}, \bibinfo
		{author} {\bibfnamefont {X.}~\bibnamefont {Peng}}, \emph {et~al.},\
	}\bibfield  {title} {\bibinfo {title} {Hybrid quantum-classical approach to
			quantum optimal control},\ }\href
	{https://doi.org/10.1103/PhysRevLett.118.150503} {\bibfield  {journal}
		{\bibinfo  {journal} {Phys. Rev. Lett.}\ }\textbf {\bibinfo {volume} {118}},\
		\bibinfo {pages} {150503} (\bibinfo {year} {2017})}\BibitemShut {NoStop}%
	\bibitem [{\citenamefont {Jaksch}\ \emph {et~al.}(2023)\citenamefont {Jaksch},
		\citenamefont {Givi}, \citenamefont {Daley},\ and\ \citenamefont
		{Rung}}]{Jaksch2023}%
	\BibitemOpen
	\bibfield  {author} {\bibinfo {author} {\bibfnamefont {D.}~\bibnamefont
			{Jaksch}}, \bibinfo {author} {\bibfnamefont {P.}~\bibnamefont {Givi}},
		\bibinfo {author} {\bibfnamefont {A.~J.}\ \bibnamefont {Daley}},\ and\
		\bibinfo {author} {\bibfnamefont {T.}~\bibnamefont {Rung}},\ }\bibfield
	{title} {\bibinfo {title} {Variational quantum algorithms for computational
			fluid dynamics},\ }\href {https://doi.org/10.2514/1.j062426} {\bibfield
		{journal} {\bibinfo  {journal} {AIAA Journal}\ }\textbf {\bibinfo {volume}
			{61}},\ \bibinfo {pages} {1885–1894} (\bibinfo {year} {2023})}\BibitemShut
	{NoStop}%
	\bibitem [{\citenamefont {Somma}\ \emph {et~al.}(2002)\citenamefont {Somma},
		\citenamefont {Ortiz}, \citenamefont {Gubernatis} \emph
		{et~al.}}]{Somma2002}%
	\BibitemOpen
	\bibfield  {author} {\bibinfo {author} {\bibfnamefont {R.}~\bibnamefont
			{Somma}}, \bibinfo {author} {\bibfnamefont {G.}~\bibnamefont {Ortiz}},
		\bibinfo {author} {\bibfnamefont {J.~E.}\ \bibnamefont {Gubernatis}}, \emph
		{et~al.},\ }\bibfield  {title} {\bibinfo {title} {Simulating physical
			phenomena by quantum networks},\ }\href
	{https://doi.org/10.1103/PhysRevA.65.042323} {\bibfield  {journal} {\bibinfo
			{journal} {Phys. Rev. A}\ }\textbf {\bibinfo {volume} {65}},\ \bibinfo
		{pages} {042323} (\bibinfo {year} {2002})}\BibitemShut {NoStop}%
	\bibitem [{\citenamefont {Childs}\ and\ \citenamefont
		{Wiebe}(2012)}]{Childs2012}%
	\BibitemOpen
	\bibfield  {author} {\bibinfo {author} {\bibfnamefont {A.~M.}\ \bibnamefont
			{Childs}}\ and\ \bibinfo {author} {\bibfnamefont {N.}~\bibnamefont {Wiebe}},\
	}\bibfield  {title} {\bibinfo {title} {Hamiltonian simulation using linear
			combinations of unitary operations},\ }\href
	{http://arxiv.org/abs/1202.5822v1} {\bibfield  {journal} {\bibinfo  {journal}
			{Quantum Information and Computation}\ }\textbf {\bibinfo {volume} {12}},\
		\bibinfo {pages} {901} (\bibinfo {year} {2012})}\BibitemShut {NoStop}%
	\bibitem [{\citenamefont {Childs}\ \emph {et~al.}(2017)\citenamefont {Childs},
		\citenamefont {Kothari},\ and\ \citenamefont {Somma}}]{Childs2017}%
	\BibitemOpen
	\bibfield  {author} {\bibinfo {author} {\bibfnamefont {A.~M.}\ \bibnamefont
			{Childs}}, \bibinfo {author} {\bibfnamefont {R.}~\bibnamefont {Kothari}},\
		and\ \bibinfo {author} {\bibfnamefont {R.~D.}\ \bibnamefont {Somma}},\
	}\bibfield  {title} {\bibinfo {title} {Quantum algorithm for systems of
			linear equations with exponentially improved dependence on precision},\
	}\href {https://doi.org/10.1137/16M1087072} {\bibfield  {journal} {\bibinfo
			{journal} {SIAM Journal on Computing}\ }\textbf {\bibinfo {volume} {46}},\
		\bibinfo {pages} {1920} (\bibinfo {year} {2017})}\BibitemShut {NoStop}%
	\bibitem [{\citenamefont {Chowdhury}\ and\ \citenamefont
		{Somma}(2017)}]{chowdhury2016quantumalgorithmsgibbssampling}%
	\BibitemOpen
	\bibfield  {author} {\bibinfo {author} {\bibfnamefont {A.~N.}\ \bibnamefont
			{Chowdhury}}\ and\ \bibinfo {author} {\bibfnamefont {R.~D.}\ \bibnamefont
			{Somma}},\ }\bibfield  {title} {\bibinfo {title} {Quantum algorithms for
			gibbs sampling and hitting-time estimation},\ }\href
	{https://dl.acm.org/doi/10.5555/3179483.3179486} {\bibfield  {journal}
		{\bibinfo  {journal} {Quantum Info. Comput.}\ }\textbf {\bibinfo {volume}
			{17}},\ \bibinfo {pages} {41–64} (\bibinfo {year} {2017})}\BibitemShut
	{NoStop}%
	\bibitem [{\citenamefont {Holmes}\ \emph {et~al.}(2023)\citenamefont {Holmes},
		\citenamefont {Coble}, \citenamefont {Sornborger} \emph
		{et~al.}}]{holmes2023nonlinear}%
	\BibitemOpen
	\bibfield  {author} {\bibinfo {author} {\bibfnamefont {Z.}~\bibnamefont
			{Holmes}}, \bibinfo {author} {\bibfnamefont {N.}~\bibnamefont {Coble}},
		\bibinfo {author} {\bibfnamefont {A.~T.}\ \bibnamefont {Sornborger}}, \emph
		{et~al.},\ }\href@noop {} {\bibinfo {title} {On nonlinear transformations in
			quantum computation}} (\bibinfo {year} {2023}),\ \Eprint
	{https://arxiv.org/abs/2112.12307} {arXiv:2112.12307} \BibitemShut {NoStop}%
	\bibitem [{\citenamefont {Wada}\ \emph
		{et~al.}(2025{\natexlab{b}})\citenamefont {Wada}, \citenamefont {Yamamoto},\
		and\ \citenamefont {Yoshioka}}]{Wada2025}%
	\BibitemOpen
	\bibfield  {author} {\bibinfo {author} {\bibfnamefont {K.}~\bibnamefont
			{Wada}}, \bibinfo {author} {\bibfnamefont {N.}~\bibnamefont {Yamamoto}},\
		and\ \bibinfo {author} {\bibfnamefont {N.}~\bibnamefont {Yoshioka}},\
	}\bibfield  {title} {\bibinfo {title} {Heisenberg-limited adaptive gradient
			estimation for multiple observables},\ }\href
	{https://doi.org/10.1103/PRXQuantum.6.020308} {\bibfield  {journal} {\bibinfo
			{journal} {PRX Quantum}\ }\textbf {\bibinfo {volume} {6}},\ \bibinfo {pages}
		{020308} (\bibinfo {year} {2025}{\natexlab{b}})}\BibitemShut {NoStop}%
	\bibitem [{\citenamefont {Wiersema}\ \emph {et~al.}(2024)\citenamefont
		{Wiersema}, \citenamefont {Lewis}, \citenamefont {Wierichs} \emph
		{et~al.}}]{Wiersema2024herecomessun}%
	\BibitemOpen
	\bibfield  {author} {\bibinfo {author} {\bibfnamefont {R.}~\bibnamefont
			{Wiersema}}, \bibinfo {author} {\bibfnamefont {D.}~\bibnamefont {Lewis}},
		\bibinfo {author} {\bibfnamefont {D.}~\bibnamefont {Wierichs}}, \emph
		{et~al.},\ }\bibfield  {title} {\bibinfo {title} {Here comes the {SU}({N}):
			multivariate quantum gates and gradients},\ }\href
	{https://doi.org/10.22331/q-2024-03-07-1275} {\bibfield  {journal} {\bibinfo
			{journal} {{Quantum}}\ }\textbf {\bibinfo {volume} {8}},\ \bibinfo {pages}
		{1275} (\bibinfo {year} {2024})}\BibitemShut {NoStop}%
	\bibitem [{\citenamefont {Banchi}\ and\ \citenamefont
		{Crooks}(2021)}]{Banchi2021measuringanalytic}%
	\BibitemOpen
	\bibfield  {author} {\bibinfo {author} {\bibfnamefont {L.}~\bibnamefont
			{Banchi}}\ and\ \bibinfo {author} {\bibfnamefont {G.~E.}\ \bibnamefont
			{Crooks}},\ }\bibfield  {title} {\bibinfo {title} {Measuring {A}nalytic
			{G}radients of {G}eneral {Q}uantum {E}volution with the {S}tochastic
			{P}arameter {S}hift {R}ule},\ }\href
	{https://doi.org/10.22331/q-2021-01-25-386} {\bibfield  {journal} {\bibinfo
			{journal} {{Quantum}}\ }\textbf {\bibinfo {volume} {5}},\ \bibinfo {pages}
		{386} (\bibinfo {year} {2021})}\BibitemShut {NoStop}%
	\bibitem [{\citenamefont
		{Crooks}(2019)}]{crooks2019gradientsparameterizedquantumgates}%
	\BibitemOpen
	\bibfield  {author} {\bibinfo {author} {\bibfnamefont {G.~E.}\ \bibnamefont
			{Crooks}},\ }\href@noop {} {\bibinfo {title} {Gradients of parameterized
			quantum gates using the parameter-shift rule and gate decomposition}}
	(\bibinfo {year} {2019}),\ \Eprint {https://arxiv.org/abs/1905.13311}
	{arXiv:1905.13311 [quant-ph]} \BibitemShut {NoStop}%
	\bibitem [{\citenamefont {Izmaylov}\ \emph {et~al.}(2021)\citenamefont
		{Izmaylov}, \citenamefont {Lang},\ and\ \citenamefont {Yen}}]{Izmaylov2021}%
	\BibitemOpen
	\bibfield  {author} {\bibinfo {author} {\bibfnamefont {A.~F.}\ \bibnamefont
			{Izmaylov}}, \bibinfo {author} {\bibfnamefont {R.~A.}\ \bibnamefont {Lang}},\
		and\ \bibinfo {author} {\bibfnamefont {T.-C.}\ \bibnamefont {Yen}},\
	}\bibfield  {title} {\bibinfo {title} {Analytic gradients in variational
			quantum algorithms: Algebraic extensions of the parameter-shift rule to
			general unitary transformations},\ }\href
	{https://link.aps.org/doi/10.1103/PhysRevA.104.062443} {\bibfield  {journal}
		{\bibinfo  {journal} {Phys. Rev. A}\ }\textbf {\bibinfo {volume} {104}},\
		\bibinfo {pages} {062443} (\bibinfo {year} {2021})}\BibitemShut {NoStop}%
	\bibitem [{\citenamefont {Kyriienko}\ and\ \citenamefont
		{Elfving}(2021)}]{Kyriienko2021}%
	\BibitemOpen
	\bibfield  {author} {\bibinfo {author} {\bibfnamefont {O.}~\bibnamefont
			{Kyriienko}}\ and\ \bibinfo {author} {\bibfnamefont {V.~E.}\ \bibnamefont
			{Elfving}},\ }\bibfield  {title} {\bibinfo {title} {Generalized quantum
			circuit differentiation rules},\ }\href
	{https://doi.org/10.1103/PhysRevA.104.052417} {\bibfield  {journal} {\bibinfo
			{journal} {Phys. Rev. A}\ }\textbf {\bibinfo {volume} {104}},\ \bibinfo
		{pages} {052417} (\bibinfo {year} {2021})}\BibitemShut {NoStop}%
	\bibitem [{\citenamefont {Mitarai}\ \emph {et~al.}(2018)\citenamefont
		{Mitarai}, \citenamefont {Negoro}, \citenamefont {Kitagawa},\ and\
		\citenamefont {Fujii}}]{Mitarai2018}%
	\BibitemOpen
	\bibfield  {author} {\bibinfo {author} {\bibfnamefont {K.}~\bibnamefont
			{Mitarai}}, \bibinfo {author} {\bibfnamefont {M.}~\bibnamefont {Negoro}},
		\bibinfo {author} {\bibfnamefont {M.}~\bibnamefont {Kitagawa}},\ and\
		\bibinfo {author} {\bibfnamefont {K.}~\bibnamefont {Fujii}},\ }\bibfield
	{title} {\bibinfo {title} {Quantum circuit learning},\ }\href
	{https://doi.org/10.1103/PhysRevA.98.032309} {\bibfield  {journal} {\bibinfo
			{journal} {Phys. Rev. A}\ }\textbf {\bibinfo {volume} {98}},\ \bibinfo
		{pages} {032309} (\bibinfo {year} {2018})}\BibitemShut {NoStop}%
	\bibitem [{\citenamefont {Li}\ \emph {et~al.}(2024)\citenamefont {Li},
		\citenamefont {Dulal}, \citenamefont {Ohorodnikov}, \citenamefont {Wang},\
		and\ \citenamefont {Ding}}]{li2024efficientquantumgradienthigherorder}%
	\BibitemOpen
	\bibfield  {author} {\bibinfo {author} {\bibfnamefont {D.}~\bibnamefont
			{Li}}, \bibinfo {author} {\bibfnamefont {D.}~\bibnamefont {Dulal}}, \bibinfo
		{author} {\bibfnamefont {M.}~\bibnamefont {Ohorodnikov}}, \bibinfo {author}
		{\bibfnamefont {H.}~\bibnamefont {Wang}},\ and\ \bibinfo {author}
		{\bibfnamefont {Y.}~\bibnamefont {Ding}},\ }\href@noop {} {\bibinfo {title}
		{Efficient quantum gradient and higher-order derivative estimation via
			generalized hadamard test}} (\bibinfo {year} {2024}),\ \Eprint
	{https://arxiv.org/abs/2408.05406} {arXiv:2408.05406} \BibitemShut {NoStop}%
	\bibitem [{\citenamefont {Gily{\'e}n}\ \emph {et~al.}(2016)\citenamefont
		{Gily{\'e}n}, \citenamefont {Kiss},\ and\ \citenamefont {Jex}}]{Gilyen2016}%
	\BibitemOpen
	\bibfield  {author} {\bibinfo {author} {\bibfnamefont {A.}~\bibnamefont
			{Gily{\'e}n}}, \bibinfo {author} {\bibfnamefont {T.}~\bibnamefont {Kiss}},\
		and\ \bibinfo {author} {\bibfnamefont {I.}~\bibnamefont {Jex}},\ }\bibfield
	{title} {\bibinfo {title} {Exponential sensitivity and its cost in quantum
			physics},\ }\href {https://doi.org/10.1038/srep20076} {\bibfield  {journal}
		{\bibinfo  {journal} {Sci. Rep.}\ }\textbf {\bibinfo {volume} {6}},\ \bibinfo
		{pages} {20076} (\bibinfo {year} {2016})}\BibitemShut {NoStop}%
	\bibitem [{\citenamefont {Leng}\ \emph {et~al.}(2024)\citenamefont {Leng},
		\citenamefont {Peng}, \citenamefont {Qiao} \emph {et~al.}}]{Leng2024}%
	\BibitemOpen
	\bibfield  {author} {\bibinfo {author} {\bibfnamefont {J.}~\bibnamefont
			{Leng}}, \bibinfo {author} {\bibfnamefont {Y.}~\bibnamefont {Peng}}, \bibinfo
		{author} {\bibfnamefont {Y.-L.}\ \bibnamefont {Qiao}}, \emph {et~al.},\
	}\bibfield  {title} {\bibinfo {title} {Differentiable analog quantum
			computing for optimization and control},\ }in\ \href
	{https://openreview.net/pdf?id=gc87Cs_V9qR} {\emph {\bibinfo {booktitle}
			{Proceedings of the 36th International Conference on Neural Information
				Processing Systems}}},\ \bibinfo {series and number} {NIPS '22}\ (\bibinfo
	{publisher} {Curran Associates Inc.},\ \bibinfo {address} {Red Hook, NY,
		USA},\ \bibinfo {year} {2024})\BibitemShut {NoStop}%
	\bibitem [{\citenamefont {Kottmann}\ and\ \citenamefont
		{Killoran}(2023)}]{kottmann2023evaluating}%
	\BibitemOpen
	\bibfield  {author} {\bibinfo {author} {\bibfnamefont {K.}~\bibnamefont
			{Kottmann}}\ and\ \bibinfo {author} {\bibfnamefont {N.}~\bibnamefont
			{Killoran}},\ }\href@noop {} {\bibinfo {title} {Evaluating analytic gradients
			of pulse programs on quantum computers}} (\bibinfo {year} {2023}),\ \Eprint
	{https://arxiv.org/abs/2309.16756} {arXiv:2309.16756} \BibitemShut {NoStop}%
	\bibitem [{\citenamefont {Khatri}\ \emph {et~al.}(2019)\citenamefont {Khatri},
		\citenamefont {LaRose}, \citenamefont {Poremba} \emph {et~al.}}]{Khatri2019}%
	\BibitemOpen
	\bibfield  {author} {\bibinfo {author} {\bibfnamefont {S.}~\bibnamefont
			{Khatri}}, \bibinfo {author} {\bibfnamefont {R.}~\bibnamefont {LaRose}},
		\bibinfo {author} {\bibfnamefont {A.}~\bibnamefont {Poremba}}, \emph
		{et~al.},\ }\bibfield  {title} {\bibinfo {title} {Quantum-assisted quantum
			compiling},\ }\href {https://doi.org/10.22331/q-2019-05-13-140} {\bibfield
		{journal} {\bibinfo  {journal} {Quantum}\ }\textbf {\bibinfo {volume} {3}},\
		\bibinfo {pages} {140} (\bibinfo {year} {2019})}\BibitemShut {NoStop}%
	\bibitem [{\citenamefont {Iten}\ \emph {et~al.}(2016)\citenamefont {Iten},
		\citenamefont {Colbeck}, \citenamefont {Kukuljan}, \citenamefont {Home},\
		and\ \citenamefont {Christandl}}]{Iten_2016}%
	\BibitemOpen
	\bibfield  {author} {\bibinfo {author} {\bibfnamefont {R.}~\bibnamefont
			{Iten}}, \bibinfo {author} {\bibfnamefont {R.}~\bibnamefont {Colbeck}},
		\bibinfo {author} {\bibfnamefont {I.}~\bibnamefont {Kukuljan}}, \bibinfo
		{author} {\bibfnamefont {J.}~\bibnamefont {Home}},\ and\ \bibinfo {author}
		{\bibfnamefont {M.}~\bibnamefont {Christandl}},\ }\bibfield  {title}
	{\bibinfo {title} {Quantum circuits for isometries},\ }\href
	{http://dx.doi.org/10.1103/PhysRevA.93.032318} {\bibfield  {journal}
		{\bibinfo  {journal} {Physical Review A}\ }\textbf {\bibinfo {volume} {93}}
		(\bibinfo {year} {2016})}\BibitemShut {NoStop}%
	\bibitem [{\citenamefont {da~Silva}\ and\ \citenamefont
		{Park}(2022)}]{da_Silva_2022}%
	\BibitemOpen
	\bibfield  {author} {\bibinfo {author} {\bibfnamefont {A.~J.}\ \bibnamefont
			{da~Silva}}\ and\ \bibinfo {author} {\bibfnamefont {D.~K.}\ \bibnamefont
			{Park}},\ }\bibfield  {title} {\bibinfo {title} {Linear-depth quantum
			circuits for multiqubit controlled gates},\ }\href
	{https://doi.org/10.1103%2Fphysreva.106.042602} {\bibfield  {journal}
		{\bibinfo  {journal} {Physical Review A}\ }\textbf {\bibinfo {volume} {106}}
		(\bibinfo {year} {2022})}\BibitemShut {NoStop}%
	\bibitem [{\citenamefont {Barenco}\ \emph {et~al.}(1995)\citenamefont
		{Barenco}, \citenamefont {Bennett}, \citenamefont {Cleve} \emph
		{et~al.}}]{Barenco1995}%
	\BibitemOpen
	\bibfield  {author} {\bibinfo {author} {\bibfnamefont {A.}~\bibnamefont
			{Barenco}}, \bibinfo {author} {\bibfnamefont {C.~H.}\ \bibnamefont
			{Bennett}}, \bibinfo {author} {\bibfnamefont {R.}~\bibnamefont {Cleve}},
		\emph {et~al.},\ }\bibfield  {title} {\bibinfo {title} {Elementary gates for
			quantum computation},\ }\href {https://doi.org/10.1103/physreva.52.3457}
	{\bibfield  {journal} {\bibinfo  {journal} {Phys. Rev. A}\ }\textbf {\bibinfo
			{volume} {52}},\ \bibinfo {pages} {3457–3467} (\bibinfo {year}
		{1995})}\BibitemShut {NoStop}%
	\bibitem [{\citenamefont {Park}\ \emph {et~al.}(2019)\citenamefont {Park},
		\citenamefont {Sinayskiy}, \citenamefont {Fingerhuth} \emph
		{et~al.}}]{Park2019}%
	\BibitemOpen
	\bibfield  {author} {\bibinfo {author} {\bibfnamefont {D.~K.}\ \bibnamefont
			{Park}}, \bibinfo {author} {\bibfnamefont {I.}~\bibnamefont {Sinayskiy}},
		\bibinfo {author} {\bibfnamefont {M.}~\bibnamefont {Fingerhuth}}, \emph
		{et~al.},\ }\bibfield  {title} {\bibinfo {title} {Parallel quantum
			trajectories via forking for sampling without redundancy},\ }\href
	{https://doi.org/10.1088/1367-2630/ab35fb} {\bibfield  {journal} {\bibinfo
			{journal} {New Journal of Physics}\ }\textbf {\bibinfo {volume} {21}},\
		\bibinfo {pages} {083024} (\bibinfo {year} {2019})}\BibitemShut {NoStop}%
	\bibitem [{\citenamefont {Preti}\ \emph {et~al.}(2024)\citenamefont {Preti},
		\citenamefont {Schilling}, \citenamefont {Jerbi} \emph
		{et~al.}}]{preti2024hybrid}%
	\BibitemOpen
	\bibfield  {author} {\bibinfo {author} {\bibfnamefont {F.}~\bibnamefont
			{Preti}}, \bibinfo {author} {\bibfnamefont {M.}~\bibnamefont {Schilling}},
		\bibinfo {author} {\bibfnamefont {S.}~\bibnamefont {Jerbi}}, \emph {et~al.},\
	}\bibfield  {title} {\bibinfo {title} {Hybrid discrete-continuous compilation
			of trapped-ion quantum circuits with deep reinforcement learning},\ }\href
	{https://doi.org/10.22331/q-2024-05-14-1343} {\bibfield  {journal} {\bibinfo
			{journal} {{Quantum}}\ }\textbf {\bibinfo {volume} {8}},\ \bibinfo {pages}
		{1343} (\bibinfo {year} {2024})}\BibitemShut {NoStop}%
	\bibitem [{\citenamefont {Mohamed}\ \emph {et~al.}(2020)\citenamefont
		{Mohamed}, \citenamefont {Rosca}, \citenamefont {Figurnov} \emph
		{et~al.}}]{Shakir2020}%
	\BibitemOpen
	\bibfield  {author} {\bibinfo {author} {\bibfnamefont {S.}~\bibnamefont
			{Mohamed}}, \bibinfo {author} {\bibfnamefont {M.}~\bibnamefont {Rosca}},
		\bibinfo {author} {\bibfnamefont {M.}~\bibnamefont {Figurnov}}, \emph
		{et~al.},\ }\bibfield  {title} {\bibinfo {title} {Monte carlo gradient
			estimation in machine learning},\ }\href
	{http://jmlr.org/papers/v21/19-346.html} {\bibfield  {journal} {\bibinfo
			{journal} {Journal of Machine Learning Research}\ }\textbf {\bibinfo {volume}
			{21}},\ \bibinfo {pages} {1} (\bibinfo {year} {2020})}\BibitemShut {NoStop}%
	\bibitem [{\citenamefont {Sequeira}\ \emph {et~al.}(2023)\citenamefont
		{Sequeira}, \citenamefont {Santos},\ and\ \citenamefont
		{Barbosa}}]{Sequeira2023}%
	\BibitemOpen
	\bibfield  {author} {\bibinfo {author} {\bibfnamefont {A.}~\bibnamefont
			{Sequeira}}, \bibinfo {author} {\bibfnamefont {L.~P.}\ \bibnamefont
			{Santos}},\ and\ \bibinfo {author} {\bibfnamefont {L.~S.}\ \bibnamefont
			{Barbosa}},\ }\bibfield  {title} {\bibinfo {title} {Policy gradients using
			variational quantum circuits},\ }\href
	{https://doi.org/10.1007/s42484-023-00101-8} {\bibfield  {journal} {\bibinfo
			{journal} {Quantum Machine Intelligence}\ }\textbf {\bibinfo {volume} {5}},\
		\bibinfo {pages} {18} (\bibinfo {year} {2023})}\BibitemShut {NoStop}%
	\bibitem [{\citenamefont {Hayashi}(2017)}]{Hayashi}%
	\BibitemOpen
	\bibfield  {author} {\bibinfo {author} {\bibfnamefont {M.}~\bibnamefont
			{Hayashi}},\ }\href {https://doi.org/10.1007/978-3-662-49725-8} {\emph
		{\bibinfo {title} {Quantum Information Theory: Mathematical Foundation}}}\
	(\bibinfo  {publisher} {Springer Berlin Heidelberg},\ \bibinfo {year}
	{2017})\BibitemShut {NoStop}%
	\bibitem [{\citenamefont {Wasserman}(2010)}]{wasserman2010statistics}%
	\BibitemOpen
	\bibfield  {author} {\bibinfo {author} {\bibfnamefont {L.}~\bibnamefont
			{Wasserman}},\ }\href
	{http://www.amazon.de/All-Statistics-Statistical-Inference-Springer/dp/1441923225/ref=sr_1_2?ie=UTF8&qid=1356099149&sr=8-2}
	{\emph {\bibinfo {title} {All of statistics : a concise course in statistical
				inference}}}\ (\bibinfo  {publisher} {Springer},\ \bibinfo {address} {New
		York},\ \bibinfo {year} {2010})\BibitemShut {NoStop}%
	\bibitem [{\citenamefont {Motzoi}\ \emph {et~al.}(2017)\citenamefont {Motzoi},
		\citenamefont {Kaicher},\ and\ \citenamefont {Wilhelm}}]{Motzoi2017}%
	\BibitemOpen
	\bibfield  {author} {\bibinfo {author} {\bibfnamefont {F.}~\bibnamefont
			{Motzoi}}, \bibinfo {author} {\bibfnamefont {M.~P.}\ \bibnamefont
			{Kaicher}},\ and\ \bibinfo {author} {\bibfnamefont {F.~K.}\ \bibnamefont
			{Wilhelm}},\ }\bibfield  {title} {\bibinfo {title} {Linear and logarithmic
			time compositions of quantum many-body operators},\ }\href
	{https://doi.org/https://doi.org/10.1103/PhysRevLett.119.160503} {\bibfield
		{journal} {\bibinfo  {journal} {Phys. Rev. Lett.}\ }\textbf {\bibinfo
			{volume} {119}},\ \bibinfo {pages} {160503} (\bibinfo {year}
		{2017})}\BibitemShut {NoStop}%
	\bibitem [{\citenamefont {Araujo}\ \emph {et~al.}(2021)\citenamefont {Araujo},
		\citenamefont {Park}, \citenamefont {Petruccione},\ and\ \citenamefont
		{da~Silva}}]{Araujo2021}%
	\BibitemOpen
	\bibfield  {author} {\bibinfo {author} {\bibfnamefont {I.~F.}\ \bibnamefont
			{Araujo}}, \bibinfo {author} {\bibfnamefont {D.~K.}\ \bibnamefont {Park}},
		\bibinfo {author} {\bibfnamefont {F.}~\bibnamefont {Petruccione}},\ and\
		\bibinfo {author} {\bibfnamefont {A.~J.}\ \bibnamefont {da~Silva}},\
	}\bibfield  {title} {\bibinfo {title} {A divide-and-conquer algorithm for
			quantum state preparation},\ }\href
	{https://doi.org/10.1038/s41598-021-85474-1} {\bibfield  {journal} {\bibinfo
			{journal} {Scientific Reports}\ }\textbf {\bibinfo {volume} {11}},\ \bibinfo
		{pages} {6329} (\bibinfo {year} {2021})}\BibitemShut {NoStop}%
	\bibitem [{\citenamefont {Bullen}(2003)}]{Bullen2003}%
	\BibitemOpen
	\bibfield  {author} {\bibinfo {author} {\bibfnamefont {P.~S.}\ \bibnamefont
			{Bullen}},\ }\href {https://doi.org/10.1007/978-94-017-0399-4} {\emph
		{\bibinfo {title} {Handbook of Means and Their Inequalities}}}\ (\bibinfo
	{publisher} {Springer Netherlands},\ \bibinfo {year} {2003})\BibitemShut
	{NoStop}%
	\bibitem [{\citenamefont {Javadi-Abhari}\ \emph {et~al.}(2024)\citenamefont
		{Javadi-Abhari}, \citenamefont {Treinish}, \citenamefont {Krsulich},
		\citenamefont {Wood}, \citenamefont {Lishman}, \citenamefont {Gacon},
		\citenamefont {Martiel}, \citenamefont {Nation}, \citenamefont {Bishop},
		\citenamefont {Cross}, \citenamefont {Johnson},\ and\ \citenamefont
		{Gambetta}}]{qiskit2024}%
	\BibitemOpen
	\bibfield  {author} {\bibinfo {author} {\bibfnamefont {A.}~\bibnamefont
			{Javadi-Abhari}}, \bibinfo {author} {\bibfnamefont {M.}~\bibnamefont
			{Treinish}}, \bibinfo {author} {\bibfnamefont {K.}~\bibnamefont {Krsulich}},
		\bibinfo {author} {\bibfnamefont {C.~J.}\ \bibnamefont {Wood}}, \bibinfo
		{author} {\bibfnamefont {J.}~\bibnamefont {Lishman}}, \bibinfo {author}
		{\bibfnamefont {J.}~\bibnamefont {Gacon}}, \bibinfo {author} {\bibfnamefont
			{S.}~\bibnamefont {Martiel}}, \bibinfo {author} {\bibfnamefont {P.~D.}\
			\bibnamefont {Nation}}, \bibinfo {author} {\bibfnamefont {L.~S.}\
			\bibnamefont {Bishop}}, \bibinfo {author} {\bibfnamefont {A.~W.}\
			\bibnamefont {Cross}}, \bibinfo {author} {\bibfnamefont {B.~R.}\ \bibnamefont
			{Johnson}},\ and\ \bibinfo {author} {\bibfnamefont {J.~M.}\ \bibnamefont
			{Gambetta}},\ }\href@noop {} {\bibinfo {title} {Quantum computing with
			{Q}iskit}} (\bibinfo {year} {2024}),\ \Eprint
	{https://arxiv.org/abs/2405.08810} {arXiv:2405.08810} \BibitemShut {NoStop}%
	\bibitem [{\citenamefont {Brassard}\ \emph {et~al.}(2002)\citenamefont
		{Brassard}, \citenamefont {Hoyer}, \citenamefont {Mosca},\ and\ \citenamefont
		{Tapp}}]{Brassard2002}%
	\BibitemOpen
	\bibfield  {author} {\bibinfo {author} {\bibfnamefont {G.}~\bibnamefont
			{Brassard}}, \bibinfo {author} {\bibfnamefont {P.}~\bibnamefont {Hoyer}},
		\bibinfo {author} {\bibfnamefont {M.}~\bibnamefont {Mosca}},\ and\ \bibinfo
		{author} {\bibfnamefont {A.}~\bibnamefont {Tapp}},\ }\bibfield  {title}
	{\bibinfo {title} {Quantum amplitude amplification and estimation},\ }\href
	{https://doi.org/10.1090/conm/305/05215} {\bibfield  {journal} {\bibinfo
			{journal} {Contemporary Mathematics}\ }\textbf {\bibinfo {volume} {305}},\
		\bibinfo {pages} {53} (\bibinfo {year} {2002})}\BibitemShut {NoStop}%
	\bibitem [{\citenamefont {Suzuki}\ \emph {et~al.}(2020)\citenamefont {Suzuki},
		\citenamefont {Uno}, \citenamefont {Raymond} \emph {et~al.}}]{Suzuki2020}%
	\BibitemOpen
	\bibfield  {author} {\bibinfo {author} {\bibfnamefont {Y.}~\bibnamefont
			{Suzuki}}, \bibinfo {author} {\bibfnamefont {S.}~\bibnamefont {Uno}},
		\bibinfo {author} {\bibfnamefont {R.}~\bibnamefont {Raymond}}, \emph
		{et~al.},\ }\bibfield  {title} {\bibinfo {title} {Amplitude estimation
			without phase estimation},\ }\href
	{http://dx.doi.org/10.1007/s11128-019-2565-2} {\bibfield  {journal} {\bibinfo
			{journal} {Quantum Information Processing}\ }\textbf {\bibinfo {volume}
			{19}} (\bibinfo {year} {2020})}\BibitemShut {NoStop}%
	\bibitem [{\citenamefont {Grinko}\ \emph {et~al.}(2021)\citenamefont {Grinko},
		\citenamefont {Gacon}, \citenamefont {Zoufal},\ and\ \citenamefont
		{Woerner}}]{Grinko2021}%
	\BibitemOpen
	\bibfield  {author} {\bibinfo {author} {\bibfnamefont {D.}~\bibnamefont
			{Grinko}}, \bibinfo {author} {\bibfnamefont {J.}~\bibnamefont {Gacon}},
		\bibinfo {author} {\bibfnamefont {C.}~\bibnamefont {Zoufal}},\ and\ \bibinfo
		{author} {\bibfnamefont {S.}~\bibnamefont {Woerner}},\ }\bibfield  {title}
	{\bibinfo {title} {Iterative quantum amplitude estimation},\ }\href
	{https://doi.org/10.1038/s41534-021-00379-1} {\bibfield  {journal} {\bibinfo
			{journal} {npj Quantum Information}\ }\textbf {\bibinfo {volume} {7}},\
		\bibinfo {pages} {52} (\bibinfo {year} {2021})}\BibitemShut {NoStop}%
	\bibitem [{\citenamefont {Kay}(1993)}]{Kay1993}%
	\BibitemOpen
	\bibfield  {author} {\bibinfo {author} {\bibfnamefont {S.~M.}\ \bibnamefont
			{Kay}},\ }\href@noop {} {\emph {\bibinfo {title} {Fundamentals of statistical
				processing, volume {I}}}}\ (\bibinfo  {publisher} {Prentice Hall},\ \bibinfo
	{address} {Philadelphia, PA},\ \bibinfo {year} {1993})\BibitemShut {NoStop}%
	\bibitem [{\citenamefont {Grover}(1996)}]{Grover1996}%
	\BibitemOpen
	\bibfield  {author} {\bibinfo {author} {\bibfnamefont {L.~K.}\ \bibnamefont
			{Grover}},\ }\bibfield  {title} {\bibinfo {title} {A fast quantum mechanical
			algorithm for database search},\ }in\ \href
	{https://doi.org/10.1145/237814.237866} {\emph {\bibinfo {booktitle}
			{Proceedings of the Twenty-Eighth Annual ACM Symposium on Theory of
				Computing}}},\ \bibinfo {series and number} {STOC '96}\ (\bibinfo
	{publisher} {Association for Computing Machinery},\ \bibinfo {address} {New
		York, NY, USA},\ \bibinfo {year} {1996})\ p.\ \bibinfo {pages}
	{212–219}\BibitemShut {NoStop}%
	\bibitem [{\citenamefont {Cui}\ \emph {et~al.}(2024)\citenamefont {Cui},
		\citenamefont {de~Brouwer}, \citenamefont {Herbert}, \citenamefont
		{Intallura}, \citenamefont {Kargi}, \citenamefont {Korpas}, \citenamefont
		{Krajenbrink}, \citenamefont {Shoosmith}, \citenamefont {Williams},\ and\
		\citenamefont {Zheng}}]{cui2024quantummontecarlointegration}%
	\BibitemOpen
	\bibfield  {author} {\bibinfo {author} {\bibfnamefont {J.}~\bibnamefont
			{Cui}}, \bibinfo {author} {\bibfnamefont {P.~J.~S.}\ \bibnamefont
			{de~Brouwer}}, \bibinfo {author} {\bibfnamefont {S.}~\bibnamefont {Herbert}},
		\bibinfo {author} {\bibfnamefont {P.}~\bibnamefont {Intallura}}, \bibinfo
		{author} {\bibfnamefont {C.}~\bibnamefont {Kargi}}, \bibinfo {author}
		{\bibfnamefont {G.}~\bibnamefont {Korpas}}, \bibinfo {author} {\bibfnamefont
			{A.}~\bibnamefont {Krajenbrink}}, \bibinfo {author} {\bibfnamefont
			{W.}~\bibnamefont {Shoosmith}}, \bibinfo {author} {\bibfnamefont
			{I.}~\bibnamefont {Williams}},\ and\ \bibinfo {author} {\bibfnamefont
			{B.}~\bibnamefont {Zheng}},\ }\href@noop {} {\bibinfo {title} {Quantum monte
			carlo integration for simulation-based optimisation}} (\bibinfo {year}
	{2024}),\ \Eprint {https://arxiv.org/abs/2410.03926} {arXiv:2410.03926}
	\BibitemShut {NoStop}%
	\bibitem [{\citenamefont
		{Nannicini}(2025)}]{nannicini2025quantumalgorithmsoptimizers}%
	\BibitemOpen
	\bibfield  {author} {\bibinfo {author} {\bibfnamefont {G.}~\bibnamefont
			{Nannicini}},\ }\href@noop {} {\bibinfo {title} {Quantum algorithms for
			optimizers}} (\bibinfo {year} {2025}),\ \Eprint
	{https://arxiv.org/abs/2408.07086} {arXiv:2408.07086} \BibitemShut {NoStop}%
	\bibitem [{\citenamefont {Knill}\ and\ \citenamefont
		{Laflamme}(1998)}]{Knill1998}%
	\BibitemOpen
	\bibfield  {author} {\bibinfo {author} {\bibfnamefont {E.}~\bibnamefont
			{Knill}}\ and\ \bibinfo {author} {\bibfnamefont {R.}~\bibnamefont
			{Laflamme}},\ }\bibfield  {title} {\bibinfo {title} {Power of one bit of
			quantum information},\ }\href {https://doi.org/10.1103/PhysRevLett.81.5672}
	{\bibfield  {journal} {\bibinfo  {journal} {Phys. Rev. Lett.}\ }\textbf
		{\bibinfo {volume} {81}},\ \bibinfo {pages} {5672} (\bibinfo {year}
		{1998})}\BibitemShut {NoStop}%
	\bibitem [{\citenamefont {Plekhanov}\ \emph {et~al.}(2022)\citenamefont
		{Plekhanov}, \citenamefont {Rosenkranz}, \citenamefont {Fiorentini} \emph
		{et~al.}}]{Plekhanov2022variationalquantum}%
	\BibitemOpen
	\bibfield  {author} {\bibinfo {author} {\bibfnamefont {K.}~\bibnamefont
			{Plekhanov}}, \bibinfo {author} {\bibfnamefont {M.}~\bibnamefont
			{Rosenkranz}}, \bibinfo {author} {\bibfnamefont {M.}~\bibnamefont
			{Fiorentini}}, \emph {et~al.},\ }\bibfield  {title} {\bibinfo {title}
		{Variational quantum amplitude estimation},\ }\href
	{https://doi.org/10.22331/q-2022-03-17-670} {\bibfield  {journal} {\bibinfo
			{journal} {{Quantum}}\ }\textbf {\bibinfo {volume} {6}},\ \bibinfo {pages}
		{670} (\bibinfo {year} {2022})}\BibitemShut {NoStop}%
	\bibitem [{\citenamefont {Shyamsundar}(2021)}]{Shyamsundar2021}%
	\BibitemOpen
	\bibfield  {author} {\bibinfo {author} {\bibfnamefont {P.}~\bibnamefont
			{Shyamsundar}},\ }\href@noop {} {\bibinfo {title} {Non-boolean quantum
			amplitude amplification and quantum mean estimation}} (\bibinfo {year}
	{2021}),\ \Eprint {https://arxiv.org/abs/2102.04975} {arXiv:2102.04975}
	\BibitemShut {NoStop}%
	\bibitem [{\citenamefont {Stamatopoulos}\ \emph {et~al.}(2020)\citenamefont
		{Stamatopoulos}, \citenamefont {Egger}, \citenamefont {Sun} \emph
		{et~al.}}]{Stamatopoulos2020optionpricingusing}%
	\BibitemOpen
	\bibfield  {author} {\bibinfo {author} {\bibfnamefont {N.}~\bibnamefont
			{Stamatopoulos}}, \bibinfo {author} {\bibfnamefont {D.~J.}\ \bibnamefont
			{Egger}}, \bibinfo {author} {\bibfnamefont {Y.}~\bibnamefont {Sun}}, \emph
		{et~al.},\ }\bibfield  {title} {\bibinfo {title} {Option {P}ricing using
			{Q}uantum {C}omputers},\ }\href {https://doi.org/10.22331/q-2020-07-06-291}
	{\bibfield  {journal} {\bibinfo  {journal} {{Quantum}}\ }\textbf {\bibinfo
			{volume} {4}},\ \bibinfo {pages} {291} (\bibinfo {year} {2020})}\BibitemShut
	{NoStop}%
	\bibitem [{\citenamefont {Abramowitz}\ and\ \citenamefont
		{Stegun}(1964)}]{AbramowitzStegun}%
	\BibitemOpen
	\bibfield  {author} {\bibinfo {author} {\bibfnamefont {M.}~\bibnamefont
			{Abramowitz}}\ and\ \bibinfo {author} {\bibfnamefont {I.~A.}\ \bibnamefont
			{Stegun}},\ }\href@noop {} {\bibinfo {title} {Handbook of mathematical
			functions with formulas, graphs, and mathematical tables}} (\bibinfo {year}
	{1964})\BibitemShut {NoStop}%
	\bibitem [{\citenamefont {Wada}\ \emph {et~al.}(2024)\citenamefont {Wada},
		\citenamefont {Fukuchi},\ and\ \citenamefont
		{Yamamoto}}]{Wada2024quantumenhancedmean}%
	\BibitemOpen
	\bibfield  {author} {\bibinfo {author} {\bibfnamefont {K.}~\bibnamefont
			{Wada}}, \bibinfo {author} {\bibfnamefont {K.}~\bibnamefont {Fukuchi}},\ and\
		\bibinfo {author} {\bibfnamefont {N.}~\bibnamefont {Yamamoto}},\ }\bibfield
	{title} {\bibinfo {title} {Quantum-enhanced mean value estimation via
			adaptive measurement},\ }\href {https://doi.org/10.22331/q-2024-09-09-1463}
	{\bibfield  {journal} {\bibinfo  {journal} {{Quantum}}\ }\textbf {\bibinfo
			{volume} {8}},\ \bibinfo {pages} {1463} (\bibinfo {year} {2024})}\BibitemShut
	{NoStop}%
	\bibitem [{\citenamefont {Mele}(2024)}]{Mele2024introductiontohaar}%
	\BibitemOpen
	\bibfield  {author} {\bibinfo {author} {\bibfnamefont {A.~A.}\ \bibnamefont
			{Mele}},\ }\bibfield  {title} {\bibinfo {title} {Introduction to {H}aar
			{M}easure {T}ools in {Q}uantum {I}nformation: {A} {B}eginner's {T}utorial},\
	}\href {https://doi.org/10.22331/q-2024-05-08-1340} {\bibfield  {journal}
		{\bibinfo  {journal} {{Quantum}}\ }\textbf {\bibinfo {volume} {8}},\ \bibinfo
		{pages} {1340} (\bibinfo {year} {2024})}\BibitemShut {NoStop}%
	\bibitem [{\citenamefont {Soch}\ \emph {et~al.}(2024)\citenamefont {Soch} \emph
		{et~al.}}]{soch2024statproofbook}%
	\BibitemOpen
	\bibfield  {author} {\bibinfo {author} {\bibfnamefont {J.}~\bibnamefont
			{Soch}} \emph {et~al.},\ }\href {https://zenodo.org/records/14646799}
	{\bibinfo {title} {Statproofbook/statproofbook.github.io: The book of
			statistical proofs}} (\bibinfo {year} {2024})\BibitemShut {NoStop}%
	\bibitem [{\citenamefont {Heinrich}\ \emph {et~al.}(2023)\citenamefont
		{Heinrich}, \citenamefont {Kliesch},\ and\ \citenamefont
		{Roth}}]{heinrich2023randomizedbenchmarkingrandomquantum}%
	\BibitemOpen
	\bibfield  {author} {\bibinfo {author} {\bibfnamefont {M.}~\bibnamefont
			{Heinrich}}, \bibinfo {author} {\bibfnamefont {M.}~\bibnamefont {Kliesch}},\
		and\ \bibinfo {author} {\bibfnamefont {I.}~\bibnamefont {Roth}},\ }\href@noop
	{} {\bibinfo {title} {Randomized benchmarking with random quantum circuits}}
	(\bibinfo {year} {2023}),\ \Eprint {https://arxiv.org/abs/2212.06181}
	{arXiv:2212.06181} \BibitemShut {NoStop}%
	\bibitem [{\citenamefont {Huang}\ \emph
		{et~al.}(2020{\natexlab{b}})\citenamefont {Huang}, \citenamefont {Kueng},\
		and\ \citenamefont {Preskill}}]{Huang2020}%
	\BibitemOpen
	\bibfield  {author} {\bibinfo {author} {\bibfnamefont {H.-Y.}\ \bibnamefont
			{Huang}}, \bibinfo {author} {\bibfnamefont {R.}~\bibnamefont {Kueng}},\ and\
		\bibinfo {author} {\bibfnamefont {J.}~\bibnamefont {Preskill}},\ }\bibfield
	{title} {\bibinfo {title} {Predicting many properties of a quantum system
			from very few measurements},\ }\href
	{https://doi.org/10.1038/s41567-020-0932-7} {\bibfield  {journal} {\bibinfo
			{journal} {Nature Physics}\ }\textbf {\bibinfo {volume} {16}},\ \bibinfo
		{pages} {1050} (\bibinfo {year} {2020}{\natexlab{b}})}\BibitemShut {NoStop}%
	\bibitem [{\citenamefont {Harrow}\ \emph {et~al.}(2009)\citenamefont {Harrow},
		\citenamefont {Hassidim},\ and\ \citenamefont {Lloyd}}]{Harrow2009}%
	\BibitemOpen
	\bibfield  {author} {\bibinfo {author} {\bibfnamefont {A.~W.}\ \bibnamefont
			{Harrow}}, \bibinfo {author} {\bibfnamefont {A.}~\bibnamefont {Hassidim}},\
		and\ \bibinfo {author} {\bibfnamefont {S.}~\bibnamefont {Lloyd}},\ }\bibfield
	{title} {\bibinfo {title} {Quantum algorithm for linear systems of
			equations},\ }\href {https://doi.org/10.1103/PhysRevLett.103.150502}
	{\bibfield  {journal} {\bibinfo  {journal} {Phys. Rev. Lett.}\ }\textbf
		{\bibinfo {volume} {103}},\ \bibinfo {pages} {150502} (\bibinfo {year}
		{2009})}\BibitemShut {NoStop}%
	\bibitem [{\citenamefont {Jerbi}\ \emph
		{et~al.}(2023{\natexlab{b}})\citenamefont {Jerbi}, \citenamefont {Fiderer},
		\citenamefont {Nautrup}, \citenamefont {Kübler}, \citenamefont {Briegel},\
		and\ \citenamefont {Dunjko}}]{Jerbi2023qmlbkm}%
	\BibitemOpen
	\bibfield  {author} {\bibinfo {author} {\bibfnamefont {S.}~\bibnamefont
			{Jerbi}}, \bibinfo {author} {\bibfnamefont {L.~J.}\ \bibnamefont {Fiderer}},
		\bibinfo {author} {\bibfnamefont {H.~P.}\ \bibnamefont {Nautrup}}, \bibinfo
		{author} {\bibfnamefont {J.~M.}\ \bibnamefont {Kübler}}, \bibinfo {author}
		{\bibfnamefont {H.~J.}\ \bibnamefont {Briegel}},\ and\ \bibinfo {author}
		{\bibfnamefont {V.}~\bibnamefont {Dunjko}},\ }\bibfield  {title} {\bibinfo
		{title} {Quantum machine learning beyond kernel methods},\ }\href
	{https://doi.org/10.1038%2Fs41467-023-36159-y} {\bibfield  {journal}
		{\bibinfo  {journal} {Nature Communications}\ }\textbf {\bibinfo {volume}
			{14}} (\bibinfo {year} {2023}{\natexlab{b}})}\BibitemShut {NoStop}%
	\bibitem [{\citenamefont {Oshio}\ \emph {et~al.}(2024)\citenamefont {Oshio},
		\citenamefont {Suzuki}, \citenamefont {Wada}, \citenamefont {Hisanaga},
		\citenamefont {Uno},\ and\ \citenamefont {Yamamoto}}]{Oshio2024}%
	\BibitemOpen
	\bibfield  {author} {\bibinfo {author} {\bibfnamefont {K.}~\bibnamefont
			{Oshio}}, \bibinfo {author} {\bibfnamefont {Y.}~\bibnamefont {Suzuki}},
		\bibinfo {author} {\bibfnamefont {K.}~\bibnamefont {Wada}}, \bibinfo {author}
		{\bibfnamefont {K.}~\bibnamefont {Hisanaga}}, \bibinfo {author}
		{\bibfnamefont {S.}~\bibnamefont {Uno}},\ and\ \bibinfo {author}
		{\bibfnamefont {N.}~\bibnamefont {Yamamoto}},\ }\bibfield  {title} {\bibinfo
		{title} {Adaptive measurement strategy for noisy quantum amplitude estimation
			with variational quantum circuits},\ }\href
	{https://doi.org/10.1103/PhysRevA.110.062423} {\bibfield  {journal} {\bibinfo
			{journal} {Phys. Rev. A}\ }\textbf {\bibinfo {volume} {110}},\ \bibinfo
		{pages} {062423} (\bibinfo {year} {2024})}\BibitemShut {NoStop}%
	\bibitem [{\citenamefont {Motzoi}\ \emph {et~al.}(2011)\citenamefont {Motzoi},
		\citenamefont {Gambetta}, \citenamefont {Merkel} \emph
		{et~al.}}]{Motzoi2011}%
	\BibitemOpen
	\bibfield  {author} {\bibinfo {author} {\bibfnamefont {F.}~\bibnamefont
			{Motzoi}}, \bibinfo {author} {\bibfnamefont {J.~M.}\ \bibnamefont
			{Gambetta}}, \bibinfo {author} {\bibfnamefont {S.~T.}\ \bibnamefont
			{Merkel}}, \emph {et~al.},\ }\bibfield  {title} {\bibinfo {title} {Optimal
			control methods for rapidly time-varying hamiltonians},\ }\href
	{https://doi.org/10.1103/PhysRevA.84.022307} {\bibfield  {journal} {\bibinfo
			{journal} {Phys. Rev. A}\ }\textbf {\bibinfo {volume} {84}},\ \bibinfo
		{pages} {022307} (\bibinfo {year} {2011})}\BibitemShut {NoStop}%
	\bibitem [{\citenamefont {Dalgaard}\ \emph {et~al.}(2020)\citenamefont
		{Dalgaard}, \citenamefont {Motzoi}, \citenamefont {Jensen} \emph
		{et~al.}}]{dalgaard2020hessian}%
	\BibitemOpen
	\bibfield  {author} {\bibinfo {author} {\bibfnamefont {M.}~\bibnamefont
			{Dalgaard}}, \bibinfo {author} {\bibfnamefont {F.}~\bibnamefont {Motzoi}},
		\bibinfo {author} {\bibfnamefont {J.~H.~M.}\ \bibnamefont {Jensen}}, \emph
		{et~al.},\ }\bibfield  {title} {\bibinfo {title} {Hessian-based optimization
			of constrained quantum control},\ }\href
	{https://doi.org/10.1103/PhysRevA.102.042612} {\bibfield  {journal} {\bibinfo
			{journal} {Phys. Rev. A}\ }\textbf {\bibinfo {volume} {102}},\ \bibinfo
		{pages} {042612} (\bibinfo {year} {2020})}\BibitemShut {NoStop}%
	\bibitem [{\citenamefont {Caneva}\ \emph {et~al.}(2011)\citenamefont {Caneva},
		\citenamefont {Calarco},\ and\ \citenamefont {Montangero}}]{Caneva2011}%
	\BibitemOpen
	\bibfield  {author} {\bibinfo {author} {\bibfnamefont {T.}~\bibnamefont
			{Caneva}}, \bibinfo {author} {\bibfnamefont {T.}~\bibnamefont {Calarco}},\
		and\ \bibinfo {author} {\bibfnamefont {S.}~\bibnamefont {Montangero}},\
	}\bibfield  {title} {\bibinfo {title} {Chopped random-basis quantum
			optimization},\ }\href {https://doi.org/10.1103/PhysRevA.84.022326}
	{\bibfield  {journal} {\bibinfo  {journal} {Phys. Rev. A}\ }\textbf {\bibinfo
			{volume} {84}},\ \bibinfo {pages} {022326} (\bibinfo {year}
		{2011})}\BibitemShut {NoStop}%
	\bibitem [{\citenamefont {Nelder}\ and\ \citenamefont
		{Mead}(1965)}]{NelderMead1965}%
	\BibitemOpen
	\bibfield  {author} {\bibinfo {author} {\bibfnamefont {J.~A.}\ \bibnamefont
			{Nelder}}\ and\ \bibinfo {author} {\bibfnamefont {R.}~\bibnamefont {Mead}},\
	}\bibfield  {title} {\bibinfo {title} {{A Simplex Method for Function
				Minimization}},\ }\href {https://doi.org/10.1093/comjnl/7.4.308} {\bibfield
		{journal} {\bibinfo  {journal} {The Computer Journal}\ }\textbf {\bibinfo
			{volume} {7}},\ \bibinfo {pages} {308} (\bibinfo {year} {1965})}\BibitemShut
	{NoStop}%
	\bibitem [{\citenamefont {Ostaszewski}\ \emph {et~al.}(2021)\citenamefont
		{Ostaszewski}, \citenamefont {Grant},\ and\ \citenamefont
		{Benedetti}}]{Ostaszewski2021structure}%
	\BibitemOpen
	\bibfield  {author} {\bibinfo {author} {\bibfnamefont {M.}~\bibnamefont
			{Ostaszewski}}, \bibinfo {author} {\bibfnamefont {E.}~\bibnamefont {Grant}},\
		and\ \bibinfo {author} {\bibfnamefont {M.}~\bibnamefont {Benedetti}},\
	}\bibfield  {title} {\bibinfo {title} {Structure optimization for
			parameterized quantum circuits},\ }\href
	{https://doi.org/10.22331/q-2021-01-28-391} {\bibfield  {journal} {\bibinfo
			{journal} {{Quantum}}\ }\textbf {\bibinfo {volume} {5}},\ \bibinfo {pages}
		{391} (\bibinfo {year} {2021})}\BibitemShut {NoStop}%
	\bibitem [{\citenamefont {Fisher}(1988)}]{online_iris_53}%
	\BibitemOpen
	\bibfield  {author} {\bibinfo {author} {\bibfnamefont {R.~A.}\ \bibnamefont
			{Fisher}},\ }\href {https://doi.org/10.24432/C56C76} {\bibinfo {title}
		{{Iris}}},\ \bibinfo {howpublished} {UCI Machine Learning Repository}
	(\bibinfo {year} {1988})\BibitemShut {NoStop}%
	\bibitem [{\citenamefont {Deng}(2012)}]{deng2012mnist}%
	\BibitemOpen
	\bibfield  {author} {\bibinfo {author} {\bibfnamefont {L.}~\bibnamefont
			{Deng}},\ }\bibfield  {title} {\bibinfo {title} {The mnist database of
			handwritten digit images for machine learning research},\ }\href@noop {}
	{\bibfield  {journal} {\bibinfo  {journal} {IEEE Signal Processing Magazine}\
		}\textbf {\bibinfo {volume} {29}},\ \bibinfo {pages} {141} (\bibinfo {year}
		{2012})}\BibitemShut {NoStop}%
	\bibitem [{\citenamefont {Bishop}(2007)}]{Bishop2007}%
	\BibitemOpen
	\bibfield  {author} {\bibinfo {author} {\bibfnamefont {C.~M.}\ \bibnamefont
			{Bishop}},\ }\href
	{http://www.amazon.com/Pattern-Recognition-Learning-Information-Statistics/dp/0387310738%3FSubscriptionId%3D13CT5CVB80YFWJEPWS02%26tag%3Dws%26linkCode%3Dxm2%26camp%3D2025%26creative%3D165953%26creativeASIN%3D0387310738}
	{\emph {\bibinfo {title} {Pattern Recognition and Machine Learning
				(Information Science and Statistics)}}},\ \bibinfo {edition} {1st}\ ed.\
	(\bibinfo  {publisher} {Springer},\ \bibinfo {year} {2007})\BibitemShut
	{NoStop}%
	\bibitem [{\citenamefont {Wang}(1993)}]{Wang1993}%
	\BibitemOpen
	\bibfield  {author} {\bibinfo {author} {\bibfnamefont {Y.~H.}\ \bibnamefont
			{Wang}},\ }\bibfield  {title} {\bibinfo {title} {On the number of successes
			in independent trials},\ }\href {http://www.jstor.org/stable/24304959}
	{\bibfield  {journal} {\bibinfo  {journal} {Statistica Sinica}\ }\textbf
		{\bibinfo {volume} {3}},\ \bibinfo {pages} {295} (\bibinfo {year}
		{1993})}\BibitemShut {NoStop}%
	\bibitem [{\citenamefont {Bittel}\ \emph {et~al.}(2022)\citenamefont {Bittel},
		\citenamefont {Watty},\ and\ \citenamefont {Kliesch}}]{bittel2022fast}%
	\BibitemOpen
	\bibfield  {author} {\bibinfo {author} {\bibfnamefont {L.}~\bibnamefont
			{Bittel}}, \bibinfo {author} {\bibfnamefont {J.}~\bibnamefont {Watty}},\ and\
		\bibinfo {author} {\bibfnamefont {M.}~\bibnamefont {Kliesch}},\ }\href@noop
	{} {\bibinfo {title} {Fast gradient estimation for variational quantum
			algorithms}} (\bibinfo {year} {2022}),\ \Eprint
	{https://arxiv.org/abs/arXiv:2210.06484} {arXiv:2210.06484} \BibitemShut
	{NoStop}%
	\bibitem [{\citenamefont {Mari}\ \emph {et~al.}(2021)\citenamefont {Mari},
		\citenamefont {Bromley},\ and\ \citenamefont {Killoran}}]{Mari2021}%
	\BibitemOpen
	\bibfield  {author} {\bibinfo {author} {\bibfnamefont {A.}~\bibnamefont
			{Mari}}, \bibinfo {author} {\bibfnamefont {T.~R.}\ \bibnamefont {Bromley}},\
		and\ \bibinfo {author} {\bibfnamefont {N.}~\bibnamefont {Killoran}},\
	}\bibfield  {title} {\bibinfo {title} {Estimating the gradient and
			higher-order derivatives on quantum hardware},\ }\href
	{https://doi.org/10.1103/PhysRevA.103.012405} {\bibfield  {journal} {\bibinfo
			{journal} {Phys. Rev. A}\ }\textbf {\bibinfo {volume} {103}},\ \bibinfo
		{pages} {012405} (\bibinfo {year} {2021})}\BibitemShut {NoStop}%
	\bibitem [{\citenamefont {Baydin}\ \emph {et~al.}(2018)\citenamefont {Baydin},
		\citenamefont {Pearlmutter}, \citenamefont {Radul} \emph
		{et~al.}}]{Baydin2018}%
	\BibitemOpen
	\bibfield  {author} {\bibinfo {author} {\bibfnamefont {A.~G.}\ \bibnamefont
			{Baydin}}, \bibinfo {author} {\bibfnamefont {B.~A.}\ \bibnamefont
			{Pearlmutter}}, \bibinfo {author} {\bibfnamefont {A.~A.}\ \bibnamefont
			{Radul}}, \emph {et~al.},\ }\bibfield  {title} {\bibinfo {title} {Automatic
			differentiation in machine learning: a survey},\ }\href
	{http://jmlr.org/papers/v18/17-468.html} {\bibfield  {journal} {\bibinfo
			{journal} {Journal of Machine Learning Research}\ }\textbf {\bibinfo {volume}
			{18}},\ \bibinfo {pages} {1} (\bibinfo {year} {2018})}\BibitemShut {NoStop}%
	\bibitem [{\citenamefont {Khaneja}\ \emph {et~al.}(2005)\citenamefont
		{Khaneja}, \citenamefont {Reiss}, \citenamefont {Kehlet} \emph
		{et~al.}}]{Khaneja2005}%
	\BibitemOpen
	\bibfield  {author} {\bibinfo {author} {\bibfnamefont {N.}~\bibnamefont
			{Khaneja}}, \bibinfo {author} {\bibfnamefont {T.}~\bibnamefont {Reiss}},
		\bibinfo {author} {\bibfnamefont {C.}~\bibnamefont {Kehlet}}, \emph
		{et~al.},\ }\bibfield  {title} {\bibinfo {title} {Optimal control of coupled
			spin dynamics: design of nmr pulse sequences by gradient ascent algorithms},\
	}\href {https://doi.org/https://doi.org/10.1016/j.jmr.2004.11.004} {\bibfield
		{journal} {\bibinfo  {journal} {Journal of Magnetic Resonance}\ }\textbf
		{\bibinfo {volume} {172}},\ \bibinfo {pages} {296} (\bibinfo {year}
		{2005})}\BibitemShut {NoStop}%
	\bibitem [{\citenamefont {Baydin}\ \emph {et~al.}(2022)\citenamefont {Baydin},
		\citenamefont {Pearlmutter}, \citenamefont {Syme} \emph
		{et~al.}}]{Baydin2022gwb}%
	\BibitemOpen
	\bibfield  {author} {\bibinfo {author} {\bibfnamefont {A.~G.}\ \bibnamefont
			{Baydin}}, \bibinfo {author} {\bibfnamefont {B.~A.}\ \bibnamefont
			{Pearlmutter}}, \bibinfo {author} {\bibfnamefont {D.}~\bibnamefont {Syme}},
		\emph {et~al.},\ }\href@noop {} {\bibinfo {title} {Gradients without
			backpropagation}} (\bibinfo {year} {2022}),\ \Eprint
	{https://arxiv.org/abs/arXiv:2202.08587} {arXiv:2202.08587} \BibitemShut
	{NoStop}%
	\bibitem [{\citenamefont {Belouze}(2022)}]{belouze2022optimization}%
	\BibitemOpen
	\bibfield  {author} {\bibinfo {author} {\bibfnamefont {G.}~\bibnamefont
			{Belouze}},\ }\href@noop {} {\bibinfo {title} {Optimization without
			backpropagation}} (\bibinfo {year} {2022}),\ \Eprint
	{https://arxiv.org/abs/2209.06302} {arXiv:2209.06302} \BibitemShut {NoStop}%
	\bibitem [{\citenamefont {Wu}\ \emph {et~al.}(2014)\citenamefont {Wu},
		\citenamefont {Rossi}, \citenamefont {Kampermann}, \citenamefont {Severini},
		\citenamefont {Kwek}, \citenamefont {Macchiavello},\ and\ \citenamefont
		{Bruß}}]{Wu_2014_random}%
	\BibitemOpen
	\bibfield  {author} {\bibinfo {author} {\bibfnamefont {J.-Y.}\ \bibnamefont
			{Wu}}, \bibinfo {author} {\bibfnamefont {M.}~\bibnamefont {Rossi}}, \bibinfo
		{author} {\bibfnamefont {H.}~\bibnamefont {Kampermann}}, \bibinfo {author}
		{\bibfnamefont {S.}~\bibnamefont {Severini}}, \bibinfo {author}
		{\bibfnamefont {L.~C.}\ \bibnamefont {Kwek}}, \bibinfo {author}
		{\bibfnamefont {C.}~\bibnamefont {Macchiavello}},\ and\ \bibinfo {author}
		{\bibfnamefont {D.}~\bibnamefont {Bruß}},\ }\bibfield  {title} {\bibinfo
		{title} {Randomized graph states and their entanglement properties},\ }\href
	{http://dx.doi.org/10.1103/PhysRevA.89.052335} {\bibfield  {journal}
		{\bibinfo  {journal} {Physical Review A}\ }\textbf {\bibinfo {volume} {89}}
		(\bibinfo {year} {2014})}\BibitemShut {NoStop}%
	\bibitem [{\citenamefont {Machnes}\ \emph {et~al.}(2011)\citenamefont
		{Machnes}, \citenamefont {Sander}, \citenamefont {Glaser} \emph
		{et~al.}}]{Machnes2011}%
	\BibitemOpen
	\bibfield  {author} {\bibinfo {author} {\bibfnamefont {S.}~\bibnamefont
			{Machnes}}, \bibinfo {author} {\bibfnamefont {U.}~\bibnamefont {Sander}},
		\bibinfo {author} {\bibfnamefont {S.~J.}\ \bibnamefont {Glaser}}, \emph
		{et~al.},\ }\bibfield  {title} {\bibinfo {title} {Comparing, optimizing, and
			benchmarking quantum-control algorithms in a unifying programming
			framework},\ }\href {https://doi.org/10.1103/PhysRevA.84.022305} {\bibfield
		{journal} {\bibinfo  {journal} {Phys. Rev. A}\ }\textbf {\bibinfo {volume}
			{84}},\ \bibinfo {pages} {022305} (\bibinfo {year} {2011})}\BibitemShut
	{NoStop}%
	\bibitem [{\citenamefont {Tao}(2012)}]{Tao2012-randommatrix}%
	\BibitemOpen
	\bibfield  {author} {\bibinfo {author} {\bibfnamefont {T.}~\bibnamefont
			{Tao}},\ }\href@noop {} {\emph {\bibinfo {title} {Topics in random matrix
				theory}}},\ Graduate studies in mathematics\ (\bibinfo  {publisher} {American
		Mathematical Society},\ \bibinfo {address} {Providence, RI},\ \bibinfo {year}
	{2012})\BibitemShut {NoStop}%
	\bibitem [{\citenamefont {Bradbury}\ \emph {et~al.}(2018)\citenamefont
		{Bradbury}, \citenamefont {Frostig}, \citenamefont {Hawkins}, \citenamefont
		{Johnson}, \citenamefont {Leary}, \citenamefont {Maclaurin}, \citenamefont
		{Necula}, \citenamefont {Paszke}, \citenamefont {Vander{P}las}, \citenamefont
		{Wanderman-{M}ilne},\ and\ \citenamefont {Zhang}}]{jax2018github}%
	\BibitemOpen
	\bibfield  {author} {\bibinfo {author} {\bibfnamefont {J.}~\bibnamefont
			{Bradbury}}, \bibinfo {author} {\bibfnamefont {R.}~\bibnamefont {Frostig}},
		\bibinfo {author} {\bibfnamefont {P.}~\bibnamefont {Hawkins}}, \bibinfo
		{author} {\bibfnamefont {M.~J.}\ \bibnamefont {Johnson}}, \bibinfo {author}
		{\bibfnamefont {C.}~\bibnamefont {Leary}}, \bibinfo {author} {\bibfnamefont
			{D.}~\bibnamefont {Maclaurin}}, \bibinfo {author} {\bibfnamefont
			{G.}~\bibnamefont {Necula}}, \bibinfo {author} {\bibfnamefont
			{A.}~\bibnamefont {Paszke}}, \bibinfo {author} {\bibfnamefont
			{J.}~\bibnamefont {Vander{P}las}}, \bibinfo {author} {\bibfnamefont
			{S.}~\bibnamefont {Wanderman-{M}ilne}},\ and\ \bibinfo {author}
		{\bibfnamefont {Q.}~\bibnamefont {Zhang}},\ }\href
	{http://github.com/jax-ml/jax} {\bibinfo {title} {{JAX}: composable
			transformations of {P}ython+{N}um{P}y programs}} (\bibinfo {year}
	{2018})\BibitemShut {NoStop}%
	\bibitem [{\citenamefont {Goerz}(2015)}]{Goerz2015}%
	\BibitemOpen
	\bibfield  {author} {\bibinfo {author} {\bibfnamefont {M.~H.}\ \bibnamefont
			{Goerz}},\ }\emph {\bibinfo {title} {Optimizing Robust Quantum Gates in Open
			Quantum Systems}},\ \href@noop {} {Ph.D. thesis},\ \bibinfo  {school}
	{Kassel, Universität Kassel, Fachbereich Mathematik und Naturwissenschaften}
	(\bibinfo {year} {2015})\BibitemShut {NoStop}%
	\bibitem [{\citenamefont {Magann}\ \emph {et~al.}(2021)\citenamefont {Magann},
		\citenamefont {Arenz}, \citenamefont {Grace} \emph {et~al.}}]{Magann2021}%
	\BibitemOpen
	\bibfield  {author} {\bibinfo {author} {\bibfnamefont {A.~B.}\ \bibnamefont
			{Magann}}, \bibinfo {author} {\bibfnamefont {C.}~\bibnamefont {Arenz}},
		\bibinfo {author} {\bibfnamefont {M.~D.}\ \bibnamefont {Grace}}, \emph
		{et~al.},\ }\bibfield  {title} {\bibinfo {title} {From pulses to circuits and
			back again: A quantum optimal control perspective on variational quantum
			algorithms},\ }\href {https://doi.org/10.1103/PRXQuantum.2.010101} {\bibfield
		{journal} {\bibinfo  {journal} {PRX Quantum}\ }\textbf {\bibinfo {volume}
			{2}},\ \bibinfo {pages} {010101} (\bibinfo {year} {2021})}\BibitemShut
	{NoStop}%
	\bibitem [{\citenamefont {Kelly}\ \emph {et~al.}(2014)\citenamefont {Kelly},
		\citenamefont {Barends}, \citenamefont {Campbell} \emph
		{et~al.}}]{Kelly2014}%
	\BibitemOpen
	\bibfield  {author} {\bibinfo {author} {\bibfnamefont {J.}~\bibnamefont
			{Kelly}}, \bibinfo {author} {\bibfnamefont {R.}~\bibnamefont {Barends}},
		\bibinfo {author} {\bibfnamefont {B.}~\bibnamefont {Campbell}}, \emph
		{et~al.},\ }\bibfield  {title} {\bibinfo {title} {Optimal quantum control
			using randomized benchmarking},\ }\href
	{https://doi.org/10.1103/PhysRevLett.112.240504} {\bibfield  {journal}
		{\bibinfo  {journal} {Phys. Rev. Lett.}\ }\textbf {\bibinfo {volume} {112}},\
		\bibinfo {pages} {240504} (\bibinfo {year} {2014})}\BibitemShut {NoStop}%
	\bibitem [{\citenamefont {Chen}\ \emph {et~al.}(2018)\citenamefont {Chen},
		\citenamefont {Rubanova}, \citenamefont {Bettencourt} \emph
		{et~al.}}]{chen2018neural}%
	\BibitemOpen
	\bibfield  {author} {\bibinfo {author} {\bibfnamefont {R.~T.}\ \bibnamefont
			{Chen}}, \bibinfo {author} {\bibfnamefont {Y.}~\bibnamefont {Rubanova}},
		\bibinfo {author} {\bibfnamefont {J.}~\bibnamefont {Bettencourt}}, \emph
		{et~al.},\ }\bibfield  {title} {\bibinfo {title} {Neural ordinary
			differential equations},\ }\href {https://arxiv.org/abs/1806.07366}
	{\bibfield  {journal} {\bibinfo  {journal} {Advances in neural information
				processing systems}\ }\textbf {\bibinfo {volume} {31}} (\bibinfo {year}
		{2018})}\BibitemShut {NoStop}%
	\bibitem [{\citenamefont {Preti}(2025)}]{GPVQuEst}%
	\BibitemOpen
	\bibfield  {author} {\bibinfo {author} {\bibfnamefont {F.}~\bibnamefont
			{Preti}},\ }\href@noop {} {}\bibinfo {howpublished}
	{\url{https://github.com/franz3105/GPVQuEst}} (\bibinfo {year}
	{2025})\BibitemShut {NoStop}%
	\bibitem [{\citenamefont {Araujo}\ \emph {et~al.}(2023)\citenamefont {Araujo},
		\citenamefont {Araújo}, \citenamefont {da~Silva}, \citenamefont {Blank},\
		and\ \citenamefont {da~Silva}}]{QCLib}%
	\BibitemOpen
	\bibfield  {author} {\bibinfo {author} {\bibfnamefont {I.~F.}\ \bibnamefont
			{Araujo}}, \bibinfo {author} {\bibfnamefont {I.~C.~S.}\ \bibnamefont
			{Araújo}}, \bibinfo {author} {\bibfnamefont {L.~D.}\ \bibnamefont
			{da~Silva}}, \bibinfo {author} {\bibfnamefont {C.}~\bibnamefont {Blank}},\
		and\ \bibinfo {author} {\bibfnamefont {A.~J.}\ \bibnamefont {da~Silva}},\
	}\href {https://github.com/qclib/qclib/blob/master} {\bibinfo {title}
		{Quantum computing library}} (\bibinfo {year} {2023})\BibitemShut {NoStop}%
	\bibitem [{\citenamefont {Lam}\ \emph {et~al.}(2015)\citenamefont {Lam},
		\citenamefont {Pitrou},\ and\ \citenamefont {Seibert}}]{lam2015numba}%
	\BibitemOpen
	\bibfield  {author} {\bibinfo {author} {\bibfnamefont {S.~K.}\ \bibnamefont
			{Lam}}, \bibinfo {author} {\bibfnamefont {A.}~\bibnamefont {Pitrou}},\ and\
		\bibinfo {author} {\bibfnamefont {S.}~\bibnamefont {Seibert}},\ }\bibfield
	{title} {\bibinfo {title} {Numba: A llvm-based python jit compiler},\ }in\
	\href {https://dl.acm.org/doi/10.1145/2833157.2833162} {\emph {\bibinfo
			{booktitle} {Proceedings of the Second Workshop on the LLVM Compiler
				Infrastructure in HPC}}}\ (\bibinfo {year} {2015})\ pp.\ \bibinfo {pages}
	{1--6}\BibitemShut {NoStop}%
	\bibitem [{\citenamefont {Jones}(2011)}]{Jones2010}%
	\BibitemOpen
	\bibfield  {author} {\bibinfo {author} {\bibfnamefont {J.~A.}\ \bibnamefont
			{Jones}},\ }\bibfield  {title} {\bibinfo {title} {Quantum computing with
			nmr},\ }\href {https://doi.org/https://doi.org/10.1016/j.pnmrs.2010.11.001}
	{\bibfield  {journal} {\bibinfo  {journal} {Progress in Nuclear Magnetic
				Resonance Spectroscopy}\ }\textbf {\bibinfo {volume} {59}},\ \bibinfo {pages}
		{91} (\bibinfo {year} {2011})}\BibitemShut {NoStop}%
	\bibitem [{\citenamefont {Shor}\ and\ \citenamefont {Jordan}(2008)}]{Shor2008}%
	\BibitemOpen
	\bibfield  {author} {\bibinfo {author} {\bibfnamefont {P.~W.}\ \bibnamefont
			{Shor}}\ and\ \bibinfo {author} {\bibfnamefont {S.~P.}\ \bibnamefont
			{Jordan}},\ }\bibfield  {title} {\bibinfo {title} {Estimating jones
			polynomials is a complete problem for one clean qubit},\ }\href
	{https://dl.acm.org/doi/10.5555/2017011.2017012} {\bibfield  {journal}
		{\bibinfo  {journal} {Quantum Info. Comput.}\ }\textbf {\bibinfo {volume}
			{8}},\ \bibinfo {pages} {681–714} (\bibinfo {year} {2008})}\BibitemShut
	{NoStop}%
	\bibitem [{\citenamefont {Shepherd}(2006)}]{Shepherd2006computation}%
	\BibitemOpen
	\bibfield  {author} {\bibinfo {author} {\bibfnamefont {D.}~\bibnamefont
			{Shepherd}},\ }\href@noop {} {\bibinfo {title} {Computation with unitaries
			and one pure qubit}} (\bibinfo {year} {2006}),\ \Eprint
	{https://arxiv.org/abs/quant-ph/0608132} {arXiv:quant-ph/0608132}
	\BibitemShut {NoStop}%
	\bibitem [{\citenamefont {Poulin}\ \emph {et~al.}(2004)\citenamefont {Poulin},
		\citenamefont {Blume-Kohout}, \citenamefont {Laflamme},\ and\ \citenamefont
		{Ollivier}}]{Poulin2004}%
	\BibitemOpen
	\bibfield  {author} {\bibinfo {author} {\bibfnamefont {D.}~\bibnamefont
			{Poulin}}, \bibinfo {author} {\bibfnamefont {R.}~\bibnamefont
			{Blume-Kohout}}, \bibinfo {author} {\bibfnamefont {R.}~\bibnamefont
			{Laflamme}},\ and\ \bibinfo {author} {\bibfnamefont {H.}~\bibnamefont
			{Ollivier}},\ }\bibfield  {title} {\bibinfo {title} {Exponential speedup with
			a single bit of quantum information: Measuring the average fidelity decay},\
	}\href {https://doi.org/10.1103/PhysRevLett.92.177906} {\bibfield  {journal}
		{\bibinfo  {journal} {Phys. Rev. Lett.}\ }\textbf {\bibinfo {volume} {92}},\
		\bibinfo {pages} {177906} (\bibinfo {year} {2004})}\BibitemShut {NoStop}%
	\bibitem [{\citenamefont {Khairy}\ \emph {et~al.}(2020)\citenamefont {Khairy},
		\citenamefont {Shaydulin}, \citenamefont {Cincio} \emph
		{et~al.}}]{Khairy2020}%
	\BibitemOpen
	\bibfield  {author} {\bibinfo {author} {\bibfnamefont {S.}~\bibnamefont
			{Khairy}}, \bibinfo {author} {\bibfnamefont {R.}~\bibnamefont {Shaydulin}},
		\bibinfo {author} {\bibfnamefont {L.}~\bibnamefont {Cincio}}, \emph
		{et~al.},\ }\bibfield  {title} {\bibinfo {title} {Learning to optimize
			variational quantum circuits to solve combinatorial problems},\ }\href
	{https://doi.org/10.1609/aaai.v34i03.5616} {\bibfield  {journal} {\bibinfo
			{journal} {Proceedings of the {AAAI} Conference on Artificial Intelligence}\
		}\textbf {\bibinfo {volume} {34}},\ \bibinfo {pages} {2367} (\bibinfo {year}
		{2020})}\BibitemShut {NoStop}%
	\bibitem [{\citenamefont {Bravo-Prieto}\ \emph {et~al.}(2023)\citenamefont
		{Bravo-Prieto}, \citenamefont {LaRose}, \citenamefont {Cerezo}, \citenamefont
		{Subasi}, \citenamefont {Cincio},\ and\ \citenamefont
		{Coles}}]{Bravo-Prieto2019}%
	\BibitemOpen
	\bibfield  {author} {\bibinfo {author} {\bibfnamefont {C.}~\bibnamefont
			{Bravo-Prieto}}, \bibinfo {author} {\bibfnamefont {R.}~\bibnamefont
			{LaRose}}, \bibinfo {author} {\bibfnamefont {M.}~\bibnamefont {Cerezo}},
		\bibinfo {author} {\bibfnamefont {Y.}~\bibnamefont {Subasi}}, \bibinfo
		{author} {\bibfnamefont {L.}~\bibnamefont {Cincio}},\ and\ \bibinfo {author}
		{\bibfnamefont {P.~J.}\ \bibnamefont {Coles}},\ }\bibfield  {title} {\bibinfo
		{title} {Variational quantum linear solver},\ }\href
	{https://doi.org/10.22331/q-2023-11-22-1188} {\bibfield  {journal} {\bibinfo
			{journal} {Quantum}\ }\textbf {\bibinfo {volume} {7}},\ \bibinfo {pages}
		{1188} (\bibinfo {year} {2023})}\BibitemShut {NoStop}%
	\bibitem [{\citenamefont {Morimae}\ \emph {et~al.}(2014)\citenamefont
		{Morimae}, \citenamefont {Fujii},\ and\ \citenamefont
		{Fitzsimons}}]{Morimae2014}%
	\BibitemOpen
	\bibfield  {author} {\bibinfo {author} {\bibfnamefont {T.}~\bibnamefont
			{Morimae}}, \bibinfo {author} {\bibfnamefont {K.}~\bibnamefont {Fujii}},\
		and\ \bibinfo {author} {\bibfnamefont {J.~F.}\ \bibnamefont {Fitzsimons}},\
	}\bibfield  {title} {\bibinfo {title} {Hardness of classically simulating the
			one-clean-qubit model},\ }\href
	{http://dx.doi.org/10.1103/PhysRevLett.112.130502} {\bibfield  {journal}
		{\bibinfo  {journal} {Physical Review Letters}\ }\textbf {\bibinfo {volume}
			{112}} (\bibinfo {year} {2014})}\BibitemShut {NoStop}%
	\bibitem [{\citenamefont {Aharonov}\ \emph {et~al.}(2006)\citenamefont
		{Aharonov}, \citenamefont {Jones},\ and\ \citenamefont
		{Landau}}]{Aharanov2006}%
	\BibitemOpen
	\bibfield  {author} {\bibinfo {author} {\bibfnamefont {D.}~\bibnamefont
			{Aharonov}}, \bibinfo {author} {\bibfnamefont {V.}~\bibnamefont {Jones}},\
		and\ \bibinfo {author} {\bibfnamefont {Z.}~\bibnamefont {Landau}},\
	}\href@noop {} {\bibinfo {title} {A polynomial quantum algorithm for
			approximating the jones polynomial}} (\bibinfo {year} {2006}),\ \Eprint
	{https://arxiv.org/abs/quant-ph/0511096} {arXiv:quant-ph/0511096}
	\BibitemShut {NoStop}%
	\bibitem [{\citenamefont {Jensen}(1906)}]{Jensen1906}%
	\BibitemOpen
	\bibfield  {author} {\bibinfo {author} {\bibfnamefont {J.~L. W.~V.}\
			\bibnamefont {Jensen}},\ }\bibfield  {title} {\bibinfo {title} {{Sur les
				fonctions convexes et les inégalités entre les valeurs moyennes}},\ }\href
	{https://doi.org/10.1007/BF02418571} {\bibfield  {journal} {\bibinfo
			{journal} {Acta Mathematica}\ }\textbf {\bibinfo {volume} {30}},\ \bibinfo
		{pages} {175 } (\bibinfo {year} {1906})}\BibitemShut {NoStop}%
	\bibitem [{\citenamefont {Stam}(1959)}]{Stam1959}%
	\BibitemOpen
	\bibfield  {author} {\bibinfo {author} {\bibfnamefont {A.}~\bibnamefont
			{Stam}},\ }\bibfield  {title} {\bibinfo {title} {Some inequalities satisfied
			by the quantities of information of fisher and shannon},\ }\href
	{https://doi.org/https://doi.org/10.1016/S0019-9958(59)90348-1} {\bibfield
		{journal} {\bibinfo  {journal} {Information and Control}\ }\textbf {\bibinfo
			{volume} {2}},\ \bibinfo {pages} {101} (\bibinfo {year} {1959})}\BibitemShut
	{NoStop}%
	\bibitem [{\citenamefont {Bobkov}\ \emph {et~al.}(2014)\citenamefont {Bobkov},
		\citenamefont {Chistyakov},\ and\ \citenamefont {G{\"o}tze}}]{Bobkov2014}%
	\BibitemOpen
	\bibfield  {author} {\bibinfo {author} {\bibfnamefont {S.~G.}\ \bibnamefont
			{Bobkov}}, \bibinfo {author} {\bibfnamefont {G.~P.}\ \bibnamefont
			{Chistyakov}},\ and\ \bibinfo {author} {\bibfnamefont {F.}~\bibnamefont
			{G{\"o}tze}},\ }\bibfield  {title} {\bibinfo {title} {Fisher information and
			the central limit theorem},\ }\href
	{https://doi.org/10.1007/s00440-013-0500-5} {\bibfield  {journal} {\bibinfo
			{journal} {Probability Theory and Related Fields}\ }\textbf {\bibinfo
			{volume} {159}},\ \bibinfo {pages} {1} (\bibinfo {year} {2014})}\BibitemShut
	{NoStop}%
	\bibitem [{\citenamefont {Arrazola}\ \emph {et~al.}(2019)\citenamefont
		{Arrazola}, \citenamefont {Bromley}, \citenamefont {Izaac}, \citenamefont
		{Myers}, \citenamefont {Brádler},\ and\ \citenamefont
		{Killoran}}]{Arrazola2019}%
	\BibitemOpen
	\bibfield  {author} {\bibinfo {author} {\bibfnamefont {J.~M.}\ \bibnamefont
			{Arrazola}}, \bibinfo {author} {\bibfnamefont {T.~R.}\ \bibnamefont
			{Bromley}}, \bibinfo {author} {\bibfnamefont {J.}~\bibnamefont {Izaac}},
		\bibinfo {author} {\bibfnamefont {C.~R.}\ \bibnamefont {Myers}}, \bibinfo
		{author} {\bibfnamefont {K.}~\bibnamefont {Brádler}},\ and\ \bibinfo
		{author} {\bibfnamefont {N.}~\bibnamefont {Killoran}},\ }\bibfield  {title}
	{\bibinfo {title} {Machine learning method for state preparation and gate
			synthesis on photonic quantum computers},\ }\href
	{https://doi.org/10.1088/2058-9565/aaf59e} {\bibfield  {journal} {\bibinfo
			{journal} {Quantum Science and Technology}\ }\textbf {\bibinfo {volume}
			{4}},\ \bibinfo {pages} {024004} (\bibinfo {year} {2019})}\BibitemShut
	{NoStop}%
	\bibitem [{\citenamefont {Liu}\ \emph {et~al.}(2022)\citenamefont {Liu},
		\citenamefont {V.~Romero}, \citenamefont {Oregi}, \citenamefont {Osaba},
		\citenamefont {Villar-Rodriguez},\ and\ \citenamefont {Ban}}]{RLiu2022}%
	\BibitemOpen
	\bibfield  {author} {\bibinfo {author} {\bibfnamefont {R.}~\bibnamefont
			{Liu}}, \bibinfo {author} {\bibfnamefont {S.}~\bibnamefont {V.~Romero}},
		\bibinfo {author} {\bibfnamefont {I.}~\bibnamefont {Oregi}}, \bibinfo
		{author} {\bibfnamefont {E.}~\bibnamefont {Osaba}}, \bibinfo {author}
		{\bibfnamefont {E.}~\bibnamefont {Villar-Rodriguez}},\ and\ \bibinfo {author}
		{\bibfnamefont {Y.}~\bibnamefont {Ban}},\ }\bibfield  {title} {\bibinfo
		{title} {Digital quantum simulation and circuit learning for the generation
			of coherent states},\ }\href {https://www.mdpi.com/1099-4300/24/11/1529}
	{\bibfield  {journal} {\bibinfo  {journal} {Entropy}\ }\textbf {\bibinfo
			{volume} {24}} (\bibinfo {year} {2022})}\BibitemShut {NoStop}%
	\bibitem [{\citenamefont {Potters}\ and\ \citenamefont
		{Bouchaud}(2020)}]{Potters_Bouchaud_2020}%
	\BibitemOpen
	\bibfield  {author} {\bibinfo {author} {\bibfnamefont {M.}~\bibnamefont
			{Potters}}\ and\ \bibinfo {author} {\bibfnamefont {J.-P.}\ \bibnamefont
			{Bouchaud}},\ }\href@noop {} {\emph {\bibinfo {title} {A First Course in
				Random Matrix Theory: for Physicists, Engineers and Data Scientists}}}\
	(\bibinfo  {publisher} {Cambridge University Press},\ \bibinfo {year}
	{2020})\BibitemShut {NoStop}%
	\bibitem [{\citenamefont {Wigner}(1958)}]{Wigner1958semicircle}%
	\BibitemOpen
	\bibfield  {author} {\bibinfo {author} {\bibfnamefont {E.~P.}\ \bibnamefont
			{Wigner}},\ }\bibfield  {title} {\bibinfo {title} {On the distribution of the
			roots of certain symmetric matrices},\ }\href
	{http://www.jstor.org/stable/1970008} {\bibfield  {journal} {\bibinfo
			{journal} {Annals of Mathematics}\ }\textbf {\bibinfo {volume} {67}},\
		\bibinfo {pages} {325} (\bibinfo {year} {1958})}\BibitemShut {NoStop}%
\end{thebibliography}
\end{document}